\newtheorem{theorem}{Theorem}
\newtheorem{assumption}{Assumption}
\newtheorem{corollary}{Corollary}[theorem]
\newtheorem{lemma}[theorem]{Lemma}
\newtheorem{definition}{Definition}
\newtheorem{remark}{Remark}
\newcommand*{\tm}{\textcolor{teal}}
\DeclareMathOperator{\rank}{rank}  
\DeclareMathOperator{\bdiag}{blkdiag}  
\DeclareMathOperator{\vecop}{vec}
\DeclareMathOperator{\col}{col}
\newcommand{\RR}{\mathbb{R}}
\newcommand{\CC}{\mathbb{C}}
\newcommand{\ZZ}{\mathbb{Z}}
\newcommand{\ZZn}{\mathbb{Z}_{\geq 0}}
\newcommand{\ZZp}{\mathbb{Z}_{> 0}}
\newcommand{\ie}{\textit{i.e.}}
\newcommand{\eg}{\textit{e.g.}}
\newcommand{\image}{\text{Im}}
\newcommand{\Pir}{\Pi_r}     
\newcommand{\Gar}{\Gamma_r}  
\def\BibTeX{{\rm B\kern-.05em{\sc i\kern-.025em b}\kern-.08em
    T\kern-.1667em\lower.7ex\hbox{E}\kern-.125emX}}
\newcommand{\GS}[1]{\textcolor{orange}{#1}}
\begin{document}
\title{One Equation to Rule Them All---Part II: Direct Data-Driven Reduction and Regulation}
\author{Junyu Mao, \IEEEmembership{Student Member, IEEE}, Emyr Williams, \IEEEmembership{Student Member, IEEE}, 
Thulasi Mylvaganam, \IEEEmembership{Senior Member, IEEE}, and Giordano Scarciotti, \IEEEmembership{Senior Member, IEEE}
\thanks{J. Mao, E. Williams, and G. Scarciotti are with the Department of Electrical and Electronic Engineering, Imperial College London, SW7 2AZ London, U.K. T. Mylvaganam and E. Williams are with the Department of Aeronautics, Imperial College London, SW7 2AZ London, U.K.
(e-mail:  junyu.mao18@ic.ac.uk; emyr.williams18@ic.ac.uk; t.mylvaganam@ic.ac.uk; g.scarciotti@ic.ac.uk). }}

\maketitle

\begin{abstract}
The Sylvester equation underpins a wide spectrum of control synthesis and systems analysis tools associated with cascade interconnections. In the preceding Part I \cite{mao2025partOne} of this article, it was shown that such an equation can be reformulated using data, enabling the production of a collection of data-driven stabilisation procedures. In this second part of the article, we continue to develop the framework established in Part I to solve two important control-theoretic problems: model order reduction and output regulation. For the model order reduction problem we provide a solution from input-state measurements, from input-output measurements, and we study the effect of the noise. For the output regulation problem, we provide data-driven solutions for the static and dynamic feedback problem. The proposed designs are illustrated by means of examples. 
\end{abstract}

\begin{IEEEkeywords}
Control design, data-driven control, learning systems, linear matrix inequalities, robust control, output regulation, model order reduction
\end{IEEEkeywords}

\section{Introduction}
\label{sec:introduction}
Modern engineering systems consist of a massive number of interconnected subsystems, \eg, power systems \cite{farhangi2009path}, robotic systems \cite{rubenstein2014programmable}, automotive vehicles \cite{koscher2010experimental}, and smart cities \cite{zanella2014internet}. To address classes of data-driven design problems regarding these systems, in Part I \cite{mao2025partOne} of this article, an interconnection-based framework has been proposed. The unifying perspective provided by this framework is to decompose the complex tasks into interconnections of two or multiple dynamical subsystems, based on which a systematic treatment of these tasks using a shared methodology has been demonstrated.

In Part II of this article, we continue to develop
this ``interconnection'' perspective formulating two additional (data-driven) control problems---model order reduction and output regulation---and show how these can be tackled by this methodological framework. 
In the following, we review the literature related to these two problems. While a high volume of research exists for both problems, we restrict the reported literature to works that are either classic, related to the Sylvester equation, or data-driven, for reasons of space.

{\small \sf \color{nblue} \textit{Model Order Reduction:}} Finding a low-order description of a high-order system while preserving its dominant behaviour and key properties is a well-established pathway for mitigating the ``curse of dimensionality'' encountered in a growing number of engineering applications. To this end, model order reduction techniques have been extensively studied, initially from a model-based perspective, with notable examples including the Hankel-norm approximation \cite{adamjan1971analytic}, balanced truncation \cite{moore1981principal}, and Krylov methods (also referred to as \textit{moment matching}) \cite{kimura1986positive, byrnes1995complete, antoulas1990solution, astolfi2010model}. We refer the reader to \cite{antoulas2005approximation} for an extensive overview of model-based techniques and to \cite{scarciotti2024interconnection} for a detailed survey of moment matching approaches which are based on the solutions of Sylvester equations. Model order reduction has also been approached extensively from a data-driven perspective; see, \eg, proper orthogonal decomposition \cite{kunisch1999control}, dynamic mode decomposition \cite{schmid2010dynamic}, data-driven balanced truncation \cite{willcox2002balanced}, and the Loewner framework (mainly based on frequency-domain data) \cite{mayo2007framework}. A further class of approaches is provided by time-domain moment matching \cite{scarciotti2017data, carnevale2021data, mao2024data, burohman2024data}, to which our contribution belongs. However, with respect to the literature, our method provides a noise analysis and/or differ in the experimental settings.

{\small \sf \color{nblue} \textit{Output Regulation:}} Enforcing the system to track a reference signal while rejecting disturbances is a core objective in a majority of control applications. A classical formulation of this problem, called \textit{output regulation problem} has been thoroughly investigated for linear systems in early works, \eg, \cite{francis1975internal, francis1976internal, davison1976robust}, where the well-known \textit{internal model principle} is shown to solve the regulation problem robustly. The core of the solution of this problem when using the static feedback relies on the \textit{regulator equations}, where a Sylvester equation plays a fundamental role, see, \eg{}, \cite{francis1977linear, isidori1990Output}. For a comprehensive collection of the classical model-based results, we direct the reader to \cite{huang2004nonlinear}.
Recently, multiple data-driven results have been established along alternative research lines, \eg{}, adaptive dynamic programming (\ie, value iteration and policy iteration) \cite{gao2016adaptive, lin2024data}, online gain tuning \cite{chen2023data}, and direct data-driven methods \cite{trentelman2021informativity, zhai2024data, hu2025data, liu2025data}. Our results for output regulation lie in the latter group, but we approach the problem from a novel interconnection angle that, to the best of the authors' knowledge, none of the literature provides.

This manuscript constitutes Part II of a two‑part article; in Part I \cite{mao2025partOne} we introduced a framework for directly computing the solution of the Sylvester equation from data samples and we studied its application to the problem of data-driven cascade stabilisation. The central idea of the framework is to decompose complex tasks into interconnections of two (or multiple) dynamical systems, offering a unifying perspective that departs from existing literature. For instance, while specific problems such as data-driven output regulation or data-driven model order reduction have been studied, in some instances, in the literature (as mentioned above), our approach is embedded in a broader theoretical context that enables a systematic treatment of multiple tasks using a shared methodology. A second key contribution developed in Part I lies in a comprehensive noise analysis, which is made possible precisely due to the interconnection-based formulation. Our framework enables the propagation and quantification of noise effects through multiple interconnected components, allowing us to derive explicit performance bounds on the final closed-loop systems directly from noise characteristics in the input data. This capability distinguishes our method from traditional two-step approaches, such as system identification followed by controller design, where such end-to-end noise guarantees are generally unavailable. The ability to track noise through the system interconnections, particularly in tasks like data-driven model order reduction and output regulation, represents a significant advantage and provides a further justification for the choice of a direct data-driven approach over two-step approaches. These results underpin the further developments established in this article. 

{\small \sf \color{nblue} \textbf{Contributions.}} The main contributions of this manuscript are summarised as follows.
\begin{itemize}
    \item[(I)] We leverage the framework introduced in Part I to determine the ``moments'' of the system via input-state and input-output data, respectively. The computation of moments underpins the construction of reduced-order models that interpolate the transfer function of the original system at prescribed complex interpolation points. 
    \item[(II)] We also provide a bound on the mismatch between the reduced-order model and the original system in the presence of measurement and process noise. 
    \item[(III)]  Building on the cascade stabilisation results established in Section IV of Part I, in Section~\ref{sec:outputRegulation} of this part we provide a solution of the \textit{dynamic} state-feedback output regulation problem in a data-driven setting.
    \item[(IV)] We show that our solution can be trivially extended to solving the dynamic \textit{error}-feedback problem, as formulated in Algorithm~\ref{alg:dataDrivenRobustRegulation} and demonstrated in Section~\ref{sec:exampleRegulation}.
    \item[(V)] We also show that building on the Sylvester equation results established in Section III of Part I, in Appendix~\ref{sec:regulationFullInfo} we provide a solution to the \textit{static} state-feedback output regulation problem, in both the cases in which the matrix of the exosystem is assumed to be known or unknown.
\end{itemize}

{\small \sf \color{nblue} \textbf{Organisation.}} Section~\ref{sec:preliminaries} recalls the interconnection-based framework and some instrumental results established in Part I. Then, these are exploited to develop direct data-driven approaches to the problems of model order reduction (Section~\ref{sec:modelOrderReduction}) and output regulation (Section~\ref{sec:outputRegulation}), with each section concluded with an illustrative example. Section~\ref{sec:conclusion} concludes the article by highlighting several future research directions. In Appendix~\ref{sec:extendedConfig}, we present an auxiliary result on an extended setting of the framework, while in Appendix~\ref{sec:regulationFullInfo} we solve the \textit{static} feedback output regulation problem in our framework, and we notice that the result coincides with what has been found in \cite{trentelman2021informativity}.

{\small \sf \color{nblue} \textbf{Notation.}} We use standard notation. $\RR$, $\CC$ and $\ZZ$ denote the sets of real numbers, complex numbers and integer numbers. $\ZZn$ ($\ZZp$) denotes the set of nonnegative (positive) integers. The symbols $I$ and $\textbf{0}$ denote the identity matrix and the zero matrix, respectively, whose dimensions can be inferred from the context. $A^\top$ and  $\rank (A)$ indicate the transpose and the rank of any matrix $A$, respectively. The set of eigenvalues (singular values) of a matrix $A$ is denoted by $\lambda(A)$ ($\sigma(A)$). 
Given a matrix $A$ of full row rank, $A^\dagger$ represents the right inverse such that $AA^\dagger = I$. 
Given a symmetric matrix $A \in \mathbb{R}^{n \times n}$, $A \succ \textbf{0}$ ($A \prec \textbf{0}$) denotes that $A$ is positive- (negative-) definite. Given a matrix $A \in \mathbb{R}^{n \times m}$, the operator $\vecop(A)$ indicates the vectorization of $A$, which is the $nm \times 1$ vector obtained by stacking the columns of the matrix $A$ one on top of the other. $\|A\|_2$, $\|A\|_F$ and $\|A\|_\infty$ denote the spectral, Frobenius and infinity norms of matrix $A$, respectively. $\image(A)$ denotes the image of any matrix $A$. Given matrices  $X_1, \cdots, X_n$ (with the same number of columns), $\text{col}(X_1, \cdots, X_n)$ denotes their vertical concatenation. Given a vector $x$, the symbol $\|x\|_2$ ($\|x\|_F$) denotes its Euclidean (Frobenius) norm. 
The symbol $\otimes$ indicates the Kronecker product. The symbol $\iota$ denotes the imaginary unit. \\
We use capital versions of lower case letters to indicate the corresponding data matrices. For example, given a signal $x: \ZZn \to \RR^{n}$ and a positive integer $T \in \ZZp$, we define 
$$
\begin{array}{rl}
X_{-} &\!\!\!\!\!:=  \begin{bmatrix} 
    x(0) & x(1) & \cdots & x(T-1)
\end{bmatrix},\\[2mm]
X_{+} &\!\!\!\!\!:=  \begin{bmatrix} 
    x(1) & x(2) & \cdots & x(T)
\end{bmatrix}.
\end{array}
$$

\section{Preliminaries} \label{sec:preliminaries}

In this section, we recall, from Part I \cite{mao2025partOne}, the data-driven framework to solve the Sylvester equation directly from (noise-free or noisy) system samples, as well as the methodology developed therein to solve the data-driven cascade stabilisation problem. 

\subsection{Cascade Interconnection and Sylvester Equation} \label{sec:PartI-framework}
Consider the cascade interconnection of two discrete-time linear time-invariant systems described by
\begin{equation} \label{Eq:SigmaSigma1}
\mathbf{\Sigma_1}:\,\,\begin{cases}\,\begin{aligned}
x_1(k+1) &= \mathbf{A_1} x_1(k) + \mathbf{B_1} u_1(k) \\
y_1(k) &= \mathbf{C_1} x_1(k) 
\end{aligned}\end{cases} 
\end{equation}
and
\begin{equation}
\label{Eq:SigmaSigma2}
\mathbf{\Sigma_2}:\,\,\begin{cases}\,\begin{aligned}
x_2(k+1) &= \mathbf{A_2} x_2(k) + \mathbf{B_2} u_2(k) \\
y_2(k) &= \mathbf{C_2} x_2(k) 
\end{aligned}\end{cases} 
\end{equation}
with $x_1(k) \in \RR^{n_1}$, $u_1(k) \in \RR^{m_1}$, $y_1(k) \in \RR^{p_1}$,
$x_2(k) \in \RR^{n_2}$, $u_2(k) \in \RR^{m_2}$, $y_2(k) \in \RR^{p_2}$ and $\mathbf{A}_1, \mathbf{B}_1, \mathbf{C}_1, \mathbf{A}_2, \mathbf{B}_2$, and $\mathbf{C}_2$ of proper dimensions, in which $\mathbf{\Sigma_1}$ drives $\mathbf{\Sigma_2}$ via $u_2(k) = y_1(k)$. Throughout this article, we use the tuples $\mathbf{\Sigma_1} := (\mathbf{A_1}, \mathbf{B_1}, \mathbf{C_1})$ and $\mathbf{\Sigma_2} := (\mathbf{A_2}, \mathbf{B_2}, \mathbf{C_2})$ to represent subsystems \eqref{Eq:SigmaSigma1} and \eqref{Eq:SigmaSigma2}, and $\mathbf{\Sigma_1} \rightarrow \mathbf{\Sigma_2}$ to represent the cascade where $\mathbf{\Sigma_1}$ drives $\mathbf{\Sigma_2}$. 

It is well known, see \eg \cite{simpson-porco-arxiv}, that the so-called Sylvester equation
\begin{equation} \label{eq:SylEqGeneric}
    \mathbf{A_2} \mathbf{\Theta} -  \mathbf{\Theta} \mathbf{A_1} = - \mathbf{B_2} \mathbf{C_1},
\end{equation}
plays an important role in characterising the dynamical relation between $\mathbf{\Sigma}_1$ and $\mathbf{\Sigma}_2$. In particular, the solution $\mathbf{\Theta} \in \mathbb{R}^{n_2\times n_1}$ characterises the \textit{invariant subspace}
\begin{equation} \label{eq:invariant-subspace}
    \mathcal{M} =\{(x_1,x_2) \in \RR^{n_1 + n_2}: x_2 = \mathbf{\Theta} x_1\}, 
\end{equation}
of the cascade system $\mathbf{\Sigma_1} \rightarrow \mathbf{\Sigma_2}$. Thus, one can use $\mathbf{\Theta}$ to define a
coordinate transformation for $\mathbf{\Sigma}_2$, namely $\zeta := x_2 - \mathbf{\Theta} x_1$, under which the dynamics of $\mathbf{\Sigma}_2$ are transformed into 
\begin{equation} \label{eq:zeta-dynamics}
    \zeta(k+1) = \mathbf{A_2} \zeta(k) - \mathbf{\Theta} \mathbf{B_1} u_1(k).
\end{equation}
At this point the literature studies two different cases, depending on whether $\mathbf{\Sigma_1}$ is considered the ``main system'' (\eg, a system to be controlled) while $\mathbf{\Sigma_2}$ is considered an ``auxiliary system'' (\eg, an internal model), or \textit{vice versa}. In the first case, which appears in problems such as cascade stabilisation, dynamic output regulation, and the so-called swapped moment matching method, $\mathbf{\Sigma_1}$ represents the (or a) ``main system'' with $u_1 \not \equiv 0$ \cite{simpson-porco-arxiv}. In the second case, which appears for instance in static output regulation and direct moment matching, $\mathbf{\Sigma_1}$ represents an ``auxiliary system'', typically an autonomous signal generator, and thus $u_1 \equiv 0$ \cite{simpson-porco-arxiv}. In Part I, we covered the cascade stabilisation problem \cite{mao2025partOne}. In this Part II, we consider the direct moment matching problem in Section~\ref{sec:modelOrderReduction} and the dynamic output regulation problem in Section~\ref{sec:outputRegulation}. Furthermore, we consider the static output regulation problem in Appendix~\ref{sec:regulationFullInfo}.

\subsection{Solving the Sylvester Equation from Data} 
In Part I \cite{mao2025partOne}, a configuration of practical interest has been extensively studied under both the noise-free and noisy scenarios, namely when $\mathbf{\Sigma_1}$ is known while $\mathbf{\Sigma_2}$ is unknown. The knowledge of $\mathbf{\Sigma_1}$ is without loss of generality as it corresponds to cases in which either (i) the proposed data-driven procedure is sequential, or (ii) $\mathbf{\Sigma_1}$ is designed by the user rather than being a physical system. In this article, the model order reduction problem corresponds to case (ii) while the dynamic output regulation problem corresponds to case (i). In the following, we recall the result established in Section III of Part I to solve the Sylvester equation from the data of $\mathbf{\Sigma_2}$. 

\begin{lemma} (see \cite[Lemma 1]{mao2025partOne}) \label{lemma:computePI}
Consider the cascade $\mathbf{\Sigma_1} \rightarrow \mathbf{\Sigma_2}$, where $\mathbf{\Sigma_1}$ is known and $\mathbf{\Sigma_2}$ is unknown, and the associated Sylvester equation \eqref{eq:SylEqGeneric}.
Suppose that the available data matrices are such that
\begin{equation} \label{eq:rankXU}
    \rank \left(\begin{bmatrix}
        X_{2, -} \\ U_{2, -}
    \end{bmatrix}\right)
     = n_2 + m_2,
\end{equation} 
and that $\lambda(\mathbf{A_1}) \cap \lambda(\mathbf{A_2}) = \emptyset$. Then any matrix $G_\Theta \in \RR^{T \times n_1}$ that satisfies
\begin{subequations} \label{eq:dataRepPi} 
  \begin{empheq}[left=\empheqlbrace]{align}
    X_{2, +}  G_\Theta &= X_{2, -} G_\Theta \mathbf{A_1}   \label{eq:dataRepPi1} \\
    U_{2, -}  G_\Theta &= \mathbf{C_1}    \label{eq:dataRepPi2}
  \end{empheq}
\end{subequations}
is such that
\begin{equation} \label{eq:parameterizedPi}
    \mathbf{\Theta} := X_{2, -} G_\Theta
\end{equation}
is the solution of \eqref{eq:SylEqGeneric}.
Conversely, the solution of \eqref{eq:SylEqGeneric} can be written as in \eqref{eq:parameterizedPi}, with $G_\Theta$ solution of \eqref{eq:dataRepPi}. \end{lemma}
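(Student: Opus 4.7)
The plan is to prove both directions by exploiting the data identity
\[ X_{2,+} = \mathbf{A_2} X_{2,-} + \mathbf{B_2} U_{2,-}, \]
which automatically holds since the columns of the data matrices are consecutive samples of trajectories of $\mathbf{\Sigma_2}$.

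For the forward implication, I would suppose that $G_\Theta$ satisfies \eqref{eq:dataRepPi} and define $\mathbf{\Theta} := X_{2,-} G_\Theta$. Right-multiplying the data identity by $G_\Theta$ and invoking \eqref{eq:dataRepPi2} yields
\[ X_{2,+} G_\Theta = \mathbf{A_2} X_{2,-} G_\Theta + \mathbf{B_2} U_{2,-} G_\Theta = \mathbf{A_2} \mathbf{\Theta} + \mathbf{B_2} \mathbf{C_1}. \]
Combining this with \eqref{eq:dataRepPi1}, which asserts $X_{2,+} G_\Theta = X_{2,-} G_\Theta \mathbf{A_1} = \mathbf{\Theta} \mathbf{A_1}$, immediately delivers the Sylvester equation \eqref{eq:SylEqGeneric}.

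For the converse, I would start from the (unique, by the spectral disjointness assumption) solution $\mathbf{\Theta}$ of \eqref{eq:SylEqGeneric}. The rank condition \eqref{eq:rankXU} guarantees that $\col(X_{2,-}, U_{2,-})$ has full row rank, hence admits a right inverse. I would then pick any $G_\Theta$ solving the linear system
\[ \begin{bmatrix} X_{2,-} \\ U_{2,-} \end{bmatrix} G_\Theta = \begin{bmatrix} \mathbf{\Theta} \\ \mathbf{C_1} \end{bmatrix}, \]
which is consistent precisely because of this rank condition. The choice yields \eqref{eq:dataRepPi2} and $X_{2,-} G_\Theta = \mathbf{\Theta}$ by construction. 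Verifying \eqref{eq:dataRepPi1} then amounts to invoking the data identity once more: $X_{2,+} G_\Theta = \mathbf{A_2} \mathbf{\Theta} + \mathbf{B_2} \mathbf{C_1} = \mathbf{\Theta} \mathbf{A_1} = X_{2,-} G_\Theta \mathbf{A_1}$, where the middle equality is the Sylvester equation itself.

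The genuinely substantive step --- and the conceptual core of the lemma --- is recognising that the rank condition on the data is exactly what allows one to ``invert'' the stacked data matrix and recover a $G_\Theta$ realising any prescribed pair $(\mathbf{\Theta}, \mathbf{C_1})$. The spectral assumption $\lambda(\mathbf{A_1}) \cap \lambda(\mathbf{A_2}) = \emptyset$ plays a separate role: it is used in the converse direction only to ensure the existence (and uniqueness) of $\mathbf{\Theta}$. Once these two ingredients are in place, the rest reduces to short algebraic manipulations, so I do not anticipate any significant technical obstacle.
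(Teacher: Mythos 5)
Your proof is correct and follows essentially the same route the paper takes for this and its analogous results (the lemma is recalled from Part I, but the identical pattern appears in the proof of Theorem~\ref{thm:staticinformativity}): combine the noise-free data identity $X_{2,+} = \mathbf{A_2}X_{2,-} + \mathbf{B_2}U_{2,-}$ with the constraints \eqref{eq:dataRepPi} for the forward direction, and use the full row rank condition \eqref{eq:rankXU} to solve $\col(X_{2,-},U_{2,-})\,G_\Theta = \col(\mathbf{\Theta},\mathbf{C_1})$ for the converse, with the spectral disjointness supplying uniqueness of $\mathbf{\Theta}$. No gaps.
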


The linear matrix equalities (LMEs) \eqref{eq:dataRepPi} serve as a data-dependent reformulation of the Sylvester equation \eqref{eq:SylEqGeneric}. Therefore, via solving a linear programme (LP) searching for a feasible solution to \eqref{eq:dataRepPi}, the solution $\mathbf{\Theta}$ of \eqref{eq:SylEqGeneric} can be obtained.

Now consider the following noisy scenario. First, suppose that the (unknown) system $\mathbf{\Sigma_2}$ is corrupted as
\begin{equation}  \label{eq:sysNoises}
     \quad x_2(k+1) = \mathbf{A_2} x_2(k) + \mathbf{B_2} u_2(k) + d_2(k),
\end{equation}
where $d_2(k) \in \mathbb{R}^{n_2}$ denotes additive system noise. In addition, suppose also that only the corrupted measurement/information is available:
    $\bar{x}_2(k) = x_2(k) + \Delta x_2(k)$, $\bar{u}_2(k) = u_2(k) + \Delta u_2(k)$, and $\mathbf{\bar{A}_1} = \mathbf{A_1} + \Delta \mathbf{A_1}$,
with $\Delta x_2(k)$ and $\Delta u_2(k)$ the measurement noise on the input and state, and $\Delta \mathbf{A_1}$ the knowledge mismatch on $\mathbf{A_1}$. Under this scenario, an \textit{empirical} estimation of the solution is given by
\begin{equation} \label{eq:PiPerturbed}
    \mathbf{\hat{\Theta}} := \bar{X}_{2, -} \hat{G}_{\Theta},
\end{equation}
with $\hat{G}_{\Theta} \in \RR^{T \times n_1}$ any matrix that satisfies
\begin{subequations} \label{eq:dataRepPiPerturbed} 
  \begin{empheq}[left=\empheqlbrace]{align}
    \bar{X}_{2, +} \hat{G}_{\Theta} &= \bar{X}_{2, -} \hat{G}_{\Theta} \mathbf{\bar{A}_1} \label{eq:dataRepPiPerturbed1} \\
    \bar{U}_{2, -}  \hat{G}_{\Theta} &=\mathbf{C_1}. \label{eq:dataRepPiPerturbed2}
  \end{empheq}
\end{subequations}
By defining an ``encapsulated'' noise matrix
\begin{equation} \label{eq:Rdef}
    R_{2,-} := \mathbf{A_2} \Delta X_{2,-} - \Delta X_{2,+} +\mathbf{B_2} \Delta U_{2,-} - D_{2,-}\,,
\end{equation}
a characterisation of the empirical error $\Delta \mathbf{\Theta} := \mathbf{\Theta} - \mathbf{\hat{\Theta}}$ and an error upper bound in terms of $R_{2,-}$ and $\Delta \mathbf{A_1}$ have been established in \cite{mao2025partOne}, which we recall as follows:

\begin{theorem} (see \cite[Lemma 3 and Theorem 4]{mao2025partOne}) \label{lemma:errorBound}
Suppose that the available (noisy) data matrices are such that
\begin{equation} \label{eq:rankX-U}
    \rank \left(\begin{bmatrix}
        \bar{X}_{2, -} \\ \bar{U}_{2, -} 
    \end{bmatrix}\right)
     = n_2 + m_2,
\end{equation}
and that $\lambda(\mathbf{A_1}) \cap \lambda(\mathbf{A_2}) = \emptyset$. Let $\hat{G}_{\Theta}$ be a solution of \eqref{eq:dataRepPiPerturbed}. Then, the following results hold.
\begin{enumerate}
    \item[(i)] $\Delta \mathbf{\Theta}$ is the unique solution of the Sylvester equation
\begin{equation} \label{eq:errorSyl}
    \mathbf{A_2} \Delta \mathbf{\Theta} -  \Delta \mathbf{\Theta} \mathbf{A_1} = - R_{2, -} \hat{G}_{\Theta} - \bar{X}_{2, -} \hat{G}_{\Theta} \Delta \mathbf{A_1}. 
\end{equation}
    \item[(ii)] The bound
        \begin{equation} \label{eq:boundPi}
            \|\Delta \mathbf{\Theta}\|_F
            \leq \frac{\|\hat{G}_{\Theta}\|_F \left( \| R_{2, -} \|_2 + \|\bar{X}_{2,-}\|_2 \|\Delta \mathbf{A_1}\|_2\right)}{\sigma_{min}(I \otimes \mathbf{A_2} - \mathbf{A_1}^\top \otimes I)} 
        \end{equation}
        holds.
\end{enumerate}
\end{theorem}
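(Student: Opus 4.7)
The plan is to derive a perturbed Sylvester equation satisfied by $\mathbf{\hat{\Theta}}$, subtract it from the true Sylvester equation to obtain an equation whose unique solution is $\Delta\mathbf{\Theta}$, and then apply vectorisation together with standard norm inequalities to obtain the Frobenius bound.

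For part (i), I would first rewrite the noisy ``plus'' data in terms of the true system matrices. Substituting $\bar{X}_{2,\pm} = X_{2,\pm} + \Delta X_{2,\pm}$ and $\bar{U}_{2,-} = U_{2,-} + \Delta U_{2,-}$ into the corrupted dynamics \eqref{eq:sysNoises} yields $\bar{X}_{2,+} = \mathbf{A_2}\bar{X}_{2,-} + \mathbf{B_2}\bar{U}_{2,-} - R_{2,-}$, with $R_{2,-}$ as defined in \eqref{eq:Rdef}. Right-multiplying this identity by $\hat{G}_{\Theta}$ and invoking the LMEs \eqref{eq:dataRepPiPerturbed}, together with $\mathbf{\bar{A}_1}=\mathbf{A_1}+\Delta\mathbf{A_1}$ and $\mathbf{\hat{\Theta}}=\bar{X}_{2,-}\hat{G}_{\Theta}$, would give
\begin{equation*}
\mathbf{A_2}\mathbf{\hat{\Theta}} - \mathbf{\hat{\Theta}}\mathbf{A_1} = -\mathbf{B_2}\mathbf{C_1} + R_{2,-}\hat{G}_{\Theta} + \mathbf{\hat{\Theta}}\Delta\mathbf{A_1}.
\end{equation*}
Subtracting this from the true Sylvester equation \eqref{eq:SylEqGeneric} then produces exactly \eqref{eq:errorSyl}, and uniqueness follows from the hypothesis $\lambda(\mathbf{A_1})\cap\lambda(\mathbf{A_2})=\emptyset$, which ensures invertibility of the Sylvester operator $Y\mapsto \mathbf{A_2}Y-Y\mathbf{A_1}$.

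For part (ii), I would vectorise \eqref{eq:errorSyl} via the standard identity $\vecop(\mathbf{A_2}Y-Y\mathbf{A_1})=(I\otimes\mathbf{A_2}-\mathbf{A_1}^\top\otimes I)\vecop(Y)$. The coefficient matrix is invertible by the same eigenvalue assumption, so solving, taking Euclidean norms, and using $\|\vecop(M)\|_2=\|M\|_F$ delivers
\begin{equation*}
\|\Delta\mathbf{\Theta}\|_F \leq \frac{\|R_{2,-}\hat{G}_{\Theta} + \bar{X}_{2,-}\hat{G}_{\Theta}\Delta\mathbf{A_1}\|_F}{\sigma_{\min}(I\otimes\mathbf{A_2}-\mathbf{A_1}^\top\otimes I)}.
\end{equation*}
The triangle inequality combined with the sub-multiplicative bound $\|AB\|_F\leq\|A\|_2\|B\|_F$ applied to each summand (with $\hat{G}_{\Theta}$ factored out as the common middle term) then yields \eqref{eq:boundPi}.

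The main obstacle is really the first algebraic step: cleanly absorbing the four perturbation sources $\Delta X_{2,-}$, $\Delta X_{2,+}$, $\Delta U_{2,-}$, and $D_{2,-}$ into a single residual $R_{2,-}$ while ensuring that the resulting perturbed Sylvester equation features the \emph{true} matrix $\mathbf{A_1}$ rather than $\mathbf{\bar{A}_1}$. Once this bookkeeping is settled, everything else reduces to linearity of the Sylvester operator and standard matrix-norm manipulations.
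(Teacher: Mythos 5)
Your proposal is correct: the identity $\bar{X}_{2,+} = \mathbf{A_2}\bar{X}_{2,-} + \mathbf{B_2}\bar{U}_{2,-} - R_{2,-}$ follows exactly from \eqref{eq:sysNoises} and \eqref{eq:Rdef}, right-multiplying by $\hat{G}_{\Theta}$ and using \eqref{eq:dataRepPiPerturbed} gives the perturbed Sylvester equation, subtraction from \eqref{eq:SylEqGeneric} yields \eqref{eq:errorSyl}, and the vectorisation plus mixed-norm estimates deliver \eqref{eq:boundPi} without any gaps. Note that this Part II does not contain its own proof (the result is recalled verbatim from Part I), but your derivation is the standard argument that such a statement admits and is consistent with the sketch the paper provides around \eqref{eq:Rdef}, so there is nothing to correct.
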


In Appendix~\ref{sec:extendedConfig}, for completeness, we investigate the extended configuration where $\mathbf{\Sigma_1}$ and $\mathbf{\Sigma_2}$ are both unknown. However, we note that this setting is rarely required in problems of interest. An exception is the static output regulation problem in Appendix~\ref{sec:regulationFullInfo}, where such an extension enables some additional results with respect to the literature. 

\subsection{Data-Driven Cascade Stabilisation} \label{sec:data-driven-cascade-stabilisation}
Consider the stabilisation problem of the following cascaded system
\begin{subequations} \label{eq:cascade}
    \begin{align}
    &x_1(k+1) = A_1x_1(k) + B_1u_1(k),\label{eq:cascade-x}\\ 
    &x_2(k+1) = A_2x_2(k) + B_2x_1(k) \label{eq:cascade-y}.
    \end{align}
\end{subequations} 
A trivial approach would be to design a stabilising state feedback by treating \eqref{eq:cascade} as a single monolithic system with the aggregated state $\mathcal{X}(k):= \col(x_1(k),x_2(k))$. While such a design philosophy is conceptually valid, it presents limitations in both theoretical and practical settings: (i) it disregards the known interconnection structure, which may have an impact on computational/design cost. For example, this is relevant in the dynamic output regulation problem where the cascade has a specific meaning: it is the interconnection of the system and an internal model. 
(ii) the approach is not scalable: when a new subsystem (\ie, an additional module) is appended to the cascade \eqref{eq:cascade}, the entire controller requires redesign. 

In contrast, the cascade stabilisation approach breaks the original (complex) stabilisation problem into lower-dimensional sub-problems while preserving the structure of the system, by stabilising the subsystems sequentially. This allows for reusable and sequential design for large problems. This approach consists of three meta steps, called \textit{forwarding method}, as recalled below.
\begin{enumerate}
    \item[(S1)] Pre-stabilise the first subsystem \eqref{eq:cascade-x} with a state feedback $u_1(k) = N_1 x(k) + v(k)$, where $N_1$ is such that $A_1 + B_1 N_1$ is Schur, and $v(k)$ is a new control input to be designed.
    \item[(S2)] Solve the Sylvester equation
    \begin{equation} 
    \label{eq:sylvester_cascade}
    A_2\Upsilon_c - \Upsilon_c(A_1 + B_1 N_1) = - B_2,
    \end{equation}
    which results from the observation that the pre-stabilised system \eqref{eq:cascade} mirrors the cascade interconnection $\mathbf{\Sigma_1} \rightarrow \mathbf{\Sigma_2}$, with $\mathbf{\Sigma_1} := (A_1 + B_1 N_1, B_1, I)$ and $\mathbf{\Sigma_2} := (A_2, B_2, I)$.
    \item[(S3)] Apply the coordinate transformation $\zeta := x_2 - \Upsilon_c x_1$ to the second subsystem \eqref{eq:cascade-y} and stabilising the transformed system 
    with $v(k) = N_2 \zeta(k)$, where $N_2$ is such that $A_2 - \Upsilon_c B_1 N_2$ is Schur.
\end{enumerate}
Once all these three steps have been executed, the cascade \eqref{eq:cascade}  in closed-loop with the overall control law $u_1(k) = N_1 x_1(k) + N_2 \zeta(k)$, yields that
\begin{equation} \label{eq:closed-loop-cascade-2-systems}
    \begin{bmatrix}
        x_1(k+1)\\\zeta(k+1)
    \end{bmatrix} = \begin{bmatrix}
        A_1 + B_1 N_1 & B_1 N_2\\ \mathbf{0} & A_2 - \Upsilon_c B_1 N_2
    \end{bmatrix}\begin{bmatrix}
        x_1(k)\\\zeta(k)
    \end{bmatrix},
\end{equation}
where the state matrix takes the upper-triangular form. As a consequence, the stability of each subsystem, characterised respectively by $A_1 + B_1 N_1$ and $A_2 - \Upsilon_c B_1 N_2$, implies the stability of the whole cascade system. 
Motivated by the advantages of the cascade stabilisation approach described above, a direct data-driven cascade stabilisation procedure has been proposed in Part I, bypassing the need for knowledge of the system matrices $A_1$, $B_1$, $A_2$ and $B_2$. For details of the data-driven implementations of each step of the forwarding method, we refer the reader to Section IV of Part I \cite{mao2025partOne}. We have recalled here the three steps of the forwarding method because in Section \ref{sec:outputRegulation} of this Part II, we will hinge on them steps conceptually to address the data-driven output regulation problem.

\color{black}

\section{Model Order Reduction} \label{sec:modelOrderReduction}
In this section, we solve the problem of data-driven model reduction by moment matching by casting it into the proposed interconnection framework. We begin by formulating the reduction procedure under the assumption that input-state measurements are available. Subsequently, we relax this assumption and develop an alternative approach that relies solely on input-output data.

\subsection{Problem Formulation}
Consider a single-input single-output linear time-invariant system of the form
\begin{equation}  \label{eq:sysReduction} 
    x(k+1) = Ax(k) + Bu(k), \quad y(k) = Cx(k),  
\end{equation}
with state $x(k) \in \mathbb{R}^{n}$, input $u(k) \in \mathbb{R}$, and output $y(k) \in \mathbb{R}$. Let 
$
W(z) = C(zI - A)^{-1}B, 
$
be the associated transfer function and assume that system~\eqref{eq:sysReduction} is minimal, \ie{}, it is both controllable and observable. The objective of model reduction by \textit{moment matching} is to determine a model or a family of models
\begin{equation} \label{eq:reducedOrderModel}
    \xi(k+1) = F\xi(k) + Gu(k), \quad \psi(k) = H\xi(k),  
\end{equation}
where $\xi(k) \in \RR^{\nu}$ and $\psi(k) \in \RR$, that interpolate the values of transfer function (known as \textit{moments}) of the \textit{full-order} system \eqref{eq:sysReduction} at prescribed complex values (known as \textit{interpolation points}). If $\nu < n$, then \eqref{eq:reducedOrderModel} is called a \textit{reduced-order} model. In this section, we use $\Sigma_{fom}$ to represent the full-order system \eqref{eq:sysReduction} and $\Sigma_{rom}$ for the reduced-order model, wherever they are convenient.

A formal definition of the moments of system $\Sigma_{fom}$ is given as follows. 

\begin{definition} \label{def:moment}
The $0$-moment of $\Sigma_{fom}$ at $z_i\in \mathbb{C} \setminus \lambda(A)$ is the complex number $\eta_0(z_i) := W(z_i)$. The $k$-moment of $\Sigma_{fom}$ at $z_i$ is the complex number $\eta_k(z_i):=\frac{(-1)^{k}}{k!} \left[ \frac{d^k}{dz^k} W(z)\right]_{z = z_i}$, where $k \geq 1$ is an integer. 
\end{definition}

The fundamental connection between moments and the solution of Sylvester equations was first identified in~\cite{gallivan2004sylvester, gallivan2006model}. This connection was further developed in~\cite{astolfi2010model}, as described below.

Let $\mathcal{I} = \{z_1,  \dots, z_{\overline{\nu}}\} \subset \mathbb{C} \setminus \lambda(A)$ be a set of interpolation points. Points in $\mathcal{I}$ may be repeated, thus we define $m_{z_i}$ as the multiplicity of the point $z_i$ in the set $\mathcal{I}$. Then, the moments $\eta_0(z_1)$, $\dots$, $\eta_{m_{z_1}-1}(z_1)$, $\dots$, $\eta_0(z_{\overline{\nu}})$, $\dots$, $\eta_{m_{z_{\overline{\nu}}}-1}(z_{\overline{\nu}})$ of $\Sigma_{fom}$ at $\mathcal{I}$ 
are in a \textit{one-to-one relation}\footnote{That is, there exists an invertible transformation between the two quantities.} with the entries of the matrix
    $C \Pi_m$, where $\Pi_m \in \mathbb{R}^{n \times \nu}$ is the unique solution to the Sylvester equation
    \begin{equation} \label{eq:SylEqPrimalReduction}
        A \Pi_m - \Pi_m S = -BL,
    \end{equation}
    provided that $S \in \mathbb{R}^{\nu \times \nu}$ is a non-derogatory\footnote{A matrix is non-derogatory if its characteristic and minimal polynomials coincide.} matrix with characteristic polynomial $\prod_{i=1}^{\overline{\nu}}(z-z_i)^{m_{z_i}}$, where $\sum_{i=1}^{\overline{\nu}} m_{z_i} = \nu $,
    and $L \in \mathbb{R}^{1 \times \nu}$ is a row vector such that the pair $(S, L)$ is observable\footnote{This result, originally developed for continuous-time systems, also holds in the discrete-time setting.}.



Due to the coordinate invariance of the moments (see~\cite{astolfi2010model}), the aforementioned one-to-one relation implies that the matrix $C \Pi_m$ can be viewed as an equivalent representation of the moments. For this reason,  as is common in the literature, we call the matrix $C\Pi_m$ ``moments (at $\mathcal{I}$)''. 

Given a set $\mathcal{I}$ with $\nu < n$, a family containing all reduced-order models of order $\nu$ matching the moments $C\Pi_m$ at $\mathcal{I}$ is given by
\begin{subequations} \label{eq:ROM_S}
\begin{align}
        \xi(k+1) &= (S - GL)\xi(k) + G u(k),  \\
        \quad \psi(k) &= C \Pi_m \xi(k),
\end{align}
\end{subequations}
for any matrix $G$ such that $\lambda(S) \cap \lambda(S - GL) = \emptyset$~\cite{astolfi2010model}. The matrix $G$ is a free parameter used to span the family to obtain models with additional desired properties. See \cite{astolfi2010model} for a detailed compilation of these additional properties.




\begin{remark} \label{remark:MOR-SG-interpreration}
The Sylvester equation $\eqref{eq:SylEqPrimalReduction}$ induces an interconnection interpretation of the moments $C\Pi_m$. Consider the ``signal generator''
\begin{equation*}
    \omega(k+1) = S\omega(k), \quad \theta(k) = L\omega(k),
\end{equation*}
which we denote as $\Sigma_{sg}$. 
The solution $\Pi_m$ of $\eqref{eq:SylEqPrimalReduction}$ characterises the invariant subspace
$
    \mathcal{M}_m = \{(x,\omega): x = \Pi_m \omega\}
$
of the cascade $\Sigma_{sg} \rightarrow \Sigma_{fom}$ via $u(k) = \theta(k)$. Then, it follows immediately that $C\Pi_m$ characterises the steady-state (provided it exists) output response of this cascade, namely $y_{ss}(k) = C\Pi_m \omega(k)$. This relation has been exploited by many related works, such as \cite{scarciotti2017data}.
\end{remark}

In the remainder of this section, we focus on determining the reduced-order model \eqref{eq:ROM_S} using directly either input-state or input-output data. 

\subsection{Reduction under Input-state Data} \label{sec:reductionISdata}
In this section we assume the availability of measurements of the state $x$ and of the input $u$. While the matrices $A$ and $B$ are assumed unknown, for the time being we consider the following assumption on $C$. 
\begin{assumption} \label{ass:knownBC}
    The output matrix $C$ is known a priori. 
\end{assumption}
\begin{remark}
  This assumption is valid in many \textit{grey-box} modelling scenarios, where each state variable has a well-defined physical meaning. For instance, in the dynamical model of a robotic manipulator, the state variable typically corresponds to either the position or velocity of joints. 
  In such cases, the output matrix $C$ is typically known in advance, provided that the sensor placements are specified during the design phase (\eg{}, like in the batch reactor model detailed in Section \ref{sec:exampleRegulation}). In Section~\ref{sec:reductionIOdata}, we remove this assumption and eliminate the reliance on state measurements, thereby extending the methodology to a more general data-driven setting.  
\end{remark} 

We note that the essential components to synthesise the reduced-order models in~\eqref{eq:ROM_S} are: (i) the pair $(S, L)$, which encodes the interpolation points, and (ii) the ``moments'' $C \Pi_m$. By Assumption~\ref{ass:knownBC}, the matrix $C$ is assumed to be known. Moreover, the matrix $S$ can be constructed directly from the prescribed set of interpolation points $\mathcal{I}$, for example using the real Jordan canonical form. Similarly, the vector $L$ can be trivially determined based on $S$ to satisfy the observability condition.

Given these, the central task for constructing the moment matching model \eqref{eq:ROM_S} consists in determining the solution $\Pi_m$ of the Sylvester equations~\eqref{eq:SylEqPrimalReduction}, based on available input-state data from the full-order system $\Sigma_{fom}$. This task can be 
readily achieved using Lemma~\ref{lemma:computePI} of the framework, as detailed next.

\begin{theorem} \label{theorem:ROM_SData}
Consider system $\Sigma_{fom}$ and a set $\mathcal{I} \subset \CC \setminus \lambda(A)$. Suppose that Assumption \ref{ass:knownBC} holds. Let $S \in \RR^{\nu \times \nu}$ be a non-derogatory matrix with characteristic polynomial $\prod_{i=1}^{\overline{\nu}}(z-z_i)^{m_{z_i}}$, where $\sum_{i=1}^{\overline{\nu}} m_{z_i} = \nu $,
and let $L \in \RR^{1 \times \nu}$ be such that $(S, L)$ is observable. Let $G \in \RR^{\nu}$ be such that $\lambda(S) \cap \lambda(S- GL) = \emptyset$. Suppose that the data from $\Sigma_{fom}$ is such that
\begin{equation} \label{eq:rankXUReduction}
    \rank \left(\begin{bmatrix}
        X_{-} \\ U_{-}
    \end{bmatrix}\right)
     = n + 1,
\end{equation} 
holds. Then, the model 
\begin{subequations} \label{eq:ROM_SData}
    \begin{align} 
    \xi(k+1) &= (S - GL)\xi(k) + G u(k), \\
    \psi(k) &= C X_{-} G_{\Pi_m} \xi(k), 
\end{align}
\end{subequations}
matches the moments of $\Sigma_{fom}$ at $\mathcal{I}$, with $G_{\Pi_m} \in \RR^{T \times \nu}$ any matrix that satisfies
\begin{subequations} \label{eq:dataRepPiReduction} 
  \begin{empheq}[left=\empheqlbrace]{align}
    X_{+}  G_{\Pi_m} &= X_{-} G_{\Pi_m} S,  \\
    U_{-}  G_{\Pi_m} &= L.    
  \end{empheq}
\end{subequations}
\end{theorem}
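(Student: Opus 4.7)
The plan is to recognise the theorem as a direct specialisation of Lemma~\ref{lemma:computePI} applied to the signal-generator/full-order-system cascade described in Remark~\ref{remark:MOR-SG-interpreration}, combined with the known model-based moment-matching family \eqref{eq:ROM_S}. First, I would instantiate the generic pair in Lemma~\ref{lemma:computePI} by setting $\mathbf{\Sigma_1}$ to be the autonomous signal generator with matrices $(S, \mathbf{0}, L)$ and $\mathbf{\Sigma_2} = (A,B,C) = \Sigma_{fom}$. Under this identification, the generic Sylvester equation \eqref{eq:SylEqGeneric} reduces exactly to \eqref{eq:SylEqPrimalReduction}, and the data equations \eqref{eq:dataRepPi} reduce exactly to \eqref{eq:dataRepPiReduction} (with $n_2 = n$, $m_2 = 1$). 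The rank hypothesis \eqref{eq:rankXUReduction} is just the specialisation of \eqref{eq:rankXU}, and $\mathbf{\Sigma_1}$ is known by construction.

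Second, I would verify the spectral disjointness $\lambda(S) \cap \lambda(A) = \emptyset$ required by Lemma~\ref{lemma:computePI}. Since $S$ is constructed as a non-derogatory matrix whose characteristic polynomial is $\prod_{i=1}^{\overline{\nu}}(z-z_i)^{m_{z_i}}$, its spectrum equals $\mathcal{I}$, and by hypothesis $\mathcal{I} \subset \mathbb{C} \setminus \lambda(A)$; hence the condition holds. Invoking Lemma~\ref{lemma:computePI} then gives that for any $G_{\Pi_m}$ satisfying \eqref{eq:dataRepPiReduction}, the matrix $\Pi_m := X_{-} G_{\Pi_m}$ is the unique solution of the Sylvester equation \eqref{eq:SylEqPrimalReduction}.

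Finally, I would conclude by comparing \eqref{eq:ROM_SData} with the model-based family \eqref{eq:ROM_S}: the two models share the same state-update matrix $S - GL$ and input matrix $G$, and the output matrix $C X_{-} G_{\Pi_m}$ in \eqref{eq:ROM_SData} is equal to $C \Pi_m$ by the previous step. Because $G$ is chosen so that $\lambda(S) \cap \lambda(S-GL) = \emptyset$, the moment-matching property of \eqref{eq:ROM_S} established in \cite{astolfi2010model} transfers verbatim to \eqref{eq:ROM_SData}, which proves the claim. There is no substantive obstacle here: the whole argument is essentially a translation of the preliminaries into the moment-matching setup, and the only point to be careful about is matching the roles of $\mathbf{\Sigma_1}$ and $\mathbf{\Sigma_2}$ correctly so that the data relations \eqref{eq:dataRepPiReduction} are recognised as the correct instance of \eqref{eq:dataRepPi}. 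The reliance on Assumption~\ref{ass:knownBC} appears only in the construction of $CX_{-}G_{\Pi_m}$ and motivates the input-output variant developed in Section~\ref{sec:reductionIOdata}.
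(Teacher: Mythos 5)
Your proposal is correct and takes essentially the same route as the paper's proof: both apply Lemma~\ref{lemma:computePI} with $\mathcal{A}_1 = S$, $\mathcal{C}_1 = L$, $\mathcal{A}_2 = A$, $\mathcal{B}_2 = B$, $\mathbf{\Theta} = \Pi_m$ to conclude $\Pi_m = X_{-} G_{\Pi_m}$, and then substitute this into \eqref{eq:ROM_SData} to recover the model-based family \eqref{eq:ROM_S}. Your explicit verification that $\lambda(S) = \mathcal{I}$ is disjoint from $\lambda(A)$ is a detail the paper leaves implicit but is a welcome addition.
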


\begin{proof}
Observe that $\Pi_m = X_{-} G_{\Pi_m}$, which follows directly by applying Lemma~\ref{lemma:computePI} to the Sylvester equation \eqref{eq:SylEqPrimalReduction} and taking $\mathcal{A}_1 = S$, $\mathcal{C}_1 = L$, $\mathcal{A}_2 = A$, $\mathcal{B}_2 = B$ and $\mathbf{\Theta} = \Pi_m$. Substituting this expression into the model~\eqref{eq:ROM_SData} yields the model-based formulation in~\eqref{eq:ROM_S} and completes the proof.
\end{proof}

\subsection{Reduction under Input-output Measurements} \label{sec:reductionIOdata}
In Section~\ref{sec:reductionISdata}, the results were developed under Assumption~\ref{ass:knownBC} and with access to state measurements. In this section, we demonstrate that these requirements can be relaxed requiring only measurements of the input $u$ and output $y$, by leveraging the \textit{left difference operator representation}, see~\cite[Section 2.3.3]{goodwin2014adaptive} and~\cite[Section 6]{de2019formulas}. To this end, we represent system~\eqref{eq:sysReduction} as 
\begin{equation} \label{eq:differenceEq}
\begin{aligned}
&y(k) + a_{n} y(k-1) + \cdots + a_{2} y(k-n+1) + a_{1} y(k-n) \\
&\quad= b_{n} u(k-1) + \cdots + b_{2} u(k-n+1) + b_{1} u(k-n),
\end{aligned}
\end{equation}
where the coefficients $a_1, a_2, \dots, a_n$ and $b_1, b_2, \dots, b_n$ correspond to those in the transfer function $W(z)=\frac{Y(z)}{U(z)}$ associated with system~\eqref{eq:sysReduction}, namely
\begin{equation} \label{eq:tfReduction}
    \frac{Y(z)}{U(z)} = 
    \frac{b_n z^{n-1} + b_{n-1} z^{n-2} + \cdots + b_2 z + b_1}{z^n + a_n z^{n-1} + a_{n-1} z^{n-2} + \cdots + a_2 z + a_1}.
\end{equation}
By considering the augmented state
\begin{equation} \label{eq:xTilde}
\begin{gathered}
       \widetilde{x}(k) :=\col \bigl( y(k-n), y(k-n+1), \ldots, y(k-1) \\
u(k-n), u(k-n+1), \ldots, u(k-1) \bigr) \in \RR^{2n},
\end{gathered}
\end{equation}
we obtain the non-minimal state-space realisation $\widetilde{\Sigma}_{fom} := (\widetilde{A}, \widetilde{B}, \widetilde{C})$ defined in~\eqref{eq:stateSpaceBig}\footnote{This state-space form is precisely the one presented in~\cite[(58)]{de2019formulas}; we include it here for notational completeness.}.

\begin{figure*}[!t]
\normalsize
\begin{subequations}  \label{eq:stateSpaceBig}
\begin{align}
\widetilde{x}(k+1) &= 
\underbrace{
\begin{bmatrix}
    0 & 1 & 0 & \cdots & 0  & 0 & 0 & 0 & \cdots & 0 \\ 
    0 & 0 & 1 & \cdots & 0  & 0 & 0 & 0 & \cdots & 0 \\
    \vdots & \vdots & \vdots & \ddots & \vdots  & \vdots & \vdots & \vdots & \ddots & \vdots \\
    0 & 0 & 0 & \cdots & 1  & 0 & 0 & 0 & \cdots & 0 \\
    -a_1  & -a_2 & -a_3 & \cdots & -a_n  & b_1 & b_2 & b_3 & \cdots & b_{n} \\
    \hdashline
    0 & 0 & 0 & \cdots & 0  & 0 & 1 & 0 & \cdots & 0 \\ 
    0 & 0 & 0 & \cdots & 0  & 0 & 0 & 1 & \cdots & 0 \\ 
    \vdots & \vdots & \vdots & \ddots & \vdots  & \vdots & \vdots & \vdots & \ddots & \vdots \\
    0 & 0 & 0 & \cdots & 0  & 0 & 0 & 0 & \cdots & 1 \\ 
    0 & 0 & 0 & \cdots & 0  & 0 & 0 & 0 & \cdots & 0 \\ 
\end{bmatrix}
}_{\displaystyle \widetilde{A}}
\widetilde{x}(k)
+
\underbrace{
\begin{bmatrix}
    0 \\
    0 \\ 
    \vdots \\
    0 \\
    0 \\
    \hdashline
    0 \\
    0 \\
    \vdots \\
    0 \\
    1
\end{bmatrix}
}_{\displaystyle \widetilde{B}}
u(k), \\
y(k) &= 
\underbrace{
\begin{bmatrix}
    -a_1  & -a_2 & -a_3 & \cdots & -a_n  & b_1 & b_2 & b_3 & \cdots & b_{n}
\end{bmatrix}
}_{\displaystyle \widetilde{C}}
\widetilde{x}(k).
\end{align}
    \end{subequations}
\hrulefill
\vspace*{4pt}
\end{figure*}

The core objective is to recover the matrix $C \Pi_m$ from input-output data. To this end, we first establish that the ``moments'' $C \Pi_m$ of the original system~\eqref{eq:sysReduction} are equal to those of the non-minimal realization~\eqref{eq:stateSpaceBig}, provided that $0$ is not an interpolation point (\ie, $0 \notin \lambda(S)$).

\begin{lemma} \label{lemma:invarianceofMoments}
Consider system $\Sigma_{fom}$ and the representation $\widetilde{\Sigma}_{fom}$. Suppose that $\lambda(A) \cap \lambda(S) = \emptyset$, with $0 \notin \lambda(S)$, and that the pair $(S, L)$ is observable. Then, it holds that
\begin{equation} \label{eq:momentInvariant}
    C\Pi_m = \widetilde{C} \widetilde{\Pi}_m,
\end{equation}
with $\widetilde{\Pi}_m \in \RR^{2n \times \nu}$ the unique solution of
\begin{equation} \label{eq:SylEqTilde}
    \widetilde{A}\widetilde{\Pi}_m - \widetilde{\Pi}_m S= -\widetilde{B} L, 
\end{equation}
which is the Sylvester equation associated with $\widetilde{\Sigma}_{fom}$.
\end{lemma}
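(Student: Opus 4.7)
The plan is to exploit the interconnection interpretation recalled in Remark~\ref{remark:MOR-SG-interpreration}: the product $C\Pi_m$ characterises the output response on the invariant subspace of the cascade $\Sigma_{sg} \rightarrow \Sigma_{fom}$, and $\widetilde{C}\widetilde{\Pi}_m$ plays the analogous role for $\Sigma_{sg} \rightarrow \widetilde{\Sigma}_{fom}$. Since $\widetilde{\Sigma}_{fom}$ is, by construction, a (non-minimal) realisation of the same difference equation~\eqref{eq:differenceEq} as $\Sigma_{fom}$, both cascades produce the same input-output map, so the two moment matrices must coincide. The proof formalises this intuition in three short steps.

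First, I would confirm that $\widetilde{\Pi}_m$ is well defined. The matrix $\widetilde{A}$ in~\eqref{eq:stateSpaceBig} is block upper-triangular, with upper-left block equal to the companion matrix associated with the denominator of $W(z)$ (spectrum $\lambda(A)$) and lower-right block equal to a nilpotent shift matrix (spectrum $\{0\}$). Hence $\lambda(\widetilde{A}) = \lambda(A) \cup \{0\}$, so the assumptions $\lambda(A) \cap \lambda(S) = \emptyset$ together with $0 \notin \lambda(S)$ yield $\lambda(\widetilde{A}) \cap \lambda(S) = \emptyset$, which guarantees unique solvability of~\eqref{eq:SylEqTilde}.

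Next, I would construct a candidate solution by ``lifting'' the invariant-subspace trajectory of $\Sigma_{fom}$ to $\widetilde{\Sigma}_{fom}$. Taking $x(k) = \Pi_m \omega(k)$, so that $y(k) = C\Pi_m \omega(k)$ and $u(k) = L\omega(k)$, and using the invertibility of $S$ (implied by $0 \notin \lambda(S)$), the past samples populating $\widetilde{x}(k)$ in~\eqref{eq:xTilde} satisfy $y(k-j) = C\Pi_m S^{-j}\omega(k)$ and $u(k-j) = LS^{-j}\omega(k)$ for $j = 1,\dots,n$. Stacking these expressions defines a matrix $M \in \RR^{2n\times \nu}$ such that $\widetilde{x}(k) = M\omega(k)$. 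By the very construction of the realisation~\eqref{eq:stateSpaceBig}, this lifted trajectory satisfies $\widetilde{x}(k+1) = \widetilde{A}\widetilde{x}(k) + \widetilde{B}u(k)$; substituting $\widetilde{x}(k) = M\omega(k)$ and $\omega(k+1) = S\omega(k)$ gives $MS\omega(k) = \widetilde{A}M\omega(k) + \widetilde{B}L\omega(k)$ for every $\omega(0) \in \RR^\nu$, whence $\widetilde{A}M - MS = -\widetilde{B}L$. Uniqueness from the previous step then forces $M = \widetilde{\Pi}_m$.

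Finally, applying $\widetilde{C}$ to $\widetilde{x}(k) = \widetilde{\Pi}_m \omega(k)$ and using the difference equation~\eqref{eq:differenceEq} (equivalently, the identity $\widetilde{C}\widetilde{x}(k) = y(k)$ built into the realisation) yields $\widetilde{C}\widetilde{\Pi}_m\omega(k) = C\Pi_m\omega(k)$ for all $k$; evaluating at $k=0$ with $\omega(0)$ ranging over $\RR^\nu$ produces the desired identity~\eqref{eq:momentInvariant}. The main obstacle is the bookkeeping in the lifting step: one must verify that the companion-like row $[-a_1,\dots,-a_n,b_1,\dots,b_n]$ of $\widetilde{A}$ applied to $\widetilde{x}(k)$ reproduces $y(k)$, and that the shift rows correctly advance the past history by one step, but both facts are immediate consequences of~\eqref{eq:differenceEq}.
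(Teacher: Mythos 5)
Your proof is correct, and for the identity \eqref{eq:momentInvariant} it takes a genuinely different route from the paper. The uniqueness part is identical: both you and the paper exploit the block upper-triangular structure of $\widetilde{A}$ to conclude $\lambda(\widetilde{A})=\lambda(A)\cup\{0\}$ and hence $\lambda(\widetilde{A})\cap\lambda(S)=\emptyset$. For the identity itself, however, the paper's proof is a one-liner: since $\Sigma_{fom}$ and $\widetilde{\Sigma}_{fom}$ realise the same transfer function \eqref{eq:tfReduction}, they have the same moments at every well-defined interpolation point, and by the one-to-one correspondence between moments and $C\Pi_m$ (from \cite{astolfi2010model}, which depends only on the pair $(S,L)$) the two matrices coincide. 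You instead construct the solution of \eqref{eq:SylEqTilde} explicitly by lifting the invariant-subspace trajectory $x(k)=\Pi_m\omega(k)$, stacking $C\Pi_m S^{-j}$ and $LS^{-j}$ into a candidate $M$, verifying it satisfies the Sylvester equation, and invoking uniqueness to get $M=\widetilde{\Pi}_m$; the identity then follows from $\widetilde{C}\widetilde{x}(k)=y(k)$. This is precisely the device the paper reserves for the proof of Lemma~\ref{lemma:PitildeStructure}, so your argument delivers the row structure \eqref{eq:PiTildeStructure} of $\widetilde{\Pi}_m$ as a free by-product, effectively merging Lemmas~\ref{lemma:invarianceofMoments} and~\ref{lemma:PitildeStructure}. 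What you buy is a self-contained derivation that does not lean on the moment/$C\Pi_m$ correspondence for a non-minimal realisation; what the paper buys is brevity. One small point worth making explicit in your write-up: evaluating at $k=0$ requires the lifted trajectory at negative times, which is legitimate only because $S$ is invertible ($0\notin\lambda(S)$) so that $\omega$ and the invariant-subspace relation extend to all $k\in\ZZ$; alternatively, evaluate at $k\geq n$ and use invertibility of $S$ to let $\omega(k)$ range over all of $\RR^{\nu}$.
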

\begin{proof}
The relation \eqref{eq:momentInvariant} follows trivially by recalling that $\Sigma_{fom}$ and $\widetilde{\Sigma}_{fom}$ have the same transfer function \eqref{eq:tfReduction}. Consequently, they have the same moments at all well-defined interpolation points.\\ 
To establish the uniqueness of $\widetilde{\Pi}$, observe that $\Pi$ is unique and that $0 \notin \lambda(S)$. 
By \eqref{eq:stateSpaceBig}, $\widetilde{A}$ is a (upper) block-triangular matrix, whose diagonal consists of two blocks.
\begin{itemize}
  \item The bottom-right block is an upper triangular matrix with all diagonal entries equal to $0$, hence it has a single eigenvalue at $0$ with algebraic multiplicity $n$.
  \item The upper-left block is the companion matrix
  $$
  \begin{bmatrix}  
  0 & 1 & 0 & \cdots & 0 \\
  0 & 0 & 1 & \cdots & 0 \\
  \vdots & \vdots & \vdots & \ddots & \vdots \\
  0 & 0 & 0 & \cdots & 1 \\
  -a_1 & -a_2 & -a_3 & \cdots & -a_n  
  \end{bmatrix},
  $$
  whose characteristic polynomial is $p(\lambda) = \lambda^n + a_n \lambda^{n-1} + \cdots + a_2 \lambda + a_1$, thereby its eigenvalues coincide with those of $A$ (see~\cite[Section 3.3]{horn2012matrix}).
\end{itemize}
Thus, $\lambda(\widetilde{A})$ is the union of $\lambda(A)$ and $\{0\}$ and, by assumption, $\lambda(S)$ does not contain $0$ and is disjoint from $\lambda(A)$. Therefore, $\lambda(\widetilde{A}) \cap \lambda(S) = \emptyset$, which implies that \eqref{eq:SylEqTilde} has a unique solution. This completes the proof.
\end{proof}

\begin{remark}
The requirement that $0$ is not an interpolation point is not restrictive for \emph{discrete-time} moment matching reduction methods. In fact, for discrete-time systems, the interpolation points are typically selected to be on the unit circle, \ie{}, $z_i = e^{\iota \omega_i t_s}$ with $\iota$ the imaginary unit, $t_s$ the sampling time and $\omega_i \in \RR$ the frequency of interest.
\end{remark}

Lemma~\ref{lemma:invarianceofMoments} shows that, instead of computing the solution $\Pi_m$ to the Sylvester equation~\eqref{eq:SylEqPrimalReduction}, which depends on state measurements $x$, one can alternatively compute the solution $\widetilde{\Pi}_m$ to~\eqref{eq:SylEqTilde}, which is based solely on the augmented state $\widetilde{x}$ and thus requires only input-output measurements.

However, a challenge, namely that the matrix $\widetilde{C}$ is unknown, remains. As a matter of fact, knowing $\widetilde{C}$ would imply knowledge of all system parameters $a_1, a_2, \dots, a_n$ and $b_1, b_2, \dots, b_n$, which is equivalent to system identification and so contrary to the motivation of our direct data-driven framework.

To resolve this challenge, we observe that the representation $\widetilde{\Sigma}_{fom}$ in \eqref{eq:stateSpaceBig} is in a highly structured canonical form. Thus, one wonders whether the associated $\widetilde{\Pi}_m$ may also have a particular structure. This is indeed the case, as shown in the following result.

\begin{lemma} \label{lemma:PitildeStructure}
Consider the representation $\widetilde{\Sigma}_{fom}$. Suppose that $\lambda(A) \cap \lambda(S) = \emptyset$, with $0 \notin \lambda(S)$. Let $\widetilde{\Pi}_m$ be the unique solution of \eqref{eq:SylEqTilde} and let $\widetilde{\Pi}_{m, j}$ denote its $j$-th row. 
Then, 
\begin{equation} \label{eq:PiTildeStructure}
    \widetilde{C}\widetilde{\Pi}_m = \widetilde{\Pi}_{m, j} S^{n-j+1},
\end{equation}
for any $j = 1, 2, \dots, n$.  
\end{lemma}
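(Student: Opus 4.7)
The plan is to work directly with the Sylvester equation $\widetilde{A}\widetilde{\Pi}_m - \widetilde{\Pi}_m S = -\widetilde{B}L$ and exploit the highly sparse canonical structure of $\widetilde{A}$, $\widetilde{B}$ and $\widetilde{C}$ in \eqref{eq:stateSpaceBig}. I would partition $\widetilde{\Pi}_m$ row-wise into two blocks of size $n$: let $P_1,\dots,P_n \in \mathbb{R}^{1\times\nu}$ denote the rows corresponding to the delayed outputs $y(k-n),\dots,y(k-1)$, and $Q_1,\dots,Q_n$ the rows corresponding to the delayed inputs. The key structural observations that make the argument work are (a) that row $n$ of $\widetilde{A}$ is \emph{identical} to the output vector $\widetilde{C}$, and (b) that $\widetilde{B}$ is nonzero only in its last row, so $[\widetilde{B}L]_j = 0$ for all $j \neq 2n$.

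Reading off the Sylvester equation row by row, rows $1$ through $n-1$ of $\widetilde{A}$ are shift rows (with a single $1$ just above the diagonal), so the $j$-th row of $\widetilde{A}\widetilde{\Pi}_m$ equals $P_{j+1}$, while the corresponding row of $\widetilde{B}L$ vanishes. Hence the recursion $P_{j+1} = P_j S$ for $j=1,\dots,n-1$ follows, and by iteration $P_j = P_1 S^{j-1}$ for every $j \in \{1,\dots,n\}$. The decisive step is then the $n$-th row: because that row of $\widetilde{A}$ coincides with $\widetilde{C}$ and $[\widetilde{B}L]_n = 0$, the Sylvester equation yields $\widetilde{C}\widetilde{\Pi}_m = P_n S$, which combined with the recursion gives $\widetilde{C}\widetilde{\Pi}_m = P_1 S^n$. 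For any $j \in \{1,\dots,n\}$,
$$
\widetilde{\Pi}_{m,j}\, S^{n+1-j} \;=\; P_j\, S^{n+1-j} \;=\; P_1\, S^{j-1}\, S^{n+1-j} \;=\; P_1 S^n \;=\; \widetilde{C}\widetilde{\Pi}_m,
$$
which is exactly \eqref{eq:PiTildeStructure}.

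I do not foresee a significant obstacle: the argument is essentially bookkeeping on the canonical form. The only point that deserves attention is the structural coincidence that row $n$ of $\widetilde{A}$ literally equals $\widetilde{C}$; this is what allows $\widetilde{C}\widetilde{\Pi}_m$ to be recovered from a single row of $\widetilde{\Pi}_m$ shifted by a suitable power of $S$. Notably, the hypotheses $0\notin\lambda(S)$ and observability of $(S,L)$ are not needed for the identity \eqref{eq:PiTildeStructure} itself; they are inherited from Lemma~\ref{lemma:invarianceofMoments} merely to guarantee that $\widetilde{\Pi}_m$ is well defined and unique, so that the statement is not vacuous. As a bonus, the $Q_j$ block never enters the derivation, which is consistent with the fact that the moments are captured entirely by the output-delay part of the augmented state.
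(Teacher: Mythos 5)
Your proof is correct, and it takes a genuinely different route from the paper. The paper proves the identity dynamically: it views $\widetilde{\Pi}_m$ as parametrising the invariant subspace of the cascade $\Sigma_{sg} \rightarrow \widetilde{\Sigma}_{fom}$, writes the delayed entries of $\widetilde{x}(k)$ on that subspace as $\widetilde{C}\widetilde{\Pi}_m \omega(k-j) = \widetilde{C}\widetilde{\Pi}_m S^{-j}\omega(k)$ using the invertibility of $S$, and compares with $\widetilde{x}(k)=\widetilde{\Pi}_m\omega(k)$ to read off $\widetilde{\Pi}_{m,j}=\widetilde{C}\widetilde{\Pi}_m S^{-(n-j+1)}$ before post-multiplying by powers of $S$. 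You instead read the Sylvester equation $\widetilde{A}\widetilde{\Pi}_m+\widetilde{B}L=\widetilde{\Pi}_m S$ row by row: the shift rows give $P_{j+1}=P_jS$, and the coincidence of the $n$-th row of $\widetilde{A}$ with $\widetilde{C}$ (together with $[\widetilde{B}L]_n=\mathbf{0}$) anchors the recursion via $\widetilde{C}\widetilde{\Pi}_m=P_nS$, from which \eqref{eq:PiTildeStructure} follows by pure bookkeeping. Your version is more elementary and, as you observe, establishes the identity for any solution of \eqref{eq:SylEqTilde} without ever inverting $S$ --- the hypothesis $0\notin\lambda(S)$ is only needed for uniqueness of $\widetilde{\Pi}_m$ (since $0\in\lambda(\widetilde{A})$ with multiplicity $n$), whereas in the paper's argument invertibility of $S$ is used inside the proof itself. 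What the paper's route buys is conceptual alignment with the interconnection perspective that runs through both parts of the article, plus the analogous explicit formulas for the input-block rows ($\widetilde{\Pi}_{m,n+j}=LS^{-(n-j+1)}$), which your derivation correctly never needs. Both arguments are sound; yours is arguably the tighter one for this specific statement.
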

\begin{proof}
In this proof, we reveal the special structure of $\widetilde{\Pi}_m$ by resorting to the interconnection-based interpretation of the Sylvester solution, as noted in Remark \ref{remark:MOR-SG-interpreration}. Consider the cascade $\Sigma_{sg} \rightarrow \widetilde{\Sigma}_{fom}$. We know that for this interconnection there exists an invariant subspace $\widetilde{\mathcal{M}}_m = \{(\widetilde{x}, \omega): \widetilde{x} = \widetilde{\Pi}_m \omega\}$.
Then, on this subspace $\widetilde{\mathcal{M}}_m$, the augmented state $\widetilde{x}(k)$ must satisfy
\begin{align*}
\widetilde{x}&(k) := \\
& 
\begin{bmatrix}
    \widetilde{C}\widetilde{x}(k-n) \\
    \vdots \\
    \widetilde{C}\widetilde{x}(k-1) \\
    u(k-n) \\
    \vdots \\
    u(k-1)
\end{bmatrix} 
= 
\begin{bmatrix}
    \widetilde{C}\widetilde{\Pi}_m \omega(k-n) \\
    \vdots \\
    \widetilde{C}\widetilde{\Pi}_m \omega(k-1) \\
    L\omega(k-n) \\
    \vdots \\
    L\omega(k-1)
\end{bmatrix} 
= 
\begin{bmatrix}
    \widetilde{C}\widetilde{\Pi}_m S^{-n} \\
    \vdots \\
    \widetilde{C}\widetilde{\Pi}_m S^{-1} \\
    L S^{-n} \\
    \vdots \\
    L S^{-1}
\end{bmatrix}
\omega(k),
\end{align*}
where in the second equality we used the identity $\omega(k-j) = S^{-j} \omega(k)$ for $j = 1, \dots, n$ (which holds by the invertibility of $S$).
On the other hand, from the definition of the invariant subspace, we also have 
$\widetilde{x}(k) = \widetilde{\Pi}_m \omega(k)$. 
Comparing the two expressions for $\widetilde{x}(k)$, we obtain that the rows of $\widetilde{\Pi}_m$ satisfy
\begin{equation}
\widetilde{\Pi}_{m,1} = \widetilde{C} \widetilde{\Pi}_m S^{-n}, \quad \dots, \quad \widetilde{\Pi}_{m,n} = \widetilde{C} \widetilde{\Pi}_m S^{-1}.
\end{equation}
Finally, post-multiplying both sides of these equations by the corresponding powers of $S$ yields the structure stated in equation~\eqref{eq:PiTildeStructure}.
\end{proof}

Lemma~\ref{lemma:PitildeStructure} allows computing $\widetilde{C} \widetilde{\Pi}_m$ directly from $\widetilde{\Pi}_m$, avoiding the requirement on any knowledge of $\widetilde{C}$. At this point we can determine a reduced-order model using only input-output data. 

\begin{theorem} \label{theorem:ROMSDataIO}
Consider system $\Sigma_{fom}$ and a set $\mathcal{I} \subset \CC \setminus (\lambda(A) \cup \{0\})$. Let $S \in \RR^{\nu \times \nu}$ be a non-derogatory matrix with characteristic polynomial $\prod_{i=1}^{\overline{\nu}}(z-z_i)^{m_{z_i}}$, where $\sum_{i=1}^{\overline{\nu}} m_{z_i} = \nu $,
and let $L \in \RR^{1 \times \nu}$ be such that the pair $(S, L)$ is observable. Let $G \in \RR^{\nu}$ be such that $\lambda(S) \cap \lambda(S- GL) = \emptyset$, and 
let $e_n \in \RR^{1 \times 2n}$ be a vector with all zero elements apart for the element in position $n$ which is equal to $1$. Suppose that the data from $\Sigma_{fom}$ is such that
\begin{equation} \label{eq:rankXUReduction-tilde}
    \rank \left(\begin{bmatrix}
        U_{-} \\ \widetilde{X}_{-}
    \end{bmatrix}\right)
     = 2n + 1,
\end{equation} 
holds. Then, the model 
\begin{subequations} \label{eq:ROM_SDataIO}
    \begin{align} 
    \xi(k+1) &= (S - GL)\xi(k) + G u(k), \\
    \psi(k) &= e_n \widetilde{X}_{-} G_{\widetilde{\Pi}_m} S \xi(k),  \label{eq:ROM_SDataIO2}
\end{align}
\end{subequations}
matches the moments of $\Sigma_{fom}$ at $\mathcal{I}$, with $G_{\widetilde{\Pi}_m} \in \RR^{T \times \nu}$ any matrix that satisfies
\begin{subequations} \label{eq:dataRepPiReductionIO} 
  \begin{empheq}[left=\empheqlbrace]{align}
    \widetilde{X}_{+}  G_{\widetilde{\Pi}_m} &= \widetilde{X}_{-} G_{\widetilde{\Pi}_m} S,  \\
    U_{-}  G_{\widetilde{\Pi}_m} &= L. 
  \end{empheq}
\end{subequations}    
\end{theorem}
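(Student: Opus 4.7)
The plan is to show that the candidate model \eqref{eq:ROM_SDataIO} belongs to the moment-matching family \eqref{eq:ROM_S}, for which the moment-matching property is already known. Since the state matrix $S-GL$ and input matrix $G$ of \eqref{eq:ROM_SDataIO} already coincide with those of \eqref{eq:ROM_S}, and the condition $\lambda(S) \cap \lambda(S-GL) = \emptyset$ is assumed, the entire burden of the proof reduces to verifying that the output matrix appearing in \eqref{eq:ROM_SDataIO2}, namely $e_n \widetilde{X}_{-} G_{\widetilde{\Pi}_m} S$, equals the true moment matrix $C\Pi_m$.

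To establish this equality, I would chain three results already proved in the excerpt. First, I would apply Lemma~\ref{lemma:computePI} to the cascade $\Sigma_{sg} \rightarrow \widetilde{\Sigma}_{fom}$, with the identifications $\mathbf{A_1}=S$, $\mathbf{C_1}=L$, $\mathbf{A_2}=\widetilde{A}$, $\mathbf{B_2}=\widetilde{B}$, $\mathbf{\Theta}=\widetilde{\Pi}_m$. The rank hypothesis \eqref{eq:rankXU} of the lemma is precisely the assumed \eqref{eq:rankXUReduction-tilde}, and the spectral disjointness is discussed below. The lemma then yields the parametrisation $\widetilde{\Pi}_m = \widetilde{X}_{-} G_{\widetilde{\Pi}_m}$ of the unique solution of \eqref{eq:SylEqTilde}. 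Second, Lemma~\ref{lemma:invarianceofMoments} gives $C\Pi_m = \widetilde{C}\widetilde{\Pi}_m$. Third, Lemma~\ref{lemma:PitildeStructure} with the choice $j=n$ gives $\widetilde{C}\widetilde{\Pi}_m = \widetilde{\Pi}_{m,n} S$, and since $e_n$ is the row selector for the $n$-th row of a matrix in $\RR^{2n \times \nu}$, we have $\widetilde{\Pi}_{m,n} = e_n \widetilde{\Pi}_m$. Composing these three identities,
\[
C\Pi_m \;=\; \widetilde{C}\widetilde{\Pi}_m \;=\; e_n \widetilde{\Pi}_m S \;=\; e_n \widetilde{X}_{-} G_{\widetilde{\Pi}_m} S,
\]
which is exactly the output matrix of \eqref{eq:ROM_SDataIO}, concluding the proof.

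The only mildly delicate step, and the one I regard as the main (albeit modest) obstacle, is the verification of the spectral disjointness hypothesis $\lambda(S) \cap \lambda(\widetilde{A}) = \emptyset$ required to invoke Lemma~\ref{lemma:computePI} on the non-minimal realisation $\widetilde{\Sigma}_{fom}$. This is precisely where the assumption $0 \notin \mathcal{I}$ enters: from the block-triangular argument already carried out in the proof of Lemma~\ref{lemma:invarianceofMoments}, one has $\lambda(\widetilde{A}) = \lambda(A) \cup \{0\}$, and because $\lambda(S)$ is contained in $\mathcal{I} \subset \CC \setminus (\lambda(A) \cup \{0\})$, disjointness follows. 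Once this observation is in place, the remainder of the proof is a straightforward composition of Lemmas~\ref{lemma:computePI}, \ref{lemma:invarianceofMoments}, and \ref{lemma:PitildeStructure}, with no further calculation required.
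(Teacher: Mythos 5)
Your proposal is correct and follows essentially the same route as the paper's proof: apply Lemma~\ref{lemma:computePI} to \eqref{eq:SylEqTilde} to get $\widetilde{\Pi}_m = \widetilde{X}_{-} G_{\widetilde{\Pi}_m}$, then chain Lemma~\ref{lemma:PitildeStructure} (with $j=n$) and Lemma~\ref{lemma:invarianceofMoments} to identify the output matrix with $C\Pi_m$. Your explicit verification that $\lambda(\widetilde{A}) \cap \lambda(S) = \emptyset$ (via $\lambda(\widetilde{A}) = \lambda(A) \cup \{0\}$) is a small but welcome addition of rigor that the paper leaves implicit in Lemma~\ref{lemma:invarianceofMoments}.
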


\begin{proof}
Observe first that the term $\widetilde{X}_{-} G_{\widetilde{\Pi}_m}$ in equation~\eqref{eq:ROM_SDataIO2} evaluates to $\widetilde{\Pi}_m$. This follows by applying Lemma~\ref{lemma:computePI} to \eqref{eq:SylEqTilde}, taking $\mathcal{A}_1 = S$, $\mathcal{C}_1 = L$, $\mathcal{A}_2 = \widetilde{A}$, $\mathcal{B}_2 = \widetilde{B}$ and $\mathbf{\Theta} = \widetilde{\Pi}_m$. 
Next, by pre-multiplying by the vector $e_n$, we extract the $n$-th row of $\widetilde{\Pi}_m$. Post-multiplying this by $S$ yields $\widetilde{C}\widetilde{\Pi}_m$, as established in Lemma~\ref{lemma:PitildeStructure}, and in turn $C \Pi_m$ by Lemma \ref{lemma:invarianceofMoments}. Therefore, model~\eqref{eq:ROM_SDataIO} is the reduced-order model by moment matching given in \eqref{eq:ROM_S}. This completes the proof.
\end{proof}

\begin{remark}
An alternative line of research in data-driven moment matching is based on the physical embodiment of the signal generator $\Sigma_{sg}$, see, \eg, \cite{scarciotti2017data, mao2024data, mao2024data-Simu, zhang2025data}. These methods approximate the ``moments'' $C\Pi_m$ from the near-steady-state data from the cascade $\Sigma_{sg} \rightarrow \Sigma_{fom}$, as noted in Remark~\ref{remark:MOR-SG-interpreration}. We highlight that the approach proposed in this section is exact rather than approximate. Moreover, our approach presents several additional advantages: (i) it does not require to run the experiment $\Sigma_{sg} \rightarrow \Sigma_{fom}$, thereby allowing data to be collected during normal operation rather than the extraordinary operation of actually driving the system with a signal generator; (ii) no stability assumptions are required for the system, since our method does not rely on steady-state responses; (iii) interpolation points are not restricted to being on the unit circle; (iv) high-order moments can be treated. Note that item (iv) (and (iii) when the interpolation points are outside the unit circle) remains an unsolved challenge for steady-state-based approaches because the corresponding signal generators would produce diverging inputs (and, thus, the steady state upon which the methods are based might not exist).
\end{remark}

The results of this section are summarised in a computational procedure outlined in Algorithm~\ref{alg:dataDrivenModelReduction}.

\begin{algorithm}
  \caption{Data-Driven Reduction by Moment Matching}
  \label{alg:dataDrivenModelReduction}
  \begin{algorithmic}[1]
    \STATE \textbf{Input:} Collected samples of $u$ and $y$ from $\Sigma_{fom}$, $\mathcal{I}$ of cardinality less than $n$, prescribed $G$ to enforce additional properties (\eg, stability) 
    \STATE \textbf{\# \, Assemble data matrices}
    \STATE \hspace{1em} Construct matrices $S$ and $L$
    \STATE \hspace{1em} Construct data matrices $(U_-, \widetilde{X}_-, \widetilde{X}_+)$
    \STATE \textbf{\# \, Solve Sylvester equation from data}
    \STATE \hspace{1em} Solve the LP \eqref{eq:dataRepPiReductionIO} to obtain $G_{\widetilde{\Pi}_m}$
    \STATE \hspace{1em} Obtain $\widetilde{\Pi}_m = \widetilde{X}_{-} G_{\widetilde{\Pi}_m}$ 
    \STATE  \textbf{\# \, Recover the ``moments'' of $\Sigma_{fom}$}
    \STATE \hspace{1em} Obtain $C\Pi_m = e_n \widetilde{\Pi}_m S$
    \STATE \textbf{Output:} Reduced-order model \eqref{eq:ROM_SDataIO} that interpolates the moments of $\Sigma_{fom}$ 
    at $\mathcal{I}$
  \end{algorithmic}
\end{algorithm}

\subsection{Moment Mismatch under Noise}
Suppose now that the available input and output measurements are subject to (additive) noise $\Delta u$ and $\Delta y$, respectively. Moreover, we allow for the dynamics of the linear system to be affected by a process noise $d$. Such process noise could, for instance, arise when the higher order, unknown model is actually nonlinear. In such a setting,
$d$ may be intended as a container for the nonlinearities. This leads to the reduced-order model
\begin{subequations} \label{eq:ROM_SDataIONoise}
    \begin{align} 
    \xi(k+1) &= (S - GL)\xi(k) + G u(k), \\
    \psi(k) &= e_n \bar{X}_{-} \hat{G}_{{\Pi}_m} S \xi(k),  \label{eq:ROM_SDataIO2Noise}
\end{align}
\end{subequations}
with $\hat{G}_{{\Pi}_m} \in \RR^{T \times \nu}$ any matrix that satisfies
\begin{subequations} \label{eq:dataRepPiReductionIONoise} 
  \begin{empheq}[left=\empheqlbrace]{align}
    \bar{X}_{+}  \hat{G}_{{\Pi}_m} &= \bar{X}_{-} \hat{G}_{{\Pi}_m} S,  \\
    \bar{U}_{-}  \hat{G}_{{\Pi}_m} &= L, 
  \end{empheq}
\end{subequations}    
where $\bar{X}_{-} = \widetilde{X}_{-} + \Delta \widetilde{X}_{-}$ and $\bar{X}_{+} = \widetilde{X}_{+} + \Delta \widetilde{X}_{+}$ are the noisy measurement matrices. It is expected that \eqref{eq:ROM_SDataIONoise} cannot match precisely the moments of \eqref{eq:sysReduction} at $\mathcal{I}$ due to the influence of noise. Therefore, in the next result we obtain a bound on such moment mismatch by applying Theorem~\ref{lemma:errorBound}. Before doing so, we note that the corresponding measurement noise for the aggregated state is expressed in terms of $\Delta y$ and $\Delta u$ as
\begin{equation*}
\begin{gathered}
       \Delta \widetilde{x}(k) =\col \bigl(\Delta y(k-n), \Delta y(k-n+1), \ldots, \Delta y(k-1) \\
\Delta u(k-n), \Delta u(k-n+1), \ldots, \Delta u(k-1) \bigr),
\end{gathered}
\end{equation*}
based on which we define the ``encapsulated'' noise signal with respect to $\widetilde{\Sigma}_{fom}$ as
\begin{equation*}
    \widetilde{r}(k) := \widetilde{A} \Delta \widetilde{x}(k) - \Delta \widetilde{x}(k+1) + \widetilde{B} \Delta u(k) - \widetilde{d}(k),
\end{equation*}
and its corresponding data matrix version
$$
\widetilde{R}_{-} := \widetilde{A} \Delta \widetilde{X}_{-} - \Delta \widetilde{X}_{+} + \widetilde{B} \Delta U_- - \widetilde{D}_-.
$$

\begin{theorem}
Consider the reduced-order model \eqref{eq:ROM_SDataIONoise} and the associated transfer function $\widehat{W}(z)$. Let $\mathcal{T}$ be the unique matrix such that $S = \mathcal{T}^{-1} S_{J} \mathcal{T}$ and $L = L_{J} \mathcal{T}$, with $S_{J}$ the Jordan canonical form of $S$ and $L_{J}$ the corresponding Jordan block selector row. 
Then
\begin{equation} \label{eq:momentMismatchBound}
\begin{split}
    |\eta_j(z_i) - \widehat{\eta}_j(z_i) |   
    \leq   \frac{\|S\|_2 \|\mathcal{T}^{-1}\|_2 \|\hat{G}_{{\Pi}_m}\|_F}{\sigma_{min}(I \otimes \widetilde{A} - S^\top \otimes I)} \| R_{-} \|_2 
\end{split}
\end{equation}
holds for all $z_i \in \mathcal{I}$ and $j< m_{z_i}$, where $\widehat{\eta}_j(z_i)$ is the $j$-moment at $z_i$ corresponding to $\widehat{W}(z)$.

\end{theorem}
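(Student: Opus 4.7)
The plan is to reduce the scalar moment mismatch to a matrix-level discrepancy between $C\Pi_m$ and its noisy surrogate $\widehat{C\Pi_m} := e_n \bar{X}_- \hat{G}_{\Pi_m} S$, and then bound that discrepancy through Theorem~\ref{lemma:errorBound} applied to the cascade $\Sigma_{sg} \rightarrow \widetilde{\Sigma}_{fom}$. First, by combining Lemma~\ref{lemma:invarianceofMoments} and Lemma~\ref{lemma:PitildeStructure} (with $j=n$), the true moments matrix admits the representation $C\Pi_m = \widetilde{C}\widetilde{\Pi}_m = e_n \widetilde{\Pi}_m S$, so that the matrix-level error satisfies
$$
C\Pi_m - \widehat{C\Pi_m} = e_n (\widetilde{\Pi}_m - \bar{X}_- \hat{G}_{\Pi_m}) S = e_n \Delta \widetilde{\Pi}_m S,
$$
with $\Delta \widetilde{\Pi}_m := \widetilde{\Pi}_m - \bar{X}_- \hat{G}_{\Pi_m}$. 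This collapses the problem to bounding $\|\Delta \widetilde{\Pi}_m\|_F$.

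Second, I would invoke Theorem~\ref{lemma:errorBound} for the Sylvester equation \eqref{eq:SylEqTilde}, taking $\mathbf{A_1} = S$, $\mathbf{A_2} = \widetilde{A}$, $\mathbf{B_2} = \widetilde{B}$, $\mathbf{C_1} = L$. Since $S$ is user-designed and therefore exactly known, we have $\Delta \mathbf{A_1} = \mathbf{0}$, which eliminates the second term in the bound \eqref{eq:boundPi} and yields
$$
\|\Delta \widetilde{\Pi}_m\|_F \leq \frac{\|\hat{G}_{\Pi_m}\|_F \|\widetilde{R}_-\|_2}{\sigma_{\min}(I \otimes \widetilde{A} - S^\top \otimes I)}.
$$

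Third, I would extract individual scalar moments from $C\Pi_m$ via the Jordan transformation. Because $(S,L)$ is observable and $S = \mathcal{T}^{-1} S_J \mathcal{T}$, $L = L_J \mathcal{T}$, the matrix $\Pi_m^{J} := \Pi_m \mathcal{T}^{-1}$ solves the Sylvester equation in the Jordan basis, and standard moment-matching theory (see \cite{astolfi2010model}) identifies the entries of the $i$-th Jordan-block columns of $C\Pi_m^{J} = C\Pi_m \mathcal{T}^{-1}$ precisely with the moments $\eta_0(z_i), \ldots, \eta_{m_{z_i}-1}(z_i)$. Hence, for every $z_i \in \mathcal{I}$ and $j < m_{z_i}$,
$$
|\eta_j(z_i) - \widehat{\eta}_j(z_i)| \leq \|(C\Pi_m - \widehat{C\Pi_m}) \mathcal{T}^{-1}\|_F.
$$

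Fourth, I would chain submultiplicative inequalities with $\|e_n\|_2 = 1$ to obtain
$$
\|e_n \Delta \widetilde{\Pi}_m S \mathcal{T}^{-1}\|_F \leq \|\Delta \widetilde{\Pi}_m\|_F \|S\|_2 \|\mathcal{T}^{-1}\|_2,
$$
and substitute the bound on $\|\Delta \widetilde{\Pi}_m\|_F$ from step two, yielding \eqref{eq:momentMismatchBound}. The main obstacle I expect is the third step: correctly identifying how scalar moments (including those of order higher than zero, coming from nontrivial Jordan blocks) are read off from the columns of $C\Pi_m \mathcal{T}^{-1}$, and checking that the entrywise bound can be cleanly dominated by the Frobenius norm without picking up additional dimensional factors. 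Everything else is a routine propagation of norms.
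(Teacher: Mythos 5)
Your proposal is correct and follows essentially the same route as the paper's proof: reduce the scalar mismatch to the Frobenius norm of $e_n(\widetilde{\Pi}_m - \widehat{\Pi}_m)S\mathcal{T}^{-1}$ via Lemmas~\ref{lemma:invarianceofMoments} and \ref{lemma:PitildeStructure} together with the Jordan-basis identification of moments from \cite{astolfi2010model}, then chain mixed-norm inequalities and invoke Theorem~\ref{lemma:errorBound} with $\Delta \mathbf{A_1} = \mathbf{0}$. The only cosmetic difference is the order of the steps (the paper starts from the entrywise-to-Frobenius domination and works toward the Sylvester error bound, whereas you assemble the pieces in the reverse order), and your writing $\widetilde{R}_-$ for the encapsulated noise matrix is in fact the notation the paper defines just before the theorem.
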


\begin{proof}
In what follows we prove that $\max_{z_i \in \mathcal{I}}  |\eta_j(z_i) - \widehat{\eta}_j(z_i) |$ is subject to the bound in \eqref{eq:momentMismatchBound}. To this end, define $\Delta \eta_j(z_i) := \eta_j(z_i) - \widehat{\eta}_j(z_i)$, and note that the expression $\max_{z_i \in \mathcal{I}}  | \eta_j(z_i) - \widehat{\eta}_j(z_i) |$ is equivalent to the norm
\begin{equation*}
         \| \vecop (\begin{bmatrix}
             \Delta \eta_0(z_1) \!\!&\!\! \dots \!\!&\!\! \Delta \eta_{m_{z_0}-1}(z_1) \!\!&\!\! \dots \!\!&\!\! \Delta \eta_{m_{z_{\overline{\nu}}}-1}(z_{\overline{\nu}})
         \end{bmatrix}) \|_{\infty},
\end{equation*}
which is upper bounded by 
\begin{equation} \label{eq:momentMismatchF}
         \| \begin{bmatrix}
             \Delta \eta_0(z_1) \!\!&\!\! \dots \!\!&\!\! \Delta \eta_{m_{z_0}-1}(z_1) \!\!&\!\! \dots \!\!&\!\! \Delta \eta_{m_{z_{\overline{\nu}}}-1}(z_{\overline{\nu}})
         \end{bmatrix} \|_{F}.
\end{equation}
We recall that \cite{astolfi2010model} established that the ordered (with respect to $S_j$) collection of moments listed in a row vector multiplied to the right by $\mathcal{T}$ is equal to
the matrix $C\Pi_m$. Then exploiting the identity between $C\Pi_m$ and $\widetilde{C} \widetilde{\Pi}_m$ established in Lemma~\ref{lemma:invarianceofMoments}, yields that \eqref{eq:momentMismatchF} is equal to
\begin{equation*} 
         \| (\widetilde{C} \widetilde{\Pi}_m - e_n \bar{X}_{-} \hat{G}_{{\Pi}_m} S) \mathcal{T}^{-1} \|_{F} = \| e_n(\widetilde{\Pi}_{m} - \widehat{\Pi}_{m}) S \mathcal{T}^{-1} \|_{F},
\end{equation*}
where, by Lemma~\ref{lemma:PitildeStructure}, $e_n \widehat{\Pi}_m S$, with $\widehat{\Pi}_m := \bar{X}_{-} \hat{G}_{{\Pi}_m}$, are the ``noisy'' moments defined via \eqref{eq:dataRepPiReductionIONoise}, while the real moments satisfy $\widetilde{C} \widetilde{\Pi}_m = e_n \widetilde{\Pi}_{m} S$ by the same lemma. 
Applying the mixed-norm inequality\footnote{Given matrices \(\mathcal{A}\) and \(\mathcal{B}\) of compatible dimensions, the inequalities \(\|\mathcal{A} \mathcal{B}\|_F \leq \|\mathcal{A}\|_2 \|\mathcal{B}\|_F\) and \(\|\mathcal{A} \mathcal{B}\|_F \leq \|\mathcal{A}\|_F \|\mathcal{B}\|_2\) hold.}, the above expression is upper-bounded by
\begin{equation*} 
         \| e_n(\widetilde{\Pi}_{m} - \widehat{\Pi}_{m}) \|_{F} \|S\|_2 \|\mathcal{T}^{-1}\|_2,
\end{equation*}
and, by the row-selection role of $e_n$, further upper bounded by
\begin{equation} \label{eq:momentMismatchBound1}
         \| \widetilde{\Pi}_{m} - \widehat{\Pi}_{m} \|_{F} \|S\|_2 \|\mathcal{T}^{-1}\|_2.
\end{equation}
Note now that the bound
\begin{equation} \label{eq:momentMismatchBound2}
    \begin{split}
    \|\widetilde{\Pi}_{m} - \widehat{\Pi}_{m}\|_{F}  
    \leq   \frac{\|\hat{G}_{\Pi_m}\|_F}{\sigma_{min}(I \otimes \widetilde{A} - S^\top \otimes I)} \| R_{-} \|_2, 
\end{split}
\end{equation}
follows directly from Theorem~\ref{lemma:errorBound} for $\Delta \mathbf{A_1} = \mathbf{0}$ (as $S$ is constructed from the user's selected interpolation points, hence it is known exactly). Last, combining \eqref{eq:momentMismatchBound1} and \eqref{eq:momentMismatchBound2} yields the claimed bound in \eqref{eq:momentMismatchBound}, which completes the proof.
\end{proof}


\subsection{Numerical Example}
Consider the one-dimensional heat equation model \cite{morChaV92}, a widely used benchmark for model order reduction. This heat equation describes the temperature field on a thin rod, which is spatially discretised into $201$ segments (leading to $n=200$), under an external heating (\ie, the input) at $1/3$ of the length. The output is the temperature recorded at $2/3$ of the length. To illustrate the use of the proposed method outlined in Theorem~\ref{theorem:ROMSDataIO} (and the corresponding Algorithm~\ref{alg:dataDrivenModelReduction}), we (temporally) discretise the given continuous-time model using a sampling time $t_s = 0.1$~s. 

Consider a set of frequencies of interests, namely $\omega_1 = 0.15$~rad/s, $\omega_2= 4.83$~rad/s and $\omega_3 = 31.62$~rad/s. Suppose that one wants to interpolate the $0$-moments at all these frequencies, and interpolate the $1$-moment at frequency $\omega_2$. This selection leads to the set of complex points
$\mathcal{I}_z := \{e^{\iota 0.15 t_s}, \; e^{\iota 4.83 t_s}, \; e^{\iota 31.62 t_s} \}$ associated with multiplicities $[1, 2, 1]$,
and their complex conjugates $\mathcal{I}_z^*$. Hence, the interpolation set $\mathcal{I} := \mathcal{I}_{z} \cup \mathcal{I}_z^*$ consists of $8$ interpolation points of modulus one, constituting $4$ complex conjugate pairs. This allows the matrix $S$ to be constructed in the real Jordan form. The vector $L$ is constructed such that $(S, L)$ is observable. The matrix $G$ is selected such that the eigenvalues of the reduced-order model \eqref{eq:ROM_SDataIO} are placed inside the open unit circle (which is always possible by the observability of the pair $(S, L)$). We collect an input-output data sequence $(u, y)$ of length $T = 803$. We then solve the LP in \eqref{eq:dataRepPiReductionIO} using MATLAB CVX \cite{cvx} with MOSEK \cite{aps2019mosek}, ultimately leading to a reduced-order model of order $8$. 

Table~\ref{tab:momentMismatch} reports the moments of both the full-order and reduced-order models associated with the set $\mathcal{I}_z$. The moments corresponding to the conjugate set $\mathcal{I}_z^*$ are omitted, as they yield identical results due to complex conjugate symmetry. As shown in the table, the reduced-order model matches the moments of the original system with high numerical accuracy, thus confirming the established interpolation property.

Fig.~\ref{fig:MOR_bodes} displays the Bode plots of the original full-order system (solid/blue line) and the reduced-order model (dashed/green line). The interpolation points associated with $\mathcal{I}$ are represented by red triangles ($0$-moments, \ie{}, with multiplicity of $1$) and magenta squares (both $0$- and $1$-moments, \ie{}, with multiplicity of $2$), respectively, showing consistency with Table~\ref{tab:momentMismatch}. 
Despite having a dimension equal to only 4\% of the original system, the reduced model captures the frequency-domain response of the original system with excellent fidelity across a broad frequency range, rather than just in a neighbourhood of the interpolation points. This demonstrates the effectiveness of the proposed model reduction approach.


\begin{table}[!t]
  \caption{Moments at the Selected Interpolation Points}          
  \centering
  \begin{tabular}{c r r}
    \toprule
      \textbf{Moment} & \textbf{Full-order model} & \textbf{Reduced-order model} \\
    \midrule
      $\eta_0(e^{\iota 0.15 t_s})$   & $(5.02 - 28.64\iota)10^{-3}$ & $(5.02 - 28.64\iota)10^{-3}$ \\
      $\eta_0(e^{\iota 4.83 t_s})$   & $(6.72 + 1.05\iota) 10^{-5}$ & $(6.72 + 1.04\iota) 10^{-5}$ \\
      $\eta_1(e^{\iota 4.83 t_s})$   & $(2.32 + 5.36\iota) 10^{-4}$ & $(2.32 + 5.37\iota) 10^{-4}$ \\
      $\eta_0(e^{\iota 31.62 t_s})$  & $(-1.13 + 0.08\iota) 10^{-8}$ & $(-1.13 + 0.09\iota) 10^{-8}$ \\
    \bottomrule
  \end{tabular}
  \label{tab:momentMismatch}
\end{table}

\begin{figure}[ht]
    \centering
    \includegraphics[width=0.95\columnwidth]{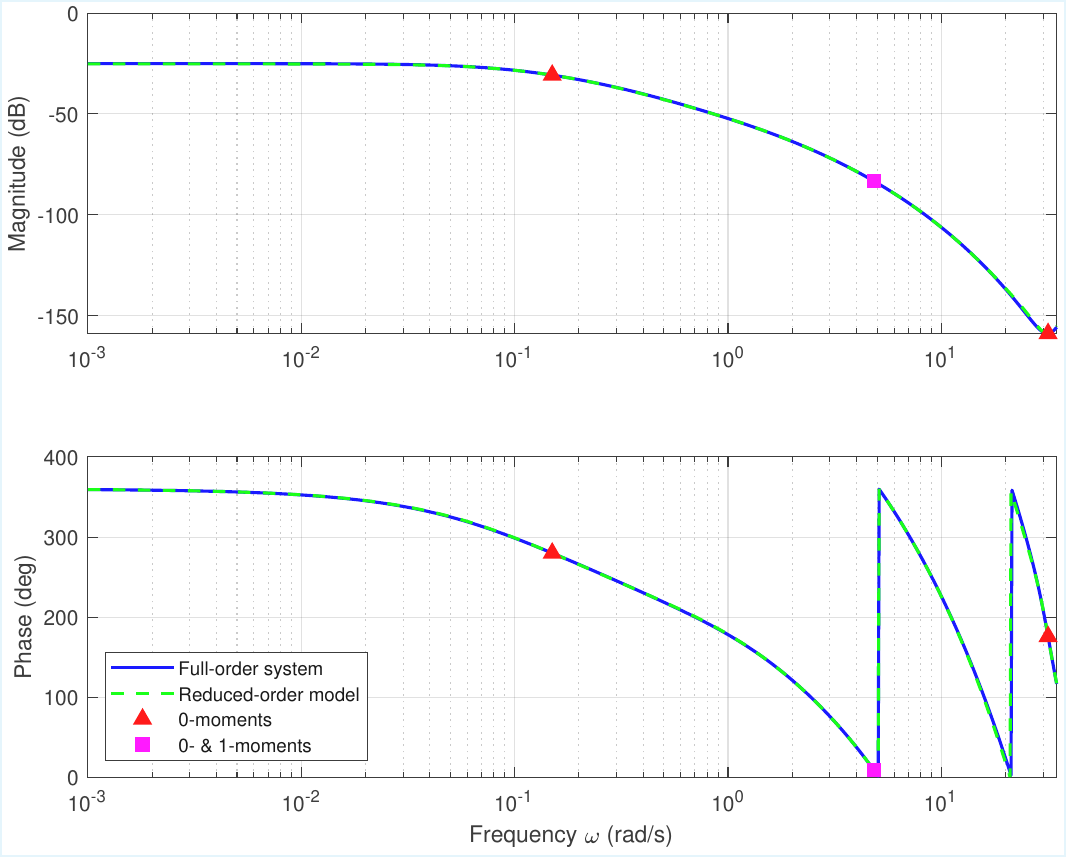}
    \caption{Bode plot of the full-order system 
    (solid/blue line) and of the reduced-order model (dashed/green line). The red triangles denote the points where $0$-moments are interpolated, and the magenta square denotes the point where both $0$-moments and $1$-moments are interpolated.}
    \label{fig:MOR_bodes}
\end{figure}

\section{Output Regulation} \label{sec:outputRegulation}
In this section, we show how data-driven output regulation problems can be recast in the proposed interconnection framework.  
Two data-driven problems are formulated. The solution of the first (presented in Appendix~\ref{sec:regulationFullInfo}) hinges upon the so-called regulator equations, whereas that of the second is based on the internal model principle. 

\subsection{Problem Formulation} \label{sec:output-regulation-formulation}
Consider the interconnected system
\begin{subequations}  \label{eq:sysRegulation}
\begin{align} 
    \omega(k+1) &= S \omega(k), \label{eq:exosystem} \\
    x(k+1) &= Ax(k) + Bu(k) + E\omega(k), \label{eq:sysXRegulation} \\
    e(k) &= Cx(k) + Du(k) + F\omega(k), \label{eq:sysERegulation}
\end{align}  
\end{subequations} 
where \( x(k) \in \mathbb{R}^{n} \) denotes the state of the plant, \( u(k) \in \mathbb{R}^{m} \) is the control input, \( e(k) \in \mathbb{R}^p \) is the regulation error, and \( \omega(k) = \col(r(k), d(k)) \in \mathbb{R}^{\nu} \) is the \textit{exogenous signal}, which encapsulates both reference trajectories $r$ to be tracked and disturbances $d$ to be rejected. The signal \( \omega \) is generated by the so-called \emph{exosystem} \eqref{eq:exosystem}, where the matrix $S$ is assumed known and to have no eigenvalues strictly inside the unit circle of the complex plane\footnote{The latter assumption is standard in order to ensure that the components of $\omega$ do not trivially converge to 0, see also \cite[Assumption 1.6]{huang2004nonlinear}.}. For the sake of notational convenience, we use $\Sigma_{exo} := (S)$ to denote the exosystem \eqref{eq:exosystem}, $\Sigma_{plant} := (A,B,C,D,E,F)$ to denote the plant \eqref{eq:sysXRegulation}-\eqref{eq:sysERegulation}, and $\Sigma_{exo} \rightarrow \Sigma_{plant}$ to denote the cascade system \eqref{eq:sysRegulation}.

The output regulation problem (see \cite{huang2004nonlinear}) consists in designing a regulator $u$ such that the following two objectives are achieved.
\begin{itemize}
    \item[(O1)] The cascade system $\Sigma_{exo} \rightarrow \Sigma_{plant}$ in closed-loop with the regulator under $\omega \equiv 0$ is exponentially stable.
    \item[(O2)]{The cascade system $\Sigma_{exo} \rightarrow \Sigma_{plant}$ in closed-loop with the regulator satisfies
    $
        \lim_{k \to \infty} e(k) = 0, 
    $
    for any initial conditions $x(0)$ and $\omega(0)$. } 
\end{itemize}
Bearing these two objectives in mind, we can formulate the following two data-driven (\ie, no knowledge on the matrices associated with the tuple $\Sigma_{plant}$) output regulation problems, depending on what data is available. 

\textit{\textbf{Problem 1} (Data-driven Static State-feedback Problem}): 
Design a regulator of the form
\begin{equation*}
    u = u_{stat}(x, \omega),
\end{equation*}
using data from the cascade system \eqref{eq:sysRegulation}, such that objectives (O1) and (O2) are achieved. \smallskip

\textit{\textbf{Problem 2} (Data-driven Dynamic State-feedback Problem}):  
Design a regulator of the form
\begin{equation*}
    u = u_{dyna}(x, \xi),
\end{equation*}
where $\xi$ is an auxiliary signal produced by the so-called \textit{internal model} unit \cite{francis1976internal}
\begin{equation} \label{eq:internalModel}
    \Sigma_{imo}:\,\, \xi(k+1) = \Phi\xi(k) + \Psi e(k), 
\end{equation}
using data from the cascade system \eqref{eq:sysRegulation},
such that objectives (O1) and (O2) are achieved. \smallskip

From standard output regulation theory, it is known that 
the matrices $\Phi$ and $\Psi$ are given by
\begin{equation} \label{eq:p-internal-model}
    \Phi = \bdiag \underbrace{(\phi, \dots, \phi)}_{p \text{ times}}, \quad \Psi = \bdiag \underbrace{(\psi, \dots, \psi)}_{p \text{ times}},
\end{equation}
where $\phi \in \RR^{q \times q}$ is a matrix whose characteristic polynomial coincides with the minimal polynomial of $S$, and $\psi \in \RR^{q}$ is a column vector such that $(\phi, \psi)$ is controllable. Thus, the data-driven problems consist in determining $u_{stat}$ and $u_{dyna}$.


To ensure the existence of regulators that solve these problems, we introduce the following standard assumptions.  
\begin{assumption} \label{ass:ABstabilisability}
    The pair $(A, B)$ is stabilisable. 
\end{assumption}

\begin{assumption}[Non-resonance condition] \label{ass:nonResonance}
For all $\bar{\lambda} \in \lambda(S)$
\begin{equation*}
\rank \left(
\begin{bmatrix}
    A - \bar{\lambda} I & B \\
    C & D
\end{bmatrix}
\right) = n + p. 
\end{equation*}
\end{assumption}

It is well-kwown that under Assumptions~\ref{ass:ABstabilisability} and \ref{ass:nonResonance}, the model-based (static or dynamic) regulator problem is solvable.

For Problem 1, the design of the regulator is concerned with the cascade $\Sigma_{exo} \rightarrow \Sigma_{plant}$ and the solution of the problem is associated with the so-called regulator equations \cite[Section 1.5]{huang2004nonlinear}, which includes a Sylvester equation that characterises the invariant subspace in $\Sigma_{exo} \rightarrow \Sigma_{plant}$ and another equation that zeros the tracking error on this subspace. Such regulator equations can be solved (purely from data) by further developing the treatment presented in \cite[Section 3]{mao2025partOne}. However, we note that a solution to Problem~1 has already been presented
in \cite{trentelman2021informativity}, although from a different angle. Despite the different derivations, the final formulas obtainable with our framework are identical to those obtained in \cite{trentelman2021informativity}. Thus, we do not present this result in the main body of paper. However, for the sake of completeness and as a demonstration of the flexibility of the framework, we include the derivation from our perspective in Appendix~\ref{sec:regulationFullInfo}.

Thus, in the following we focus on Problem 2.
In the context of Problem 2, the full interconnection is characterised by the cascade $\Sigma_{exo} \rightarrow \Sigma_{plant} \rightarrow \Sigma_{imo}$. We point out that the cascade of primary interest is the latter interconnection, namely $\Sigma_{plant} \rightarrow \Sigma_{imo}$, since it is well established in the literature (see \eg{} \cite{huang2004nonlinear}) that any $u_{dyna}$ that stabilises $\Sigma_{plant} \rightarrow \Sigma_{imo}$ under $\omega \equiv 0$ readily solves this problem. However, it should be noted that the first cascade cannot be ignored in a data-driven framework. From a practical data-driven viewpoint, the stabilisation design of $\Sigma_{plant} \rightarrow \Sigma_{imo}$ still has to be performed in the presence of the unknown exogenous signal $\omega$ (\ie, under the influence of $\Sigma_{exo}$), posing theoretical challenges. In what follows, we tackle this problem.

\subsection{The Data-driven Dynamic State-feedback Problem} \label{sec:regulationRobust}
As just noted, for this problem we have to search for $u_{dyna}$ to stabilise the cascade $\Sigma_{plant} \rightarrow \Sigma_{imo}$ while using sampled data from the real cascade $\Sigma_{exo} \rightarrow \Sigma_{plant} \rightarrow \Sigma_{imo}$. This can be regarded as an extended setting for the recalled cascade stabilisation pipeline. Therefore, we consider $u_{dyna}$ in the form inherited from the forwarding method (see Section \ref{sec:data-driven-cascade-stabilisation}) as 
\begin{equation} \label{eq:regulatorRobust}
    u_{dyna}(k) := K_x x(k) + K_\zeta (\xi(k) - \Upsilon_r x(k)),
\end{equation}
where $\Upsilon_r$ is a transformation matrix to be defined shortly, and $K_x$ and $K_\zeta$ are gains to be determined. In this section, we conceptually follow steps (S1), (S2) and (S3) of the cascade stabilisation procedure recalled in Section~\ref{sec:data-driven-cascade-stabilisation}.

In order to construct the internal model $\Sigma_{imo}$, we assume that we know $\Sigma_{exo}$, \ie, the matrix $S$. This assumption is ubiquitous\footnote{In Appendix~\ref{sec:regulationFullInfo} we show that for the static problem, this assumption can be removed via a generalisation of Lemma~\ref{lemma:computePI}.} in both model-based and data-driven output regulation frameworks (sometimes, the stronger assumption of knowing $\omega$ is used). We stress that we do not assume knowledge of the initial condition $\omega(0)$, indicating that the exogenous signal $\omega(k)$ cannot be computed or measured for any $k$. 

We first mirror the action of step (S1), that is, we design the gain $K_x$ (from data) such that $\Sigma_{plant}$ is pre-stabilised, \ie, $A + B K_x$ is rendered Schur. To this end, we first establish a data-based representation of $\Sigma_{plant}$ in closed-loop with $u(k) = K_x x(k) + v(k)$, where $v(k) := K_\zeta (\xi(k) - \Upsilon_r x(k))$ is the term that is not involved at this stage but will be designed later in step (S3). Note that the dynamics of the closed-loop plant are described by
\begin{subequations} \label{eq:closed-loop-plant-model}
\begin{align} 
    x(k+1) &= (A + B K_x)x(k) + Bv(k) + E\omega(k),  \\
    e(k) &= (C + D K_x) x(k) + Dv(k) + F\omega(k), 
\end{align}  
\end{subequations}
which we denote by $\Sigma_{plant}^\prime$. To determine a data-based representation of \eqref{eq:closed-loop-plant-model} without the need to measure $\omega$
we introduce the following assumption.

\begin{assumption} \label{ass:rankXUWnew}
Let $z^* \in \RR^{\nu}$ be a vector, and let $\mathcal{W} \in \RR^{\nu \times T}$ be the Krylov matrix
\begin{equation} \label{eq:KrylovMatrixW}
    \mathcal{W} := \begin{bmatrix}
        z^* & Sz^* & S^2z^* & \dots & S^{T-1}z^*
    \end{bmatrix}.
\end{equation}
The available data matrices $X_{-}$ and $U_{-}$ are such that there exists $z^*$ such that the following rank condition is satisfied
\begin{equation} \label{eq:rankXUWNew}
    \rank \left( \begin{bmatrix}
         X_{-}  \\ U_{-} \\ \mathcal{W}
    \end{bmatrix} \right)
     = n + m + \nu. 
\end{equation}
\end{assumption}

In this assumption, $\mathcal{W}$ can be interpreted as a \textit{virtual} trajectory of $\Sigma_{exo}$ that starts from any initial condition $z^*$ that is sufficiently excitable. Then, the data-based representation of $\Sigma_{plant}^\prime$ is given as follows.

\begin{lemma} \label{theorem:closedLoopRepRegulationNew}
Consider the cascade $\Sigma_{exo} \rightarrow \Sigma_{plant}$. Suppose that Assumption \ref{ass:rankXUWnew} holds. Let $K_x \in \RR^{m \times n}$ be any matrix. Then, the following equivalent data-based representations
\begin{equation} \label{eq:closedLoopRepRegulationNew}
    A + B K_x = X_{+} G_{K_x}, \quad C+ DK_x = E_{-} G_{K_x},
\end{equation}
hold for any matrix $G_{K_x} \in \RR^{T \times n}$ that satisfies
\begin{equation} \label{eq:closedLoopRepRegulationConditionNew}
    \begin{bmatrix}
        I \\ K_x \\ \textbf{0}
    \end{bmatrix}
    =
    \begin{bmatrix}
         X_{-}  \\ U_{-} \\ \mathcal{W}
    \end{bmatrix}
    G_{K_x}.
\end{equation}
Moreover, $\Sigma_{plant}^\prime$ can be equivalently represented as
\begin{subequations}  \label{eq:data-representation-plant}
\begin{align} 
    x(k+1) &= X_{+} G_{K_x}x(k) + Bv(k) + E\omega(k),  \\
    e(k) &= E_{-} G_{K_x}x(k) + Dv(k) + F\omega(k). 
\end{align}  
\end{subequations}
\end{lemma}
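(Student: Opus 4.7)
The plan is to establish the two data-based identities by direct substitution from the raw data equations, with the key non-trivial step being to transfer the annihilating property from the virtual Krylov matrix $\mathcal{W}$ onto the unknown experimental trajectory of the exosystem. I would begin by noting that under Assumption~\ref{ass:rankXUWnew}, the composite matrix $[X_-;\,U_-;\,\mathcal{W}]$ has full row rank $n+m+\nu$; hence it admits a right inverse and the LME~\eqref{eq:closedLoopRepRegulationConditionNew} is solvable for $G_{K_x}$ for any choice of $K_x$.

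Next, from the data-collecting experiment on the cascade $\Sigma_{exo}\to\Sigma_{plant}$, the collected matrices satisfy the identities
\begin{equation*}
X_+ = AX_- + BU_- + E\,W_-^{\mathrm{exp}}, \quad E_- = CX_- + DU_- + F\,W_-^{\mathrm{exp}},
\end{equation*}
where $W_-^{\mathrm{exp}} := [\omega(0),\, S\omega(0),\,\ldots,\, S^{T-1}\omega(0)]$ is the (unknown to the designer) trajectory of the exosystem generated during data collection. Right-multiplying both identities by $G_{K_x}$ and invoking $X_- G_{K_x} = I$ and $U_- G_{K_x} = K_x$ from \eqref{eq:closedLoopRepRegulationConditionNew} reduces the two target identities \eqref{eq:closedLoopRepRegulationNew} to a single requirement, namely $W_-^{\mathrm{exp}} G_{K_x} = \mathbf{0}$.

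Proving this implication is the main obstacle of the proof. For any column $g$ of $G_{K_x}$, introduce the polynomial $p_g(z):=\sum_{k=0}^{T-1} g_k z^k$. Then $\mathcal{W}\,g = p_g(S)\,z^*$ and $W_-^{\mathrm{exp}}\,g = p_g(S)\,\omega(0)$, so both products are evaluations of the \emph{same} polynomial in $S$ at two different starting vectors. The rank condition in Assumption~\ref{ass:rankXUWnew} forces $\mathcal{W}$ to have rank $\nu$, which is equivalent to $z^*$ being a cyclic vector of $S$. Consequently, the $S$-annihilator of $z^*$ coincides with the minimal polynomial $\mu_S$ of $S$, so $\mathcal{W}\,g = p_g(S)\,z^* = 0$ forces $\mu_S \mid p_g$, hence $p_g(S)=\mathbf{0}$, hence $p_g(S)\,\omega(0)=\mathbf{0}$ for every $\omega(0)\in\RR^{\nu}$. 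Therefore $W_-^{\mathrm{exp}} G_{K_x}=\mathbf{0}$, which yields $X_+ G_{K_x}=A+BK_x$ and $E_- G_{K_x}=C+DK_x$. Substituting these into the closed-loop plant equations \eqref{eq:closed-loop-plant-model} directly produces the data-based representation \eqref{eq:data-representation-plant}, completing the argument.
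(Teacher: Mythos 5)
Your proof is correct and shares the overall skeleton of the paper's argument: solvability of the LME \eqref{eq:closedLoopRepRegulationConditionNew} from the rank condition, the raw data identities $X_+ = AX_- + BU_- + E\Omega_-$ and $E_- = CX_- + DU_- + F\Omega_-$ (with $\Omega_-$ the unknown experimental exosystem trajectory), and a final substitution into \eqref{eq:closed-loop-plant-model}. Where you genuinely diverge is in how the central fact is established. The paper proves the row-space containment $\rowsp(\Omega_-)\subseteq\rowsp(\mathcal{W})$ — equivalently, it constructs a matrix $\mathcal{E}$ with $\mathcal{E}\mathcal{W}=E\Omega_-$ — by observing that both are Krylov matrices of $S$ and that $\omega(0)\in\image(\mathcal{W})$ because $\mathcal{W}$ has full row rank; it then evaluates $A+BK_x$ as $\begin{bmatrix}A & B & \mathcal{E}\end{bmatrix}\col(I,K_x,\mathbf{0})$ through the LME. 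You instead prove the dual containment $\ker(\mathcal{W})\subseteq\ker(\Omega_-)$ directly: identifying each column $g$ of $G_{K_x}$ with the polynomial $p_g$, you note that full row rank of $\mathcal{W}$ makes $z^*$ a cyclic vector of $S$, so $p_g(S)z^*=0$ forces $\mu_S\mid p_g$, hence $p_g(S)=\mathbf{0}$ and $p_g(S)\omega(0)=\mathbf{0}$ for \emph{every} initial condition. The two containments are equivalent (kernel versus row space), so the proofs are logically interchangeable; your version is more self-contained in that it spells out the ``properties of Krylov matrices'' the paper invokes without proof, at the cost of introducing the polynomial/annihilator machinery, while the paper's version produces the explicit factor $\mathcal{E}$ that slots into the $\begin{bmatrix}A & B & \mathcal{E}\end{bmatrix}$ decomposition it reuses elsewhere in the framework.
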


\begin{proof}
We first prove the instrumental fact that there exists a matrix $\mathcal{E} \in \RR^{n \times \nu}$ such that $\mathcal{E} \mathcal{W} = \widetilde{E} \Omega_{-}$ for any $\widetilde{E}$, where $\Omega_{-}$ is the data matrix related to $\omega$. Note that this is equivalent to the statement that the row space of $\Omega_{-}$ is a subset of the row space of $\mathcal{W}$. By recognising that $\Omega_{-}$ and $\mathcal{W}$ are Krylov matrices of the same operator $S$, it follows from the properties of Krylov matrices that the statement is equivalent to that the initial condition $\omega(0)$ belongs to the column space of $\mathcal{W}$, namely
$
\omega(0) \in
\image (\mathcal{W}).
$
This condition is ensured by the full row rank of $\mathcal{W}$ implied by Assumption \ref{ass:rankXUWnew}. Thus, under Assumption \ref{ass:rankXUWnew}, the matrix $\mathcal{E}$ exists. \\
Moreover, by standard linear algebra arguments, Assumption \ref{ass:rankXUWnew} implies that there exists a matrix $G_{K_x}$ such that \eqref{eq:closedLoopRepRegulationConditionNew} holds. Then, it follows directly that
\begin{equation*}
\begin{split}
    A + B K_x &= 
    \begin{bmatrix}
        A & B & \mathcal{E}
    \end{bmatrix}
    \begin{bmatrix}
        I \\ K_x \\ \mathbf{0}
    \end{bmatrix} = 
    \begin{bmatrix}
        A & B & \mathcal{E}
    \end{bmatrix}
    \begin{bmatrix}
         X_{-}  \\ U_{-} \\ \mathcal{W}
    \end{bmatrix}
    G_{K_x} \\
    &= (A X_{-}  + B U_{-} + \mathcal{E} \mathcal{W}) G_{K_x} \\
    &= (A X_{-}  + B U_{-} + E\Omega_{-}) G_{K_x} = X_{+} G_{K_x}, 
\end{split} 
\end{equation*} 
where for the fourth equality we exploited that $\mathcal{E} \mathcal{W} = E\Omega_{-}$ while for the last equality we used \eqref{eq:sysXRegulation}. The argument for the closed-loop output matrix $C +D K_x = E_{-} G_{K_x}$ proceeds analogously,  noticing that under Assumption~\ref{ass:rankXUWnew} there exists $\mathcal{E}'$ such that $\mathcal{E}' \mathcal{W} = F\Omega_{-}$. Finally, the representation given in \eqref{eq:data-representation-plant} follows trivially from \eqref{eq:closed-loop-plant-model} by substituting \eqref{eq:closedLoopRepRegulationNew}.
\end{proof}

\begin{remark}
Lemma~\ref{theorem:closedLoopRepRegulationNew} extends the data-based representation given in \cite[Theorem 2]{de2019formulas} to the case in which the plant is under the influence of an autonomous system in the form of \eqref{eq:exosystem}. The last block row of \eqref{eq:closedLoopRepRegulationConditionNew} is an additional constraint imposed on the parametrization matrix to restrict $G_{K_x}$ to be in the kernel space of $\mathcal{W}$, hence ``hiding'' the impact of the exogenous signal $\omega$ on the representation of $A + B K_x$. This observation has also been noted in \cite[Lemma 2]{hu2025data} for a continuous-time counterpart. 
\end{remark}

Lemma \ref{theorem:closedLoopRepRegulationNew} provides an $\omega$-free representation of the closed-loop plant. This representation is instrumental in freeing the data-driven forwarding pipeline from the need of measuring $\omega$, as we show in the following results.

\begin{lemma} \label{lemma:stabilisationWithWNew}
Consider the cascade $\Sigma_{exo} \rightarrow \Sigma_{plant}$. Suppose that Assumptions \ref{ass:ABstabilisability} and \ref{ass:rankXUWnew} hold. Then any matrix $Q_{K_x} \in \RR^{T \times n}$ that satisfies
\begin{subequations} \label{eq:dataRepConditionSNew} 
  \begin{empheq}[left=\empheqlbrace]{align}
&\begin{bmatrix}
    X_{-} Q_{K_x} & X_{+} Q_{K_x} \\
    Q_{K_x}^\top X_{+}^\top &  X_{-} Q_{K_x} 
\end{bmatrix} \succ 0 \\
& \,\,\mathcal{W} Q_{K_x} = \textbf{0} \label{eq:dataRepConditionSNewb}
  \end{empheq}
\end{subequations}
is such that the feedback gain
\begin{equation} \label{eq:regulationStablizingKNew}
    K_x := U_{-} Q_{K_x} (X_{-} Q_{K_x})^{-1}
\end{equation} 
stabilises $\Sigma_{plant}$, \ie, the matrix $A + B K_x$ is rendered Schur. 
\end{lemma}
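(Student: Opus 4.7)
The plan is to combine the $\omega$-free data-based representation of Lemma~\ref{theorem:closedLoopRepRegulationNew} with the standard discrete-time Lyapunov LMI characterisation of Schur stability. The new constraint $\mathcal{W} Q_{K_x} = \textbf{0}$ in \eqref{eq:dataRepConditionSNewb} plays the role of hiding the exogenous signal from the closed-loop representation, mirroring the last block row in \eqref{eq:closedLoopRepRegulationConditionNew}.

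First, I would set $P := X_{-} Q_{K_x}$ and observe that the $(1,1)$ block of the first LMI in \eqref{eq:dataRepConditionSNew} asserts $P \succ \textbf{0}$, so $P$ is symmetric positive definite and invertible. Then I would construct the candidate parameterisation $G_{K_x} := Q_{K_x} P^{-1}$ and verify the three block rows of \eqref{eq:closedLoopRepRegulationConditionNew}: pre-multiplying by $X_{-}$ gives $X_{-} Q_{K_x} P^{-1} = I$; pre-multiplying by $U_{-}$ gives $U_{-} Q_{K_x} P^{-1} = K_x$ by the definition \eqref{eq:regulationStablizingKNew}; and pre-multiplying by $\mathcal{W}$ gives $\mathcal{W} Q_{K_x} P^{-1} = \textbf{0}$ thanks precisely to \eqref{eq:dataRepConditionSNewb}. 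Invoking Lemma~\ref{theorem:closedLoopRepRegulationNew} with this $G_{K_x}$ then yields $A + B K_x = X_{+} G_{K_x} = X_{+} Q_{K_x} P^{-1}$, or equivalently $(A + B K_x) P = X_{+} Q_{K_x}$.

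With this identification, the first LMI in \eqref{eq:dataRepConditionSNew} rewrites as
$$\begin{bmatrix} P & (A + B K_x) P \\ P (A + B K_x)^\top & P \end{bmatrix} \succ \textbf{0}.$$
Applying the Schur complement with respect to the bottom-right block (which is $P \succ \textbf{0}$) delivers the discrete-time Lyapunov inequality $P - (A + B K_x) P (A + B K_x)^\top \succ \textbf{0}$, which certifies that $A + B K_x$ is Schur and hence that $\Sigma_{plant}$ is stabilised.

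The main obstacle I anticipate is arguing that the formulation is non-vacuous, i.e., that a $Q_{K_x}$ satisfying both parts of \eqref{eq:dataRepConditionSNew} actually exists. Feasibility requires the Lyapunov-type LMI and the linear constraint $\mathcal{W} Q_{K_x} = \textbf{0}$ to be simultaneously satisfiable. This is where Assumption~\ref{ass:ABstabilisability} enters (to supply a stabilising $K_x$ with an associated Lyapunov matrix $P \succ \textbf{0}$), combined with the rank condition of Assumption~\ref{ass:rankXUWnew} (which guarantees that the data matrices afford enough freedom to realise this $P$ through the parameterisation $P = X_{-} Q_{K_x}$, $K_x P = U_{-} Q_{K_x}$, while simultaneously annihilating $\mathcal{W}$). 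Once this existence step is settled, the remainder of the argument reduces to the direct substitution and Schur-complement step sketched above.
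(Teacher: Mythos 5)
Your proof is correct and follows essentially the same route as the paper: the paper's proof simply invokes the data-based representation of Lemma~\ref{theorem:closedLoopRepRegulationNew} together with the argument of \cite[Theorem 3]{de2019formulas}, noting that $G_{K_x} = Q_{K_x}(X_-Q_{K_x})^{-1}$ lies in the kernel of $\mathcal{W}$ by \eqref{eq:dataRepConditionSNewb}, which is exactly the substitution-plus-Schur-complement argument you spell out explicitly. Your concern about feasibility is not a gap for the statement as written (which is a conditional on $Q_{K_x}$ existing), and the paper likewise does not prove existence within this lemma.
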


\begin{proof}
The statement follows from substituting the data-based representation of $A + B K_x$ given in Lemma \ref{theorem:closedLoopRepRegulationNew} into the proof of \cite[Theorem 3]{de2019formulas} with the additional constraint that $G_{K_x}$ has to be in the kernel space of $\mathcal{W}$. This is ensured by \eqref{eq:dataRepConditionSNewb} because, as shown in \cite[Theorem 3]{de2019formulas}, $G_{K_x} = Q_{K_x} (X_{-}Q_{K_x})^{-1}$.
\end{proof}

Having determined a pre-stabilising gain $K_x$ from solving the feasibility problem over the data-dependent LMIs and LMEs in \eqref{eq:dataRepConditionSNew}, the next step is to determine a matrix that gives a useful coordinate transformation for step (S3). To this end, we consider the solution to the Sylvester equation 
\begin{equation} \label{eq:dualSylvesterEqNew1}
    \Phi \Upsilon_r - \Upsilon_r (A + B K_x) =  - \Psi (C + D K_x).
\end{equation}
Note that this solution $\Upsilon_r$ characterises the subspace $\{(x, \xi): \xi = \Upsilon_r x\}$ that arises in 
the cascade $\Sigma_{plant}^\prime \rightarrow \Sigma_{imo}$. In the following statement, we show how to compute this $\Upsilon_r$\GS{.} 

\begin{lemma}  \label{lemma:computeUpsilonr}
Suppose that Assumptions \ref{ass:ABstabilisability} and \ref{ass:rankXUWnew} hold. Let $Q_{K_x}$ be a solution of \eqref{eq:dataRepConditionSNew}, $K_x$ be defined as in \eqref{eq:regulationStablizingKNew}, and $G_{K_x}$ be a matrix that satisfies \eqref{eq:closedLoopRepRegulationConditionNew}. Then, the following statements are equivalent\GS{.}

\begin{enumerate}
    \item[(i)] $\Upsilon_r$ solves \eqref{eq:dualSylvesterEqNew1}. 
    \item[(ii)] $\Upsilon_r$ solves 
    $\Upsilon_r X_{+} G_{K_x} = \Phi \Upsilon_r + \Psi E_{-} G_{K_x}$.
    \item[(iii)] $\Upsilon_r$ solves 
\begin{equation} \label{eq:dualSylvesterEqNew3}
\begin{aligned}
    \Upsilon_r X_{+}  Q_{K_x} & (X_{-}  Q_{K_x})^{-1} = \\
    &\Phi \Upsilon_r + \Psi E_{-}  Q_{K_x} (X_{-} Q_{K_x})^{-1}.
\end{aligned}
\end{equation}
\end{enumerate}
\end{lemma}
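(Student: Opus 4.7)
The plan is to establish the cycle (i) $\Leftrightarrow$ (ii) $\Leftrightarrow$ (iii) by transporting the model-based Sylvester equation \eqref{eq:dualSylvesterEqNew1} into the data domain, using the machinery already built in Lemmas~\ref{theorem:closedLoopRepRegulationNew} and \ref{lemma:stabilisationWithWNew}.

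For (i) $\Leftrightarrow$ (ii), I would invoke Lemma~\ref{theorem:closedLoopRepRegulationNew}, which, under Assumption~\ref{ass:rankXUWnew}, guarantees that any matrix $G_{K_x}$ satisfying \eqref{eq:closedLoopRepRegulationConditionNew} delivers the identities $A + BK_x = X_{+} G_{K_x}$ and $C + DK_x = E_{-} G_{K_x}$. Substituting both into \eqref{eq:dualSylvesterEqNew1} and moving the $\Upsilon_r(A+BK_x)$ term to the right-hand side yields statement (ii) verbatim, and reversing the same substitution recovers (i). No spectral or solvability argument is needed, as the equivalence is an algebraic identity.

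For (ii) $\Leftrightarrow$ (iii), I would show that the specific choice $G_{K_x}^\star := Q_{K_x}(X_{-} Q_{K_x})^{-1}$ is itself a valid parametrization satisfying \eqref{eq:closedLoopRepRegulationConditionNew}. The three block-row conditions are verified directly: $X_{-} G_{K_x}^\star = I$ by construction, $U_{-} G_{K_x}^\star = K_x$ which is exactly the definition \eqref{eq:regulationStablizingKNew}, and $\mathcal{W} G_{K_x}^\star = \mathbf{0}$ which follows from \eqref{eq:dataRepConditionSNewb}. Consequently, (iii) is merely (ii) evaluated at $G_{K_x} = G_{K_x}^\star$; since (ii) has already been shown equivalent to the parametrization-free statement (i), the three statements are mutually equivalent.

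I do not anticipate any serious obstacle: the heavy lifting (the $\omega$-free data representation and the construction of the stabilising gain) has already been performed in the preceding two lemmas, and the remaining arguments are routine substitutions. The only subtlety is the well-definedness of $G_{K_x}^\star$, which rests on the invertibility of $X_{-} Q_{K_x}$; this is guaranteed by the positive-definite block LMI in \eqref{eq:dataRepConditionSNew}, whose diagonal block $X_{-} Q_{K_x}$ is forced to be positive definite and hence invertible.
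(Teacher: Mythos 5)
Your proposal is correct and follows essentially the same route as the paper: (i) $\Leftrightarrow$ (ii) by substituting the data-based identities of Lemma~\ref{theorem:closedLoopRepRegulationNew} into \eqref{eq:dualSylvesterEqNew1}, and (ii) $\Leftrightarrow$ (iii) by recognising that $Q_{K_x}(X_{-}Q_{K_x})^{-1}$ is a valid choice of $G_{K_x}$ satisfying \eqref{eq:closedLoopRepRegulationConditionNew}. Your explicit verification of the three block rows and of the invertibility of $X_{-}Q_{K_x}$ only makes the paper's terser argument more complete.
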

\begin{proof}
The equivalence (i) $\Leftrightarrow$ (ii) follows from Lemma~\ref{theorem:closedLoopRepRegulationNew}. The equivalence (ii) $\Leftrightarrow$ (iii) follows from \eqref{eq:closedLoopRepRegulationConditionNew} and \eqref{eq:regulationStablizingKNew} which imply $G_{K_x} = Q_{K_x} (X_{-} Q_{K_x})^{-1}$. 
\end{proof}

\begin{remark}
Differently from \cite[Lemma 5]{mao2025partOne} where an analogous Sylvester equation is solved via an LME, in this case $\Upsilon_r$ can be determined by solving the Sylvester equation \eqref{eq:dualSylvesterEqNew3} directly. The difference is due to the fact that the matrices of the second subsystem, in this case $\Phi$ and $\Psi$, are known here, while the second subsystem is not available in \cite[Lemma 5]{mao2025partOne}.
\end{remark}

Lemma \ref{lemma:computeUpsilonr} presents two equivalent forms of the Sylvester equation \eqref{eq:dualSylvesterEqNew1} in terms of data and parameterization matrices obtained in step (S1). Once 
\eqref{eq:dualSylvesterEqNew3} is solved, \ie, step (S2) is completed, a coordinate transformation $\zeta := \xi - \Upsilon_r x$ 
can be carried out to facilitate the final stabilisation step.

\begin{lemma}\label{OR-zetadyn}
Suppose Assumption \ref{ass:nonResonance} holds. 
Then the dynamics of the transformed internal model, denoted by $\Sigma_{imo}^{\zeta}$, is
\begin{equation} \label{eq:zetaSubsystem}
    \zeta(k+1) =  \Phi \zeta(k) + (\Psi D - \Upsilon_r B) v(k)  + (\Psi F - \Upsilon_r E) \omega(k).
\end{equation}
Moreover, the pair $(\Phi, \Psi D - \Upsilon_r B)$ is controllable.
\end{lemma}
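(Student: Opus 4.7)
The proof naturally splits into two parts: deriving the $\zeta$-dynamics, and establishing controllability of the pair $(\Phi, \Psi D - \Upsilon_r B)$.

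For the first part, I would just compute $\zeta(k+1) = \xi(k+1) - \Upsilon_r x(k+1)$ directly. Substituting the internal-model recursion $\xi(k+1) = \Phi \xi(k) + \Psi e(k)$ together with the closed-loop plant equations \eqref{eq:closed-loop-plant-model}, the terms in $x(k)$ collect as $[\Psi(C+DK_x) - \Upsilon_r(A+BK_x)]x(k)$, while the $v(k)$ and $\omega(k)$ terms are exactly $(\Psi D - \Upsilon_r B)v(k)$ and $(\Psi F - \Upsilon_r E)\omega(k)$. Then, using the Sylvester equation \eqref{eq:dualSylvesterEqNew1}, the bracketed coefficient of $x(k)$ becomes $-\Phi \Upsilon_r$, which when combined with $\Phi \xi(k)$ yields exactly $\Phi \zeta(k)$. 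This gives \eqref{eq:zetaSubsystem}.

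The hard part is the controllability claim. My plan is to use the PBH test: suppose for contradiction there is $\bar{\lambda} \in \lambda(\Phi)$ and a nonzero left eigenvector $\eta$ with $\eta^\top \Phi = \bar{\lambda}\eta^\top$ and $\eta^\top(\Psi D - \Upsilon_r B)=0$. Left-multiplying the Sylvester equation \eqref{eq:dualSylvesterEqNew1} by $\eta^\top$ gives $\eta^\top \Upsilon_r\bigl((A+BK_x)-\bar{\lambda}I\bigr) = \eta^\top \Psi(C+DK_x)$. Combining this with $\eta^\top \Upsilon_r B = \eta^\top \Psi D$ yields
\begin{equation*}
\eta^\top \begin{bmatrix} \Upsilon_r & -\Psi \end{bmatrix} \begin{bmatrix} A+BK_x - \bar{\lambda}I & B \\ C+DK_x & D \end{bmatrix} = \begin{bmatrix} 0 & 0 \end{bmatrix}.
\end{equation*}
Since the non-resonance property is preserved under state feedback (as can be seen by right-multiplying the matrix in Assumption \ref{ass:nonResonance} by the elementary invertible block matrix containing $K_x$, and noting $\lambda(\Phi)\subseteq \lambda(S)$ by construction of $\Phi$), the right-hand plant matrix has full row rank $n+p$, and so $\eta^\top \Upsilon_r = 0$ and $\eta^\top \Psi = 0$.

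Finally, I would exploit the block-diagonal structure $\Phi = \bdiag(\phi,\dots,\phi)$, $\Psi = \bdiag(\psi,\dots,\psi)$ from \eqref{eq:p-internal-model}. Partitioning $\eta = \col(\eta_1,\dots,\eta_p)$, the conditions $\eta^\top \Phi = \bar{\lambda}\eta^\top$ and $\eta^\top \Psi = 0$ force $\eta_j^\top [\phi - \bar{\lambda}I, \psi] = 0$ for each $j$. Because $(\phi,\psi)$ is controllable by construction, PBH gives $\eta_j = 0$ for every $j$, hence $\eta = 0$, yielding the contradiction. The main obstacle, as expected, is this last step: carefully organising the PBH argument so that the non-resonance condition on the plant propagates through the Sylvester identity into controllability of the internal-model component. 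Everything else is routine matrix manipulation.
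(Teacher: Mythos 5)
Your derivation of the $\zeta$-dynamics is exactly the paper's: expand $\zeta(k+1)=\xi(k+1)-\Upsilon_r x(k+1)$, collect the $x(k)$ terms, and cancel them via the Sylvester equation \eqref{eq:dualSylvesterEqNew1}. Where you genuinely diverge is the controllability claim: the paper does not prove it at all, but simply invokes the controllability of $(\Phi,\Psi)$ together with Assumption \ref{ass:nonResonance} and cites \cite[Theorem 1.a]{simpson-porco-arxiv}. Your PBH argument is a correct, self-contained proof of that cited fact, and every step checks out: left-multiplying \eqref{eq:dualSylvesterEqNew1} by $\eta^\top$ with $\eta^\top\Phi=\bar{\lambda}\eta^\top$ gives $\eta^\top\Upsilon_r\bigl(A+BK_x-\bar{\lambda}I\bigr)=\eta^\top\Psi(C+DK_x)$, which combined with $\eta^\top\Upsilon_r B=\eta^\top\Psi D$ yields the displayed left-annihilation identity; the factorisation
\begin{equation*}
\begin{bmatrix} A+BK_x-\bar{\lambda}I & B \\ C+DK_x & D \end{bmatrix}
=\begin{bmatrix} A-\bar{\lambda}I & B \\ C & D \end{bmatrix}
\begin{bmatrix} I & \mathbf{0} \\ K_x & I \end{bmatrix}
\end{equation*}
shows the full row rank $n+p$ is preserved under feedback; $\lambda(\Phi)=\lambda(\phi)=\lambda(S)$ because the characteristic polynomial of $\phi$ is the minimal polynomial of $S$, so Assumption \ref{ass:nonResonance} applies to $\bar{\lambda}$ and forces $\eta^\top\Upsilon_r=0$, $\eta^\top\Psi=0$; and the block-diagonal structure \eqref{eq:p-internal-model} then reduces the problem to the PBH test for the controllable pair $(\phi,\psi)$, giving $\eta=0$. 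What your route buys is a fully self-contained argument that makes explicit how the non-resonance condition propagates through the Sylvester identity into controllability of the internal model; what it costs is length relative to the paper's one-line citation.
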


\begin{proof}
Note that
\begin{equation} \label{eq:zetaSubsystem_proof}
\begin{aligned}
    \zeta(k+1) &=  \xi(k+1) - \Upsilon_r x(k+1) \\
    &=\! \Phi\xi(k) \!+\! \Psi \left( Cx(k) \!+\! D (K_x x(k) \!+\! v(k)) \!+\! F\omega(k)\right) \\
    & \quad - \Upsilon_r \left( Ax(k)+B (K_x x(k) + v(k)) + E\omega(k)\right) \\
    &= \Phi\xi(k) + \Psi \left( (C + D K_x) x(k) + D v(k) + F\omega(k)\right) \\
    & \quad - \Upsilon_r \left( (A+ BK_x) x(k) + B v(k) + E\omega(k)\right) \\
    &=  \Phi(\xi(k) - \Upsilon_r x(k)) + (\Psi D - \Upsilon_r B) v(k) \\
    & \quad + (\Psi F - \Upsilon_r E) \omega(k) \\
    &=  \Phi \zeta(k) + (\Psi D - \Upsilon_r B) v(k)  + (\Psi F - \Upsilon_r E) \omega(k),
\end{aligned}
\end{equation}
where in the fourth equality we exploited the Sylvester equation \eqref{eq:dualSylvesterEqNew1}. Finally, the controllability of the pair $(\Phi, \Psi D - \Upsilon_r B)$ follows from the controllability of $(\Phi, \Psi)$ along with Assumption \ref{ass:nonResonance}, see \cite[Theorem 1.a]{simpson-porco-arxiv}.
\end{proof}

Recalling that $v(k) := K_\zeta (\xi(k) - \Upsilon_r x(k))$, one can now design the gain $K_\zeta$ to stabilise the $\zeta$-dynamics. The existence of such stabilising gain is guaranteed by the controllability of $(\Phi, \Psi D - \Upsilon_r B)$. Then, by noticing that the $\zeta$-dynamics $\eqref{eq:zetaSubsystem}$ is associated to the cascade $\Sigma_{exo} \rightarrow \Sigma_{imo}^{\zeta}$ with input $v$, the stabilisation of $\Sigma_{imo}^{\zeta}$ can be achieved by applying another iteration of the methodology established in Lemma \ref{lemma:stabilisationWithWNew} for $\Sigma_{exo} \rightarrow \Sigma_{plant}$. The approach is formalised as follows.

\begin{theorem} \label{theorem:SolutionRobustRegulationProblem}
Consider the full cascade $\Sigma_{exo} \rightarrow \Sigma_{plant} \rightarrow \Sigma_{imo}$ and suppose that Assumptions~\ref{ass:ABstabilisability}, \ref{ass:nonResonance}, and \ref{ass:rankXUWnew} hold. Let $K_x$ be constructed as in \eqref{eq:regulationStablizingKNew} with $Q_{K_x}$ any solution of \eqref{eq:dataRepConditionSNew}. Let $\Upsilon_r$,  the solution of equation \eqref{eq:dualSylvesterEqNew3}, be such that 
\begin{equation} \label{eq:rankZVWNew}
    \rank \left( \begin{bmatrix}
         Z_{-}  \\ V_{-} \\ \mathcal{W}
    \end{bmatrix} \right)
     = pq + m + \nu,
\end{equation}
with $Z_{-} := \Xi_{-} - \Upsilon_r X_{-}$ and $V_{-} := U_{-} - K_x X_{-}$, holds. Then, the regulator \eqref{eq:regulatorRobust},
with 
\begin{equation} \label{eq:post-stabilising-Kzeta}
    K_\zeta := V_{-} Q_{K_\zeta} (Z_{-} Q_{K_\zeta})^{-1}, 
\end{equation}
where $Q_{K_\zeta} \in \RR^{T \times pq}$ is any matrix that satisfies
\begin{subequations} \label{eq:dataRepConditionSNewZeta} 
  \begin{empheq}[left=\empheqlbrace]{align}
&\begin{bmatrix}
    Z_{-} Q_{K_\zeta} & Z_{+} Q_{K_\zeta} \\
    Q_{K_\zeta}^\top Z_{+}^\top &  Z_{-} Q_{K_\zeta} 
\end{bmatrix} \succ 0 \\
& \mathcal{W} Q_{K_\zeta} = \textbf{0}
  \end{empheq}
\end{subequations}
with $Z_{+} := \Xi_{+} - \Upsilon_r X_{+}$, 
is such that the cascade $\Sigma_{plant} \rightarrow \Sigma_{imo}$ is asymptotically stable.
\end{theorem}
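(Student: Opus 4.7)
The plan is to execute the three forwarding steps (S1)--(S3) of Section~\ref{sec:data-driven-cascade-stabilisation} on the cascade $\Sigma_{plant} \rightarrow \Sigma_{imo}$, with each step realised directly from data via the lemmas already proved. Step (S1) (pre-stabilisation of $\Sigma_{plant}$) is immediate from Lemma~\ref{lemma:stabilisationWithWNew}: under Assumptions~\ref{ass:ABstabilisability} and~\ref{ass:rankXUWnew}, the gain $K_x$ defined in \eqref{eq:regulationStablizingKNew} renders $A + BK_x$ Schur. For step (S2), the transformation matrix $\Upsilon_r$ is obtained from Lemma~\ref{lemma:computeUpsilonr} as the unique solution of the Sylvester equation \eqref{eq:dualSylvesterEqNew1}; uniqueness requires $\lambda(\Phi) \cap \lambda(A + BK_x) = \emptyset$, which holds because $A + BK_x$ is Schur while the spectrum of $\Phi$ is determined by the minimal polynomial of $S$, whose roots, by assumption, lie on or outside the unit disk.

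Step (S3) is the crux. By Lemma~\ref{OR-zetadyn}, the transformed internal model $\Sigma_{imo}^{\zeta}$ obeys \eqref{eq:zetaSubsystem}, which has exactly the same structural form as $\Sigma_{plant}$ driven by $\Sigma_{exo}$: a linear system with an additive exogenous term generated by $S$, together with a controllable pair $(\Phi, \Psi D - \Upsilon_r B)$ thanks to Assumption~\ref{ass:nonResonance}. I would therefore apply Lemma~\ref{lemma:stabilisationWithWNew} a second time, with the substitutions $A \mapsto \Phi$, $B \mapsto \Psi D - \Upsilon_r B$, $E \mapsto \Psi F - \Upsilon_r E$, the state data matrices $(X_-, X_+)$ replaced by $(Z_-, Z_+)$, and the input data matrix $U_-$ replaced by $V_-$. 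The rank condition \eqref{eq:rankZVWNew} is precisely the counterpart of Assumption~\ref{ass:rankXUWnew}. Hence every $Q_{K_\zeta}$ solving \eqref{eq:dataRepConditionSNewZeta} yields a gain $K_\zeta$ as in \eqref{eq:post-stabilising-Kzeta} that renders $\Phi + (\Psi D - \Upsilon_r B) K_\zeta$ Schur.

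Putting the pieces together, the cascade $\Sigma_{plant} \rightarrow \Sigma_{imo}$ with $\omega \equiv 0$, expressed in coordinates $(x, \zeta)$, has a block upper-triangular state matrix whose diagonal blocks are $A + BK_x$ and $\Phi + (\Psi D - \Upsilon_r B) K_\zeta$. Both blocks being Schur implies asymptotic stability of the full cascade, completing the proof.

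The main obstacle I anticipate is justifying rigorously that $(Z_-, Z_+, V_-)$ constitute legitimate trajectory data of $\Sigma_{imo}^{\zeta}$ driven by $\Sigma_{exo}$, so that the second invocation of Lemma~\ref{lemma:stabilisationWithWNew} is truly valid. Concretely, one has to verify the column-wise shift identity $Z_+ = \Phi Z_- + (\Psi D - \Upsilon_r B) V_- + (\Psi F - \Upsilon_r E) \Omega_-$; this is inherited from the signal-level computation carried out in the proof of Lemma~\ref{OR-zetadyn}, applied to the sampled signals using $V_- = U_- - K_x X_-$ and $Z_\pm = \Xi_\pm - \Upsilon_r X_\pm$. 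Once this identity is in hand, Lemma~\ref{lemma:stabilisationWithWNew} applies verbatim and the argument closes.
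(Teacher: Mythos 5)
Your proposal follows essentially the same route as the paper's proof: pre-stabilise with $K_x$ via Lemma~\ref{lemma:stabilisationWithWNew}, obtain $\Upsilon_r$ from the Sylvester equation, recognise $(Z_{-}, Z_{+}, V_{-})$ as trajectory data of $\Sigma_{imo}^{\zeta}$ driven by $\Sigma_{exo}$, reapply Lemma~\ref{lemma:stabilisationWithWNew} to render $\Phi + (\Psi D - \Upsilon_r B)K_\zeta$ Schur, and conclude by the block upper-triangular structure of the closed loop. Your explicit flagging of the shift identity $Z_{+} = \Phi Z_{-} + (\Psi D - \Upsilon_r B)V_{-} + (\Psi F - \Upsilon_r E)\Omega_{-}$ is a point the paper asserts without elaboration, so your version is if anything slightly more careful.
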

\begin{proof}
Note that the constructed data matrices $Z_{-/+}$ and $V_{-}$ are the associated state and input data of $\Sigma_{imo}^{\zeta}$. Then, the result that the matrix $\Phi + (\Psi D - \Upsilon_r B) K_\zeta$ is rendered Schur follows directly from Lemma \ref{lemma:stabilisationWithWNew} by replacing the associated data of $\Sigma_{plant}$ with the associated data of $\Sigma_{imo}^{\zeta}$. \\
Having stabilised both $\Sigma_{plant}$ and $\Sigma_{imo}^\zeta$, \ie, the matrices $A + B K_x$ and $\Phi + (\Psi D - \Upsilon_r B) K_\zeta$ are Schur, we note that the dynamics of the cascade $\Sigma_{plant} \rightarrow \Sigma_{imo}$ in closed-loop with the regulator $u_{dyna}$ are described by
\begin{align*} 
    \begin{bmatrix}
        x(k+1)\\\zeta(k+1)
    \end{bmatrix} &= \begin{bmatrix}
        A + B K_x & B K_\zeta \\ 
        \mathbf{0} & \Phi + (\Psi D - \Upsilon_r B) K_\zeta
    \end{bmatrix}\begin{bmatrix}
        x(k)\\\zeta(k)
    \end{bmatrix} \\
    &\quad \; +
    \begin{bmatrix}
        E \\
        \Psi F - \Upsilon_r E
    \end{bmatrix}
    \omega(k)
\end{align*}
This is a triangular system and so is asymptotically stable. In summary, the synthesised $u_{dyna}$ solves Problem 2. 
\end{proof}

\begin{remark}
Since the problem has been recast into a data-driven cascade stabilisation problem, if there are measurement and/or process noise in the cascade, then the regulator can be synthesised in an analogous way to the controller design proposed in Part I, see \cite[Section IV.C]{mao2025partOne}, to ensure that the output lies in a bounded ball (the radius of which depends on the norms of the uncertainties) centred on the reference motion. The results in \cite[Section IV.C]{mao2025partOne} apply readily and thus they are omitted to avoid repetitions.
\end{remark}

While so far the results have been presented for the state-feedback case for convenience, the need to measure the state rather than the output can be trivially removed. This can be achieved by utilising the left difference operator representation, as already recalled in Section~\ref{sec:reductionIOdata} for the model order reduction problem. Consider 
the left difference operator representation of the plant \eqref{eq:sysXRegulation}-\eqref{eq:sysERegulation}, given as
\begin{equation} \label{eq:leftDiffRepReg}
\begin{split}
    e(k) + \sum_{i=1}^n A_i e(k-i) = \sum_{i=0}^n B_i u(k-i) + \sum_{i=0}^n E_i \omega(k-i),
\end{split}
\end{equation}
where $A_i \in \RR^{p \times p}$, $B_i \in \RR^{p \times m}$, and $E_i \in \RR^{p \times \nu}$ are coefficient matrices\footnote{There exists a one-to-one relationship between these matrices and the plant matrices in $\Sigma_{plant}$.}. Consider the augmented state
\begin{equation} \label{eq:xTildeReg}
\begin{gathered}
       \widetilde{x}(k) :=\col \bigl( e(k-n), e(k-n+1), \ldots, e(k-1) \\
u(k-n), u(k-n+1), \ldots, u(k-1) \bigr) \in \RR^{n(m + p)},
\end{gathered}
\end{equation}
and note that the last term in \eqref{eq:leftDiffRepReg} can be rewritten in terms of only the ``present'' $\omega$ as
$
    \left(\sum_{i=1}^n E_i S^{-i}\right) \omega(k).
$
Then, analogously to Section~\ref{sec:reductionIOdata}, it follows from \eqref{eq:leftDiffRepReg} that there exists matrices $\widetilde{A}$, $\widetilde{B}$, $\widetilde{C}$, $\widetilde{D}$, $\widetilde{E}$, and $\widetilde{F}$ (of proper dimensions) such that the state-space realisation 
\begin{subequations}  \label{eq:stateSpaceBigReg}
\begin{align} 
    \widetilde{x}(k+1) &= \widetilde{A}\widetilde{x}(k) + \widetilde{B}u(k) + \widetilde{E}\omega(k),  \\
    e(k) &= \widetilde{C}\widetilde{x}(k) + \widetilde{D}u(k) + \widetilde{F}\omega(k),
\end{align}  
\end{subequations} 
equivalently captures the input-output behaviour of the plant. As such, with the additional assumption that the pair $ (A, C) $ is detectable (to guarantee that the convergence of $e$ to zero implies the same for $x$), all instances of the original state matrices $X_{-}$  and \( X_{+} \) in the results presented earlier in this section can be replaced by the extended state matrices $\widetilde{X}_{-}$ and $\widetilde{X}_{+}$, which rely solely on input-output measurements. This substitution preserves the stabilisation result of the cascade $\Sigma_{plant} \rightarrow \Sigma_{imo}$, thereby ensuring that Problem~2 is solved. 

In summary, we summarise in Algorithm \ref{alg:dataDrivenRobustRegulation} the procedure to synthesise the regulator solving Problem~2
from an \textit{input-output} perspective.   
\begin{algorithm}
  \caption{Data-Driven Dynamic Error-Feedback Output Regulation}
\label{alg:dataDrivenRobustRegulation}
  \begin{algorithmic}[1]
    \STATE \textbf{Input:} Collected samples of $u$ and $e$ from $\Sigma_{plant} \rightarrow \Sigma_{imo}$, $S$
    \STATE \textbf{\# \, Assemble data matrices}
    \STATE \hspace{1em} Construct matrices $\Phi$ and $\Psi$ using \eqref{eq:p-internal-model}
    \STATE \hspace{1em} Compute the trajectory of $\xi$ with any $\xi(0)$ from \eqref{eq:internalModel}
    \STATE \hspace{1em} Construct data matrices $(U_-, \widetilde{X}_-, \widetilde{X}_+, \Xi_-, \Xi_+)$
    \STATE \hspace{1em} Let $X_- := \widetilde{X}_-$ and $X_+ := \widetilde{X}_+$
    \STATE \hspace{1em} Select $z^*$ such that $\mathcal{W}$ given by \eqref{eq:KrylovMatrixW} satisfies \eqref{eq:rankZVWNew}
    \STATE \textbf{\# \, Pre-stabilise the $x$-dynamics}
    \STATE \hspace{1em} Solve the LMI \eqref{eq:dataRepConditionSNew} to obtain $Q_{K_x}$
    \STATE \hspace{1em} Construct the pre-stabilising gain $K_x$ using \eqref{eq:regulationStablizingKNew}
    \STATE \textbf{\# \, Find the coordinate transformation}
    \STATE \hspace{1em} Solve the Sylvester equation \eqref{eq:dualSylvesterEqNew3} to obtain $\Upsilon_r$
    \STATE \textbf{\# \, Stabilise the $\zeta$-dynamics}
    \STATE\hspace{1em} Construct the transformed system data:\\
      \hspace{1em}$\begin{array}{l}
            V_{-} := U_{-} - K_x X_{-}, \,\,\,
            Z_{-/+} := \Xi_{-/+} - \Upsilon_r X_{-/+}\end{array}
      $
    \STATE \hspace{1em} Solve the LMI \eqref{eq:dataRepConditionSNewZeta} to obtain $Q_{K_\zeta}$
    \STATE \hspace{1em} Construct the stabilising gain $K_\zeta$ using \eqref{eq:post-stabilising-Kzeta}
    \STATE \textbf{Output:} Regulator $u_{dyna}$ in \eqref{eq:regulatorRobust} with $x$ replaced by $\widetilde{x}$
  \end{algorithmic}
\end{algorithm}






\subsection{Numerical Example} \label{sec:exampleRegulation}
Consider the linearised model of an unstable batch reactor process, studied in \eg{} \cite{rosenbrock1974cacs, walsh2001scheduling, de2019formulas}. The model is discretised using a sampling time $t_s = 0.05$~s. The resulting discrete-time model is given by
\begin{align*}
& [A|B] =  \\
&
\left[
\begin{array}{rrrr|rr}
  1.080 & -0.005 &  0.290 & -0.237 &  0.001 & -0.024 \\
 -0.027 &  0.810 & -0.003 &  0.032 &  0.257 &  0.000 \\
  0.045 &  0.189 &  0.732 &  0.235 &  0.084 & -0.135 \\
  0.001 &  0.189 &  0.055 &  0.912 &  0.084 & -0.005
\end{array}
\right],
\end{align*}
with the state vector $x(k) = [x_1(k), x_2(k), x_3(k), x_4(k)]^\top \in \RR^{4}$, with $x_1(k)$ the reactor bulk temperature, $x_2(k)$ the coolant-jacket temperature, $x_3(k)$ the (scaled) concentration of reactant, and $x_4(k)$ the heat-removal wall temperature lag. One measured output (of typical interest) is $y(k) = x_1(k) + x_3(k) - x_4(k)$, which corresponds to the reactor-thermocouple reading. The problem of interest is to regulate the output $y$ towards a sinusoidal reference signal $r$, while rejecting the influence of a high-frequency voltage fluctuation on the external coolant power supply (\ie, a disturbance $d$ on $x_2(k)$). The above consideration gives rise to the following matrices
\begin{align*}
[C|D] =
\left[
\begin{array}{rrrr|rr}
  1 &  0 &  1 & -1 &  0 & 0 \\
\end{array}
\right],
\end{align*}
\begin{equation*}
\begin{bmatrix}
    E \\ \hline
    F
\end{bmatrix}
=
\begin{bmatrix}
  0 &  0 &  0 &  0 \\
  0 &  0 &  -1 &  1 \\ 
  0 &  0 &  0 &  0 \\
  0 &  0 &  0 &  0 \\ \hline
  -1 &  0 &  0 &  0 \\
\end{bmatrix},
\end{equation*}
while the exosystem is characterised by the matrix
$$
\!S \!=\!\!  \left[\begin{array}{rcrc}\!\!\!
        \cos(\pi t_s) & \sin(\pi t_s) & 0 & 0 \\
        -\sin(\pi t_s) & \cos(\pi t_s) & 0 & 0 \\
        0 & 0 & \cos(7.5\pi t_s) & \sin(7.5\pi t_s) \\
        0 & 0 & -\sin(7.5\pi t_s) & \cos(7.5\pi t_s)        
    \end{array}\!\!\!\right]\!,
$$
where the first diagonal block corresponds to the reference $r$, characterised by a frequency of $0.5$~Hz, whereas the second diagonal block corresponds to the disturbance $d$, characterised by a frequency of $3.75$~Hz.

We test the state-feedback method established in Section \ref{sec:regulationRobust}, as well as its error-feedback version, namely Algorithm \ref{alg:dataDrivenRobustRegulation}. A minimal number (in terms of satisfying the various rank conditions) of data points is collected for both approaches. The LMIs and LMEs are solved using MATLAB CVX \cite{cvx} with SeDuMi \cite{sturm1999using}. Fig. \ref{fig:regulation_tracking} (top) shows the time histories of the reference signal (solid/black) and and of the system outputs under the controller resulting from both the state-feedback (dashed/blue) and the error-feedback approach (dash-dotted/green), verifying that the system output for both approaches tracks the reference asymptotically, despite the high-frequency disturbance. This is confirmed by the respective tracking errors $e$, shown in Fig. \ref{fig:regulation_tracking} (bottom).

\begin{figure}
    \centering
    \includegraphics[width=0.98\columnwidth]{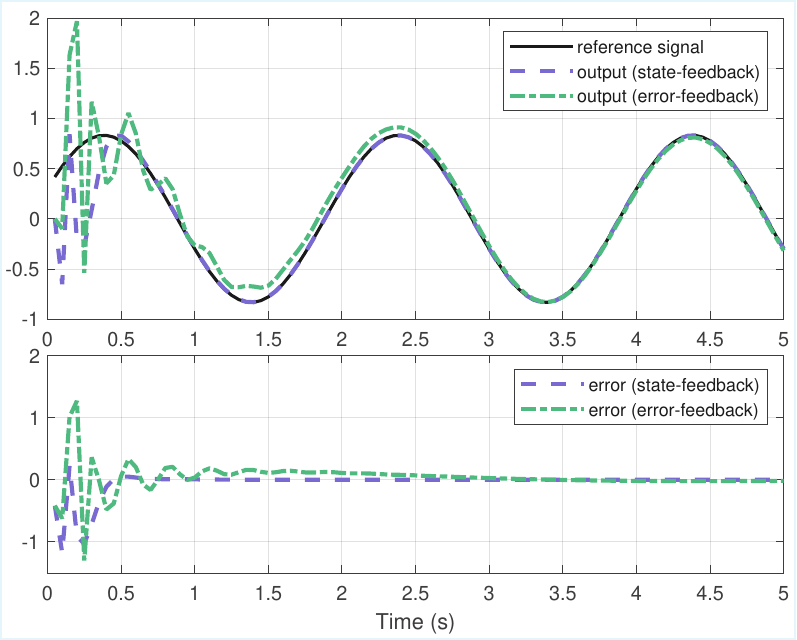}
    \caption{(Top) Time histories of the reference signal $r(k)$ (solid/black line), the output signal for the state-feedback approach (dashed/blue line), and the output signal for the error-feedback approach (dash-dotted/green line). (Bottom) Corresponding tracking errors.}
    \label{fig:regulation_tracking}
\end{figure}


\color{black}

\section{Conclusion} \label{sec:conclusion}
In this article, a new interconnection-based perspective has been introduced to address the direct data-driven design of two problems, model order reduction and output regulation, spanning both modelling and control domains. The proposed solutions were developed under the assumption of unknown system dynamics and built on a unified methodology presented in Part I of this article. Together, these results illustrate the scope and versatility of the established framework for addressing a class of control-theoretic problems interpretable through cascade interconnections.

While the article has scratched the surface with only three representative problems due to space limitations, we believe that an extended range of data-driven problems can be formulated, revisited or novelly addressed within the proposed framework. Potential problems include observer design \cite{luenberger1964observing}, eigenstructure assignment \cite{shafai1988algorithm}, disturbance decoupling \cite{syrmos2002disturbance}, hierarchical control \cite{girard2009hierarchical}. 

Another key direction for future research lies in extending the interconnection-based framework to nonlinear systems, wherein Sylvester equations are replaced by their nonlinear counterparts: invariance partial differential equations. A successful generalisation in this direction could pave the way for a systematic, data-driven design methodology applicable to a wide range of nonlinear control problems.

\section*{Acknowledgment}
J. Mao would like to thank Zirui Niu for helpful discussions regarding the output regulation problem. G. Scarciotti would like to thank the members of the ``Sylvester Kings'' group, Daniele Astolfi and John Simpson-Porco, for daily illuminating discussions over several years regarding control theoretic implications of the Sylvester equation. 

\section*{References}
\bibliographystyle{IEEEtran}
\bibliography{ref}

\newcommand{\ACC}[1]{Proceedings of the #1 American Control Conference}\newcommand{\ECC}[1]{Proceedings of the #1 European Control Conference}\newcommand{\CDC}[1]{Proceedings of the #1 IEEE Conference on Decision and Control}
\begin{thebibliography}{10}
\providecommand{\url}[1]{#1}
\csname url@samestyle\endcsname
\providecommand{\newblock}{\relax}
\providecommand{\bibinfo}[2]{#2}
\providecommand{\BIBentrySTDinterwordspacing}{\spaceskip=0pt\relax}
\providecommand{\BIBentryALTinterwordstretchfactor}{4}
\providecommand{\BIBentryALTinterwordspacing}{\spaceskip=\fontdimen2\font plus
\BIBentryALTinterwordstretchfactor\fontdimen3\font minus \fontdimen4\font\relax}
\providecommand{\BIBforeignlanguage}[2]{{%
\expandafter\ifx\csname l@#1\endcsname\relax
\typeout{** WARNING: IEEEtran.bst: No hyphenation pattern has been}%
\typeout{** loaded for the language `#1'. Using the pattern for}%
\typeout{** the default language instead.}%
\else
\language=\csname l@#1\endcsname
\fi
#2}}
\providecommand{\BIBdecl}{\relax}
\BIBdecl

\bibitem{mao2025partOne}
\BIBentryALTinterwordspacing
J.~Mao, E.~Williams, T.~Mylvaganam, and G.~Scarciotti, ``One equation to rule them all---{P}art {I}: Direct data-driven cascade stabilisation,'' \emph{submitted for publication}, 2025. [Online]. Available: \url{https://drive.google.com/drive/folders/13f7gkAzA09T6j53ir1DjFj1WLLDW2mk2?usp=sharing}
\BIBentrySTDinterwordspacing

\bibitem{farhangi2009path}
H.~Farhangi, ``The path of the smart grid,'' \emph{IEEE Power and Energy Magazine}, vol.~8, no.~1, pp. 18--28, 2009.

\bibitem{rubenstein2014programmable}
M.~Rubenstein, A.~Cornejo, and R.~Nagpal, ``Programmable self-assembly in a thousand-robot swarm,'' \emph{Science}, vol. 345, no. 6198, pp. 795--799, 2014.

\bibitem{koscher2010experimental}
K.~Koscher, A.~Czeskis, F.~Roesner, S.~Patel, T.~Kohno, S.~Checkoway, D.~McCoy, B.~Kantor, D.~Anderson, H.~Shacham \emph{et~al.}, ``Experimental security analysis of a modern automobile,'' in \emph{2010 IEEE Symposium on Security and Privacy}.\hskip 1em plus 0.5em minus 0.4em\relax IEEE, 2010, pp. 447--462.

\bibitem{zanella2014internet}
A.~Zanella, N.~Bui, A.~Castellani, L.~Vangelista, and M.~Zorzi, ``Internet of things for smart cities,'' \emph{IEEE Internet of Things Journal}, vol.~1, no.~1, pp. 22--32, 2014.

\bibitem{adamjan1971analytic}
V.~M. Adamjan, D.~Z. Arov, and M.~Kren, ``Analytic properties of {S}chmidt pairs for a {H}ankel operator and the generalized {S}chur-{T}akagi problem,'' \emph{Mathematics of the USSR-Sbornik}, vol.~15, no.~1, p.~31, 1971.

\bibitem{moore1981principal}
B.~Moore, ``Principal component analysis in linear systems: Controllability, observability, and model reduction,'' \emph{IEEE Transactions on Automatic Control}, vol.~26, no.~1, pp. 17--32, 1981.

\bibitem{kimura1986positive}
H.~Kimura, ``Positive partial realization of covariance sequences,'' \emph{Modeling, Identification and Robust Control}, pp. 499--513, 1986.

\bibitem{byrnes1995complete}
C.~I. Byrnes, A.~Lindquist, S.~V. Gusev, and A.~S. Matveev, ``A complete parameterization of all positive rational extensions of a covariance sequence,'' \emph{IEEE Transactions on Automatic Control}, vol.~40, no.~11, pp. 1841--1857, 1995.

\bibitem{antoulas1990solution}
A.~Antoulas, J.~Ball, J.~Kang, and J.~Willems, ``On the solution of the minimal rational interpolation problem,'' \emph{Linear Algebra and its Applications}, vol. 137, pp. 511--573, 1990.

\bibitem{astolfi2010model}
A.~Astolfi, ``Model reduction by moment matching for linear and nonlinear systems,'' \emph{IEEE Transactions on Automatic Control}, vol.~55, no.~10, pp. 2321--2336, 2010.

\bibitem{antoulas2005approximation}
A.~C. Antoulas, \emph{Approximation of Large-Scale Dynamical Systems}.\hskip 1em plus 0.5em minus 0.4em\relax SIAM, 2005.

\bibitem{scarciotti2024interconnection}
G.~Scarciotti and A.~Astolfi, ``Interconnection-based model order reduction - {A} survey,'' \emph{European Journal of Control}, vol.~75, p. 100929, 2024.

\bibitem{kunisch1999control}
K.~Kunisch and S.~Volkwein, ``Control of the {B}urgers equation by a reduced-order approach using proper orthogonal decomposition,'' \emph{Journal of Optimization Theory and Applications}, vol. 102, pp. 345--371, 1999.

\bibitem{schmid2010dynamic}
P.~J. Schmid, ``Dynamic mode decomposition of numerical and experimental data,'' \emph{Journal of Fluid Mechanics}, vol. 656, pp. 5--28, 2010.

\bibitem{willcox2002balanced}
K.~Willcox and J.~Peraire, ``Balanced model reduction via the proper orthogonal decomposition,'' \emph{AIAA Journal}, vol.~40, no.~11, pp. 2323--2330, 2002.

\bibitem{mayo2007framework}
A.~Mayo and A.~Antoulas, ``A framework for the solution of the generalized realization problem,'' \emph{Linear Algebra and its Applications}, vol. 425, no. 2-3, pp. 634--662, 2007.

\bibitem{scarciotti2017data}
G.~Scarciotti and A.~Astolfi, ``Data-driven model reduction by moment matching for linear and nonlinear systems,'' \emph{Automatica}, vol.~79, pp. 340--351, 2017.

\bibitem{carnevale2021data}
D.~Carnevale, S.~Galeani, and M.~Sassano, ``Data driven moment computation, with application to output tracking with external models,'' in \emph{2021 29th Mediterranean Conference on Control and Automation (MED)}.\hskip 1em plus 0.5em minus 0.4em\relax IEEE, 2021, pp. 610--615.

\bibitem{mao2024data}
J.~Mao and G.~Scarciotti, ``Data-driven model reduction by two-sided moment matching,'' \emph{Automatica}, vol. 166, p. 111702, 2024.

\bibitem{burohman2024data}
A.~M. Burohman, B.~Besselink, J.~M. Scherpen, and M.~K. Camlibel, ``From data to reduced-order models via moment matching,'' \emph{Systems \& Control Letters}, vol. 194, p. 105965, 2024.

\bibitem{francis1975internal}
B.~A. Francis and W.~M. Wonham, ``The internal model principle for linear multivariable regulators,'' \emph{Applied Mathematics and Optimization}, vol.~2, no.~2, pp. 170--194, 1975.

\bibitem{francis1976internal}
------, ``The internal model principle of control theory,'' \emph{Automatica}, vol.~12, no.~5, pp. 457--465, 1976.

\bibitem{davison1976robust}
E.~Davison, ``The robust control of a servomechanism problem for linear time-invariant multivariable systems,'' \emph{IEEE Transactions on Automatic Control}, vol.~21, no.~1, pp. 25--34, 1976.

\bibitem{francis1977linear}
B.~A. Francis, ``The linear multivariable regulator problem,'' \emph{SIAM Journal on Control and Optimization}, vol.~15, no.~3, pp. 486--505, 1977.

\bibitem{isidori1990Output}
A.~Isidori and C.~Byrnes, ``Output regulation of nonlinear systems,'' \emph{IEEE Transactions on Automatic Control}, vol.~35, no.~2, pp. 131--140, 1990.

\bibitem{huang2004nonlinear}
J.~Huang, \emph{Nonlinear Output Regulation: Theory and Applications}.\hskip 1em plus 0.5em minus 0.4em\relax SIAM, 2004.

\bibitem{gao2016adaptive}
W.~Gao and Z.-P. Jiang, ``Adaptive dynamic programming and adaptive optimal output regulation of linear systems,'' \emph{IEEE Transactions on Automatic Control}, vol.~61, no.~12, pp. 4164--4169, 2016.

\bibitem{lin2024data}
L.~Lin and J.~Huang, ``Data-driven output regulation via internal model principle,'' \emph{arXiv preprint arXiv:2409.09571}, 2024.

\bibitem{chen2023data}
L.~Chen and J.~W. Simpson-Porco, ``Data-driven output regulation using single-gain tuning regulators,'' in \emph{2023 62nd IEEE Conference on Decision and Control (CDC)}, 2023, pp. 2903--2909.

\bibitem{trentelman2021informativity}
H.~L. Trentelman, H.~J. Van~Waarde, and M.~K. Camlibel, ``An informativity approach to the data-driven algebraic regulator problem,'' \emph{IEEE Transactions on Automatic Control}, vol.~67, no.~11, pp. 6227--6233, 2021.

\bibitem{zhai2024data}
G.~Zhai, E.~Tian, Y.~Luo, and D.~Liang, ``Data-driven optimal output regulation for unknown linear discrete-time systems based on parameterization approach,'' \emph{Applied Mathematics and Computation}, vol. 461, p. 128300, 2024.

\bibitem{hu2025data}
Z.~Hu, C.~De~Persis, J.~W. Simpson-Porco, and P.~Tesi, ``Data-driven harmonic output regulation of a class of nonlinear systems,'' \emph{Systems \& Control Letters}, vol. 200, p. 106079, 2025.

\bibitem{liu2025data}
W.~Liu, Y.~Li, J.~Sun, G.~Wang, K.~You, L.~Xie, and J.~Chen, ``Data-driven internal model control for output regulation,'' \emph{arXiv preprint arXiv:2505.09255}, 2025.

\bibitem{simpson-porco-arxiv}
J.~W. Simpson-Porco, D.~Astolfi, and G.~Scarciotti, ``Steady-state cascade operators and their role in linear control, estimation, and model reduction problems,'' \emph{arXiv preprint arXiv:2408.07568}, 2024.

\bibitem{gallivan2004sylvester}
K.~Gallivan, A.~Vandendorpe, and P.~Van~Dooren, ``Sylvester equations and projection-based model reduction,'' \emph{Journal of Computational and Applied Mathematics}, vol. 162, no.~1, pp. 213--229, 2004.

\bibitem{gallivan2006model}
------, ``Model reduction and the solution of {S}ylvester equations,'' \emph{in 17th International Symposium on Mathematical Theory of Networks and Systems}, vol.~50, 2006.

\bibitem{goodwin2014adaptive}
G.~C. Goodwin and K.~S. Sin, \emph{Adaptive Filtering Prediction and Control}.\hskip 1em plus 0.5em minus 0.4em\relax Courier Corporation, 2014.

\bibitem{de2019formulas}
C.~De~Persis and P.~Tesi, ``Formulas for data-driven control: Stabilization, optimality, and robustness,'' \emph{IEEE Transactions on Automatic Control}, vol.~65, no.~3, pp. 909--924, 2019.

\bibitem{horn2012matrix}
R.~A. Horn and C.~R. Johnson, \emph{Matrix Analysis}.\hskip 1em plus 0.5em minus 0.4em\relax Cambridge University Press, 2012.

\bibitem{mao2024data-Simu}
J.~Mao and G.~Scarciotti, ``Data-driven model order reduction simultaneously matching linear and nonlinear moments,'' \emph{IEEE Control Systems Letters}, vol.~8, pp. 2331--2336, 2024.

\bibitem{zhang2025data}
H.~Zhang, J.~Mao, M.~F. Shakib, and G.~Scarciotti, ``Data-driven model reduction by moment matching for linear and nonlinear parametric systems,'' \emph{arXiv preprint arXiv:2506.10866}, 2025.

\bibitem{morChaV92}
\BIBentryALTinterwordspacing
P.~V.~D. Y.~Chahlaoui, ``A collection of benchmark examples for model reduction of linear time invariant dynamical systems,'' SLICOT Working Note, Tech. Rep. 2002-2, 2002. [Online]. Available: \url{http://eprints.maths.manchester.ac.uk/1040/1/ChahlaouiV02a.pdf}
\BIBentrySTDinterwordspacing

\bibitem{cvx}
{CVX Research, Inc.}, ``{CVX}: {MATLAB} software for disciplined convex programming,'' Aug. 2012, version 2.0. [Online]. Available: \url{https://cvxr.com/cvx}.

\bibitem{aps2019mosek}
M.~ApS, ``{MOSEK} optimization toolbox for {MATLAB},'' \emph{User’s Guide and Reference Manual, Version}, vol.~4, no.~1, p. 116, 2019.

\bibitem{rosenbrock1974cacs}
H.~H. Rosenbrock, \emph{Computer-Aided Control System Design}.\hskip 1em plus 0.5em minus 0.4em\relax London and New York: Academic Press, 1974.

\bibitem{walsh2001scheduling}
G.~C. Walsh and H.~Ye, ``Scheduling of networked control systems,'' \emph{IEEE Control Systems Magazine}, vol.~21, no.~1, pp. 57--65, 2001.

\bibitem{sturm1999using}
J.~F. Sturm, ``Using {S}e{D}u{M}i 1.02, a {MATLAB} toolbox for optimization over symmetric cones,'' \emph{Optimization methods and Software}, vol.~11, no. 1-4, pp. 625--653, 1999.

\bibitem{luenberger1964observing}
D.~G. Luenberger, ``Observing the state of a linear system,'' \emph{IEEE Transactions on Military Electronics}, vol.~8, no.~2, pp. 74--80, 1964.

\bibitem{shafai1988algorithm}
B.~Shafai and S.~Bhattacharyya, ``An algorithm for pole assignment in high order multivariable systems,'' \emph{IEEE Transactions on Automatic Control}, vol.~33, no.~9, pp. 870--876, 1988.

\bibitem{syrmos2002disturbance}
V.~L. Syrmos, ``Disturbance decoupling using constrained sylvester equations,'' \emph{IEEE transactions on Automatic Control}, vol.~39, no.~4, pp. 797--803, 2002.

\bibitem{girard2009hierarchical}
A.~Girard and G.~J. Pappas, ``Hierarchical control system design using approximate simulation,'' \emph{Automatica}, vol.~45, no.~2, pp. 566--571, 2009.

\end{thebibliography}

\appendix
\subsection{The Extended Configuration: Both Subsystems are with Unknown Dynamics} 
\label{sec:extendedConfig}
In Lemma~\ref{lemma:computePI}, we provided a data-driven solution of the Sylvester equation for the scenario in which $\mathbf{\Sigma_1}$ is known and $\mathbf{\Sigma_2}$ is unknown. This scenario applies to all the results presented in Parts I and II. In this section we extend that lemma to the case in which also $\mathbf{\Sigma_1}$ is unknown. This extended case is relevant for a result presented in Appendix~\ref{sec:regulationFullInfo}.

\begin{theorem} \label{corollary:extendedConfig}
Consider the cascade $\mathbf{\Sigma_1} \rightarrow \mathbf{\Sigma_2}$, where both $\mathbf{\Sigma_1}$ and $\mathbf{\Sigma_2}$ are unknown, and the associated Sylvester equation \eqref{eq:SylEqGeneric}. Suppose that the rank condition \eqref{eq:rankXU} holds 
and that $\lambda(\mathbf{A_1}) \cap \lambda(\mathbf{A_2}) = \emptyset$. Let $u_1 \equiv 0$ and suppose that $    
    \rank \left(
        X_{1, -}  
    \right)
     = n_1,
$
Then any matrix $G_\Theta \in \RR^{T \times n_1}$ that satisfies
\begin{subequations} \label{eq:dataRepPiGeneralised} 
  \begin{empheq}[left=\empheqlbrace]{align}
    X_{2, +}  G_\Theta  X_{1, -} &= X_{2, -} G_\Theta  X_{1, +}   \label{eq:dataRepPiGeneralised1} \\
    U_{2, -}  G_\Theta  X_{1, -} &=  Y_{1, -}    \label{eq:dataRepPiGeneralised2}
  \end{empheq}
\end{subequations}
is such that
\begin{equation} \label{eq:parameterizedPi2} 
    \mathbf{\Theta} := X_{2, -} G_\Theta
\end{equation}
is the solution of \eqref{eq:SylEqGeneric}.
Conversely, the solution of \eqref{eq:SylEqGeneric} can be written as in \eqref{eq:parameterizedPi2}, with $G_\Theta$ solution of \eqref{eq:dataRepPiGeneralised}.  
\end{theorem}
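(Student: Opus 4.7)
The plan is to show that the extended LMEs \eqref{eq:dataRepPiGeneralised} are equivalent to the LMEs \eqref{eq:dataRepPi} of Lemma~\ref{lemma:computePI} after right-multiplication by $X_{1,-}$, and then to invoke Lemma~\ref{lemma:computePI} directly. The key observations are: (i) because $u_1 \equiv 0$, the data collected from $\mathbf{\Sigma_1}$ satisfies $X_{1,+} = \mathbf{A_1} X_{1,-}$ and $Y_{1,-} = \mathbf{C_1} X_{1,-}$, so $\mathbf{A_1}$ and $\mathbf{C_1}$ are encoded in the data up to right-multiplication by $X_{1,-}$; and (ii) the full row rank of $X_{1,-}$ guarantees the existence of a right inverse $X_{1,-}^\dagger \in \RR^{T \times n_1}$ with $X_{1,-} X_{1,-}^\dagger = I_{n_1}$.

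For the forward direction, I would start from a matrix $G_\Theta$ satisfying \eqref{eq:dataRepPiGeneralised}. Using $X_{1,+} = \mathbf{A_1} X_{1,-}$ in \eqref{eq:dataRepPiGeneralised1} gives $(X_{2,+} G_\Theta - X_{2,-} G_\Theta \mathbf{A_1}) X_{1,-} = \mathbf{0}$, and using $Y_{1,-} = \mathbf{C_1} X_{1,-}$ in \eqref{eq:dataRepPiGeneralised2} gives $(U_{2,-} G_\Theta - \mathbf{C_1}) X_{1,-} = \mathbf{0}$. Right-multiplying both identities by $X_{1,-}^\dagger$ yields exactly the equations \eqref{eq:dataRepPi1} and \eqref{eq:dataRepPi2} of Lemma~\ref{lemma:computePI}. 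Since the rank condition \eqref{eq:rankXU} and the spectral disjointness hypothesis are assumed, Lemma~\ref{lemma:computePI} applies and certifies that $\mathbf{\Theta} := X_{2,-} G_\Theta$ is the (unique) solution of \eqref{eq:SylEqGeneric}.

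For the converse, given the solution $\mathbf{\Theta}$ of \eqref{eq:SylEqGeneric}, Lemma~\ref{lemma:computePI} provides a matrix $G_\Theta$ satisfying \eqref{eq:dataRepPi} with $\mathbf{\Theta} = X_{2,-} G_\Theta$. Right-multiplying \eqref{eq:dataRepPi1} by $X_{1,-}$ and using $\mathbf{A_1} X_{1,-} = X_{1,+}$ recovers \eqref{eq:dataRepPiGeneralised1}; right-multiplying \eqref{eq:dataRepPi2} by $X_{1,-}$ and using $\mathbf{C_1} X_{1,-} = Y_{1,-}$ recovers \eqref{eq:dataRepPiGeneralised2}. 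There is no real obstacle in this argument: the only subtlety is the systematic use of the right inverse of $X_{1,-}$, which is legitimate thanks to the full row rank assumption, and the recognition that the autonomy of $\mathbf{\Sigma_1}$ is precisely what allows one to substitute $\mathbf{A_1}$ and $\mathbf{C_1}$ with purely data-based quantities in the LMEs.
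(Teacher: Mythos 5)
Your proposal is correct and follows essentially the same route as the paper: both establish the equivalence of \eqref{eq:dataRepPiGeneralised} and \eqref{eq:dataRepPi} by post-multiplying with $X_{1,-}$ in one direction and with the right inverse $X_{1,-}^\dagger$ in the other, exploiting $X_{1,+} = \mathbf{A_1}X_{1,-}$ and $Y_{1,-} = \mathbf{C_1}X_{1,-}$ under $u_1 \equiv 0$, and then invoke Lemma~\ref{lemma:computePI}. No gaps.
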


\begin{proof}
We first prove that 
\eqref{eq:dataRepPi} $\Rightarrow$ \eqref{eq:dataRepPiGeneralised}. To this end, we post-multiply \eqref{eq:dataRepPi} with $X_{1, -}$,  yielding
\begin{align*}
    X_{2, +}  G_\Theta X_{1, -} &= X_{2, -} G_\Theta \mathbf{A_1} X_{1, -},   \\
    U_{2, -}  G_\Theta X_{1, -} &= \mathbf{C_1} X_{1, -}. 
\end{align*}
These two equations simplify to \eqref{eq:dataRepPiGeneralised} by recognising that $\mathbf{A_1} X_{1, -} = X_{1, +}$ under $u_1 \equiv 0$ and $\mathbf{C_1} X_{1, -} = Y_{1, -}$. To prove that \eqref{eq:dataRepPiGeneralised} $\Rightarrow$ \eqref{eq:dataRepPi}, recall that since
$    
    \rank \left(
        X_{1, -}  
    \right)
     = n_1
$, there exists a right inverse $X_{1, -}^\dagger$. We post-multiply \eqref{eq:dataRepPiGeneralised} by $X_{1, -}^\dagger$, yielding the claim by noticing that 
$X_{1, -} X_{1, -}^\dagger = I$, $X_{1, +} X_{1, -}^\dagger = \mathbf{A_1}$, and $Y_{1, -} X_{1, -}^\dagger = \mathbf{C_1}$. Finally, based on the equivalence between \eqref{eq:dataRepPiGeneralised} and \eqref{eq:dataRepPi}, the statement follows directly from Lemma \ref{lemma:computePI}. 
\end{proof}

\subsection{The Data-driven Static State-feedback Problem} \label{sec:regulationFullInfo}
In this section we solve Problem 1 in Section \ref{sec:output-regulation-formulation}. Consider the regulator $u_{stat}$ given by 
\begin{equation} \label{eq:regulatorFullInfo}
    u_{stat}(k) := K_x x(k) + K_\omega \omega(k)\tm{,} 
\end{equation}
with $K_x$ and $K_\omega$ constant matrices of appropriate dimensions. 
Note that the cascade $\Sigma_{exo} \rightarrow \Sigma_{plant}$ in closed-loop with $u_{stat}$ is described by
\begin{equation} \label{eq:sysRegulationClosedLoop}
\begin{split} 
    \omega(k+1) &= S \omega(k),\\
    x(k+1) &= (A + B K_x)x(k) + (B K_\omega + E)\omega(k), \\
    e(k) &= (C + D K_x)x(k)+ (D K_\omega + F) \omega(k). 
\end{split}
\end{equation}  
Therefore, objective (O1) can be translated into the task of determining a gain $K_x$ such that matrix $A + BK_x$ is rendered Schur. This is exactly the problem we solved in Lemma \ref{lemma:stabilisationWithWNew}, hence we do not repeat its derivation here. The remaining task is to achieve objective~(O2) by suitably designing the gain $K_\omega$ from data. To this end, by pivoting the focus back to the closed-loop system \eqref{eq:sysRegulationClosedLoop}, we see that there exists an invariant subspace $\mathcal{M}_r := \{(x, \omega) \in \RR^{n + \nu}: x = \Pi_r \omega\}$, with $\Pi_r$ the solution of the Sylvester equation
\begin{equation} \label{eq:regulatorEq1Orginal}
    (A + B K_x) \Pi_r - \Pi_r S=  - (B K_\omega + E).
\end{equation}
In particular, $\mathcal{M}_r$ is attractive, \ie, $\lim_{k \rightarrow \infty} x(k) - \Pi_r \omega(k) = 0$, since $(A + B K_x)$ is rendered Schur in the process of achieving objective (O1). Then, objective~(O2) can be translated into two equations, namely the Sylvester equation $\eqref{eq:regulatorEq1Orginal}$, which gives the (attractive) invariant subspace $\mathcal{M}_r$, and an additional linear matrix equation
\begin{equation} \label{eq:regulatorEq2Orginal}
    \mathbf{0} = (C+ D K_x) \Pi_r + D K_\omega + F,
\end{equation}
which ensures that the tracking error $e$ is zero on $\mathcal{M}_r$. By defining a change of variable 
\begin{equation} \label{eq:GammaChange}
    \Gamma_r := K_x \Pi_r + K_\omega,
\end{equation}
\eqref{eq:regulatorEq1Orginal}-\eqref{eq:regulatorEq2Orginal} lead to the so-called \textit{regulator equations}
\begin{subequations}  \label{eq:regulationEqs}
\begin{align} 
\Pi_rS= A \Pi_r + B \Gamma_r + E, \label{eq:regulationEqs1} \\
\mathbf{0} = C\Pi_r + D\Gamma_r + F, \label{eq:regulationEqs2}
\end{align}
\end{subequations}
in the unknowns $(\Pi_r, \Gamma_r)$. Note that the solvability of \eqref{eq:regulationEqs} for any matrix $E$ and $F$ is sufficiently and necessarily guaranteed by the non-resonance condition in Assumption \ref{ass:nonResonance} \cite{huang2004nonlinear}. In what follows, we demonstrate how these regulator equations can be solved without knowing $\Sigma_{plant}$. 
We begin by introducing the rank condition on the data matrices.

\begin{assumption} \label{ass:rankXUW}
The available data matrices are such that the following rank condition is satisfied
\begin{equation} \label{eq:rankXUW}
    \rank \left( \begin{bmatrix}
         X_{-}  \\ U_{-} \\ \Omega_{-}
    \end{bmatrix} \right)
     = n + m + \nu.
\end{equation}
\end{assumption}

\begin{theorem}
\label{thm:staticinformativity}
Consider the cascade system $\Sigma_{exo} \rightarrow \Sigma_{plant}$ and the regulator equations \eqref{eq:regulationEqs}. Suppose that Assumptions \ref{ass:ABstabilisability}, \ref{ass:nonResonance} and \ref{ass:rankXUW} hold. Then any matrix $G_r \in \RR^{T \times \nu}$ satisfying
\begin{subequations} \label{eq:dataRepConditionR} 
  \begin{empheq}[left=\empheqlbrace]{align}
X_{+} G_r &= X_{-} G_r S  \label{eq:dataRepConditionR1}  \\
E_{-} G_r &= \textbf{0} \label{eq:dataRepConditionR2} \\
\Omega_{-} G_r &= I  \label{eq:dataRepConditionR3}
  \end{empheq}
\end{subequations}
is such that 
\begin{equation} \label{eq:regulation-equation-parameterisation}
    \Pir := X_{-}G_r, \quad \Gar := U_{-} G_r
\end{equation}
solve \eqref{eq:regulationEqs}. 
Conversely, any solution of \eqref{eq:regulationEqs} can be written as in \eqref{eq:regulation-equation-parameterisation}
with $G_r$ solution of \eqref{eq:dataRepConditionR}. 

\end{theorem}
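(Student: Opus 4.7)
The plan is to prove both directions by exploiting the two fundamental data identities implied by the plant dynamics \eqref{eq:sysXRegulation}--\eqref{eq:sysERegulation}, namely
\begin{equation*}
X_{+} = A X_{-} + B U_{-} + E \Omega_{-}, \qquad E_{-} = C X_{-} + D U_{-} + F \Omega_{-},
\end{equation*}
which hold for any collected trajectory of the cascade $\Sigma_{exo} \rightarrow \Sigma_{plant}$. These are the counterparts, in this richer setting, of the data identities used in Lemma~\ref{lemma:computePI}, but now the exogenous term $E\Omega_{-}$ (resp. $F\Omega_{-}$) appears explicitly. The LME \eqref{eq:dataRepConditionR3}, $\Omega_{-} G_r = I$, is precisely what is needed to recover the ``$E$'' and ``$F$'' terms present in the regulator equations \eqref{eq:regulationEqs} after post-multiplication by $G_r$.

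For the forward direction (``$G_r$ satisfying \eqref{eq:dataRepConditionR} gives a solution of \eqref{eq:regulationEqs}''), I would post-multiply the first data identity by $G_r$, using \eqref{eq:dataRepConditionR1} on the left-hand side and \eqref{eq:dataRepConditionR3} on the right-hand side, to obtain $X_{-} G_r S = A X_{-} G_r + B U_{-} G_r + E$, which under the parameterisation \eqref{eq:regulation-equation-parameterisation} is exactly \eqref{eq:regulationEqs1}. Post-multiplying the second data identity by $G_r$ and using \eqref{eq:dataRepConditionR2} together with \eqref{eq:dataRepConditionR3} yields $\mathbf{0} = C\Pi_r + D\Gamma_r + F$, which is \eqref{eq:regulationEqs2}.

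For the converse direction, given any solution $(\Pi_r, \Gamma_r)$ of \eqref{eq:regulationEqs} (which exists under Assumption~\ref{ass:nonResonance}), I would invoke Assumption~\ref{ass:rankXUW}: the full row rank of $\col(X_{-}, U_{-}, \Omega_{-})$ guarantees the existence of a matrix $G_r \in \RR^{T \times \nu}$ such that
\begin{equation*}
\begin{bmatrix} X_{-} \\ U_{-} \\ \Omega_{-} \end{bmatrix} G_r = \begin{bmatrix} \Pi_r \\ \Gamma_r \\ I \end{bmatrix},
\end{equation*}
which immediately gives the parameterisation \eqref{eq:regulation-equation-parameterisation} and the LME \eqref{eq:dataRepConditionR3}. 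Substituting back into the two data identities and using the regulator equations \eqref{eq:regulationEqs1}--\eqref{eq:regulationEqs2} then yields \eqref{eq:dataRepConditionR1} and \eqref{eq:dataRepConditionR2}, closing the equivalence.

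The argument is essentially routine once the correct data identities and the role of $\Omega_{-} G_r = I$ are identified; there is no genuine technical obstacle. The only point warranting care is ensuring that the rank condition in Assumption~\ref{ass:rankXUW} is indeed what enables the converse, since the three LMEs together impose $n+m+\nu$ linear constraints per column of $G_r$, exactly matching the row dimension in \eqref{eq:rankXUW}. This is the same mechanism exploited in Theorem~\ref{corollary:extendedConfig} of Appendix~\ref{sec:extendedConfig}, and an alternative (slightly more abstract) route would be to derive the statement as a corollary of that extended result by viewing the exosystem's output acting through $E$ and $F$ as an additional ``input'' channel into $\Sigma_{plant}$; however, the direct derivation sketched above seems more transparent for the reader.
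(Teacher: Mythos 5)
Your proposal is correct and follows essentially the same route as the paper: the sufficiency direction uses the data identities $X_{+} = AX_{-} + BU_{-} + E\Omega_{-}$ and $E_{-} = CX_{-} + DU_{-} + F\Omega_{-}$ together with $\Omega_{-}G_r = I$ exactly as the paper does, and the necessity direction invokes the full row rank condition of Assumption~\ref{ass:rankXUW} (Rouch\'e--Capelli) to produce $G_r$ realising the parameterisation, then substitutes back. No gaps; the two arguments differ only in presentation order.
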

\begin{proof}
\textit{(Necessity)} Suppose that $\Pir$ and $\Gar$ are a solution of the regulator equations \eqref{eq:regulationEqs}, which can be rewritten in matrix form as
\begin{equation} \label{eq:RegEqMat}
    \begin{bmatrix}
        A & B & E \\
        C & D & F 
    \end{bmatrix}
    \begin{bmatrix}
        \Pir \\ \Gar \\ I
    \end{bmatrix}
    = 
    \begin{bmatrix}
        \Pir S \\ \mathbf{0}
    \end{bmatrix}.
\end{equation}
By the rank condition \eqref{eq:rankXUW}, it follows from Rouché–Capelli theorem that there exists a matrix \( G_r \) such that
\begin{equation} \label{eq:parametrizationPiGaReg}
     \begin{bmatrix}
        \Pir \\ \Gar \\ I
    \end{bmatrix}
    = 
    \begin{bmatrix}
         X_{-}  \\ U_{-} \\ \Omega_{-}
    \end{bmatrix}
    G_r.
\end{equation}
Substituting \eqref{eq:parametrizationPiGaReg} into the matrix form \eqref{eq:RegEqMat} yields that
\begin{equation*}
\begin{split}
   \begin{bmatrix}
        A & B & E \\
        C & D & F 
    \end{bmatrix}
\begin{bmatrix}
         X_{-}  \\ U_{-} \\ \Omega_{-}
    \end{bmatrix}
    G_r
    = 
    \begin{bmatrix}
        X_{-} G_r S \\ \mathbf{0}
    \end{bmatrix},
\end{split}
\end{equation*}
which, by further using the plant equations \eqref{eq:sysXRegulation} and \eqref{eq:sysERegulation}, gives that
\begin{equation*}
\begin{split}
\begin{bmatrix}
         X_{+}  \\ E_{-}
    \end{bmatrix}
    G_r
    = 
    \begin{bmatrix}
        X_{-} G_r S \\ \mathbf{0}
    \end{bmatrix},
\end{split}
\end{equation*}
and so it verifies the conditions~\eqref{eq:dataRepConditionR1} and \eqref{eq:dataRepConditionR2}. In addition, the last block row of \eqref{eq:parametrizationPiGaReg} yields \eqref{eq:dataRepConditionR3} directly. 

\textit{(Sufficiency)} Suppose that $G_r$ is a solution of \eqref{eq:dataRepConditionR}. We first show that $X_{-} G_r$ and $U_{-} G_r$ solves \eqref{eq:regulationEqs1}.  It follows that 
\begin{align*}
    A (X_{-} G_r) & + B (U_{-} G_r) + E \\
    \overset{\eqref{eq:dataRepConditionR3}}{=}& A (X_{-} G_r) + B (U_{-} G_r) + E (\Omega_{-} G_r) \\
    = \,\,& (A X_{-} + B U_{-} + E \Omega_{-}) G_r \\
    \overset{\eqref{eq:sysXRegulation}}{=}& X_{+}  G_r
    \overset{\eqref{eq:dataRepConditionR1}}{=} (X_{-} G_r) S,
\end{align*}
verifying that $(X_{-} G_r, U_{-} G_r)$ constitutes 
a solution of \eqref{eq:regulationEqs1}. Next, we prove that 
$(X_{-} G_r, U_{-} G_r)$ also solves \eqref{eq:regulationEqs2}. To this end, we observe that 
\begin{align*}
    C (X_{-} G_r) & + D (U_{-} G_r) + F \\
    \overset{\eqref{eq:dataRepConditionR3}}{=}& C (X_{-} G_r) + D (U_{-} G_r) + F (\Omega_{-} G_r) \\
    = \,\,& (C X_{-} + D U_{-} + F \Omega_{-}) G_r \\
    \overset{\eqref{eq:sysERegulation}}{=}& E_{-}  G_r
    \overset{\eqref{eq:dataRepConditionR2}}{=} \mathbf{0},
\end{align*}
This verifies that $(X_{-} G_r, U_{-} G_r)$  also satisfies  
\eqref{eq:regulationEqs2}, which, together with the previous arguments for \eqref{eq:regulationEqs1}, proves that $(X_{-}G_r, U_{-} G_r)$ solves the regulator equations \eqref{eq:regulationEqs}. 
\end{proof}

As already noted, a solution to Problem~1 has already been presented
in \cite{trentelman2021informativity}, although from a different angle. Despite the different derivations,   the formula \eqref{eq:dataRepConditionR} and \cite[(21)]{trentelman2021informativity} are identical. We have included this result for the sake of completeness and as a demonstration of the flexibility of framework (which can solve both the static and the dynamic output regulation problem).

Next, inspired by the setting studied in Appendix~\ref{sec:extendedConfig}, where the first subsystem can be also unknown,
we show that Problem 1 can be solved even when $S$ is not known.

\begin{corollary}
Without assuming knowledge of $S$, Theorem~\ref{thm:staticinformativity} holds when replacing \eqref{eq:dataRepConditionR} with
\begin{subequations} \label{eq:dataRepConditionR-noS} 
  \begin{empheq}[left=\empheqlbrace]{align}
X_{+} G_r \Omega_{-} &= X_{-} G_r \Omega_{+}  \label{eq:dataRepConditionR-noS1}  \\
E_{-} G_r &= \textbf{0} \label{eq:dataRepConditionR-noS2} \\
\Omega_{-} G_r &= I  \label{eq:dataRepConditionR-noS3}
  \end{empheq}
\end{subequations}
\end{corollary}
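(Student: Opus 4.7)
The plan is to reduce this corollary directly to Theorem~\ref{thm:staticinformativity} by establishing the equivalence of the two feasibility systems \eqref{eq:dataRepConditionR} and \eqref{eq:dataRepConditionR-noS}, following the same template used in Theorem~\ref{corollary:extendedConfig} to eliminate the knowledge of the first subsystem. The blocks \eqref{eq:dataRepConditionR-noS2}--\eqref{eq:dataRepConditionR-noS3} are identical to \eqref{eq:dataRepConditionR2}--\eqref{eq:dataRepConditionR3}, so the only substantive step is to prove that \eqref{eq:dataRepConditionR1} and \eqref{eq:dataRepConditionR-noS1} determine the same set of matrices $G_r$ under Assumption~\ref{ass:rankXUW}.

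For the forward implication, I would exploit the exosystem identity $S \Omega_{-} = \Omega_{+}$, which follows directly from $\omega(k+1) = S\omega(k)$ and the definition of the shifted data matrices. Post-multiplying \eqref{eq:dataRepConditionR1} by $\Omega_{-}$ yields $X_{+} G_r \Omega_{-} = X_{-} G_r S \Omega_{-} = X_{-} G_r \Omega_{+}$, which is precisely \eqref{eq:dataRepConditionR-noS1}. The converse requires a right-inverse argument: Assumption~\ref{ass:rankXUW} forces the stacked matrix $\col(X_{-}, U_{-}, \Omega_{-})$ to have rank equal to the number of its rows, which in particular implies that $\Omega_{-}$ itself has full row rank $\nu$ and therefore admits a right inverse $\Omega_{-}^\dagger$. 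Post-multiplying \eqref{eq:dataRepConditionR-noS1} by $\Omega_{-}^\dagger$ and noting that $\Omega_{+} \Omega_{-}^\dagger = S \Omega_{-} \Omega_{-}^\dagger = S$ gives back \eqref{eq:dataRepConditionR1}. With the equivalence established, the parameterisation \eqref{eq:regulation-equation-parameterisation} and both directions (necessity and sufficiency) of Theorem~\ref{thm:staticinformativity} transfer verbatim.

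The main subtlety --- and arguably the key observation that makes the corollary work without knowledge of $S$ --- is that the matrix $S$ enters \eqref{eq:dataRepConditionR1} only through the right action $X_{-} G_r S$, and this action is fully determined by the exosystem data via $S = \Omega_{+} \Omega_{-}^\dagger$. Thus the reformulation \eqref{eq:dataRepConditionR-noS1} is neither losing nor adding information; it merely substitutes the data-driven expression for $S$ into the feasibility problem, so that $G_r$ can be computed purely from $(X_{-}, X_{+}, U_{-}, E_{-}, \Omega_{-}, \Omega_{+})$. No further obstacle is expected, and the proof should be completed in essentially the length of the argument above.
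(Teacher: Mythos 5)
Your proposal is correct and follows essentially the same route as the paper: both directions rest on the identity $\Omega_{+} = S\Omega_{-}$, post-multiplication by $\Omega_{-}$ for one implication, and post-multiplication by the right inverse $\Omega_{-}^{\dagger}$ (which exists because Assumption~\ref{ass:rankXUW} forces $\Omega_{-}$ to have full row rank) for the converse. Your write-up is slightly more explicit than the paper's in spelling out $\Omega_{+}\Omega_{-}^{\dagger}=S$, but the argument is the same.
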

\begin{proof}
Noting that the matrix $\Omega_{-}$ has full row rank by Assumption \ref{ass:rankXUW}, then there exists a right inverse $\Omega_{-}^\dagger$. Then, post-multiplying \eqref{eq:dataRepConditionR-noS1} by $\Omega_{-}^\dagger$ yields \eqref{eq:dataRepConditionR1}. Conversely, post-multiplying \eqref{eq:dataRepConditionR1} with $\Omega_{-}$, yields \eqref{eq:dataRepConditionR-noS1}. 
\end{proof}

The equations \eqref{eq:dataRepConditionR} or \eqref{eq:dataRepConditionR-noS} serve as a data-dependent reformulation of the regulation equations \eqref{eq:regulationEqs}, allowing for computing the solution $(\Pi_r, \Gamma_r)$ directly from data through solving the feasibility problem over LMEs. Once matrix $\Pi_r$ and $\Gamma_r$ are obtained, the selection
$
K_\omega := \Gamma_r - K_x \Pi_r,
$
with $K_x$ any stabilising gain obtained from Lemma~\ref{lemma:stabilisationWithWNew}, solves Problem~1.

\begin{IEEEbiography}[{\includegraphics[width=1in,height=1.25in,clip,keepaspectratio]{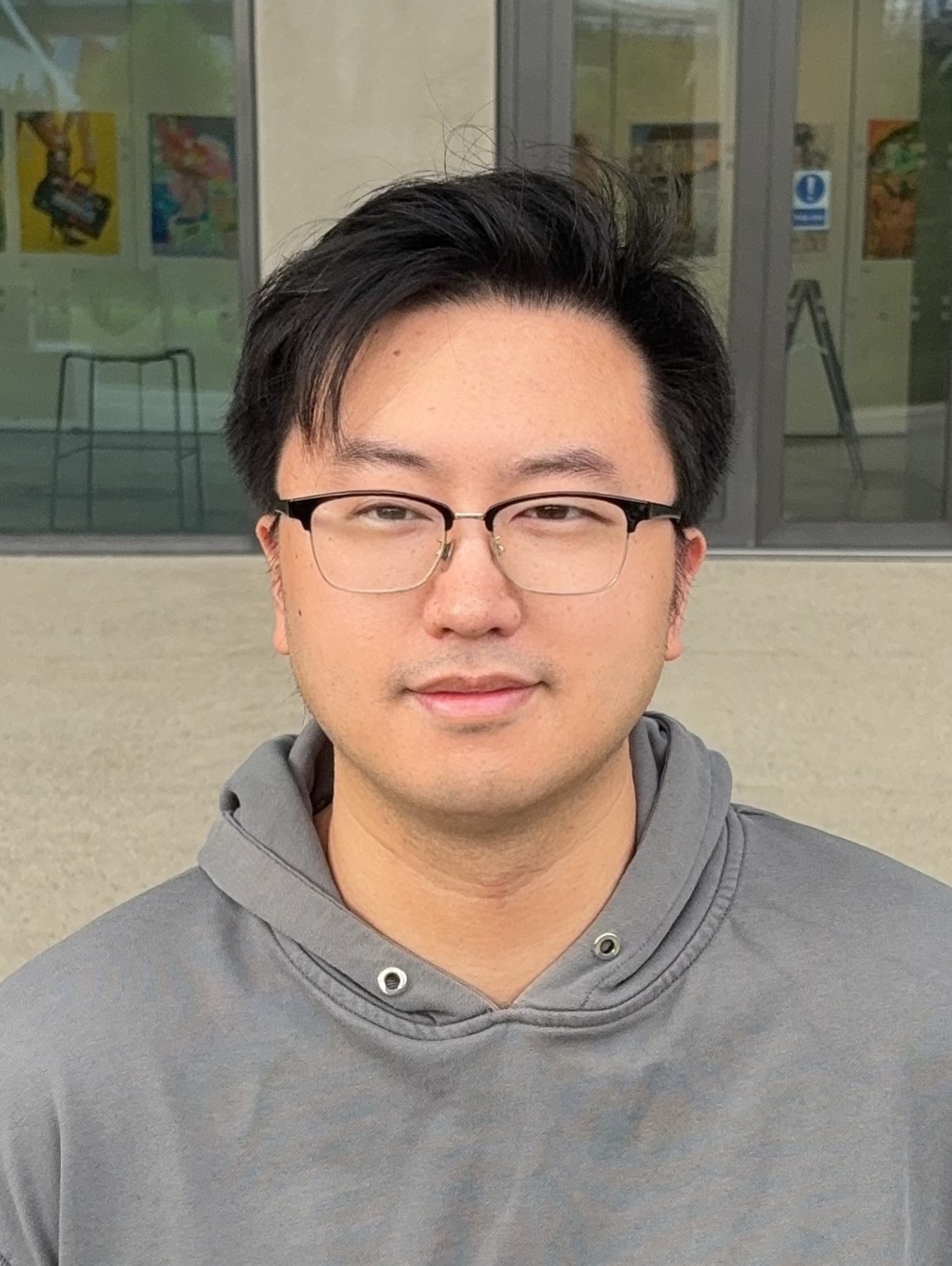}}]{Junyu Mao} (Student Member, IEEE) received his B.Eng. degree in Electrical and Electronic Engineering from the University of Liverpool, Liverpool, U.K., in 2018, and two M.Sc. degrees---in Control Systems from Imperial College London, London, U.K., in 2019, and in Data Science and Machine Learning from University College London, London, U.K., in 2020. He is currently pursuing the Ph.D. degree in the Control and Power Group at Imperial College London. In December 2024, he was a visiting Ph.D. student at the French National Centre for Scientific Research (CNRS). His research interests include the theoretical foundations of model order reduction and data-driven control for large‑scale dynamical systems.
\end{IEEEbiography}

\begin{IEEEbiography}[{\includegraphics[width=1in,height=1.25in,clip,keepaspectratio]{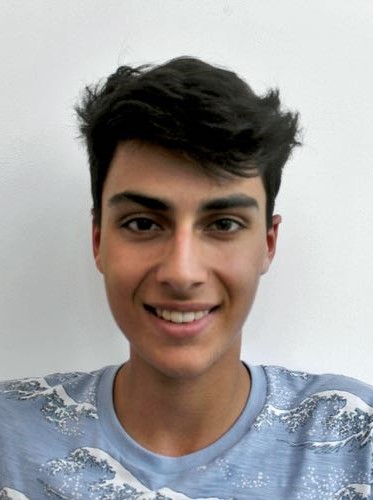}}]{Emyr Williams} is a Ph.D Candidate at the MAC-X Lab, Imperial College London. He received his M.Eng. degree in Aeronautical Engineering from Imperial College London in 2022. Following two years in industry as a software engineer, he joined the Control and Power research group at Imperial College London in 2024 to pursue his Ph.D. His research specialises on the application of data-driven and nonlinear control methodologies in the offshore renewable sector.
\end{IEEEbiography}

\begin{IEEEbiography}[{\includegraphics[width=1in,height=1.25in,clip,keepaspectratio]{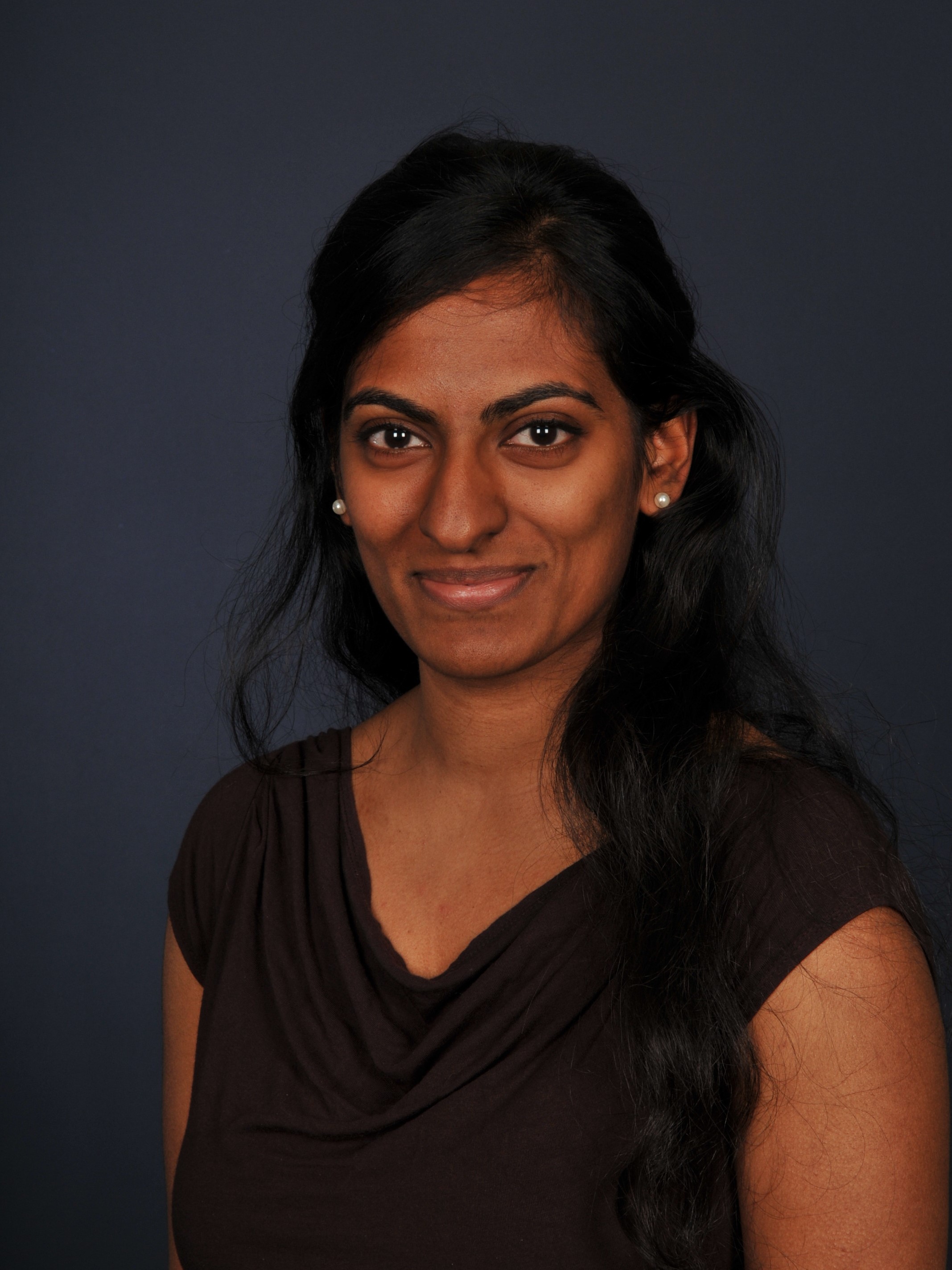}}]{Thulasi Mylvaganam} was born in Bergen, Norway, in 1988. She received the M.Eng. degree in Electrical and Electronic Engineering and the Ph.D. degree in Nonlinear Control and Differential Games from Imperial College London, London, U.K., in 2010 and 2014, respectively. From 2014 to 2016, she was a Research Associate with the Department of Electrical and Electronic Engineering, Imperial College London. From 2016 to 2017, she was a Research Fellow with the Department of Aeronautics, Imperial College London, UK, where she is currently Associate Professor. Her research interests include nonlinear control, optimal control, game theory, distributed control and data-driven control. She is Associate Editor of the IEEE Control Systems Letters, of the European Journal of Control, of the IEEE CSS Conference Editorial Board and of the EUCA Conference Editorial Board. She is also Vice Chair of Education for the IFAC Technical Committee 2.4 (Optimal Control) and Member of the UKACC Executive Committee.
\end{IEEEbiography}

\begin{IEEEbiography}[{\includegraphics[width=1in,height=1.25in,clip,keepaspectratio]{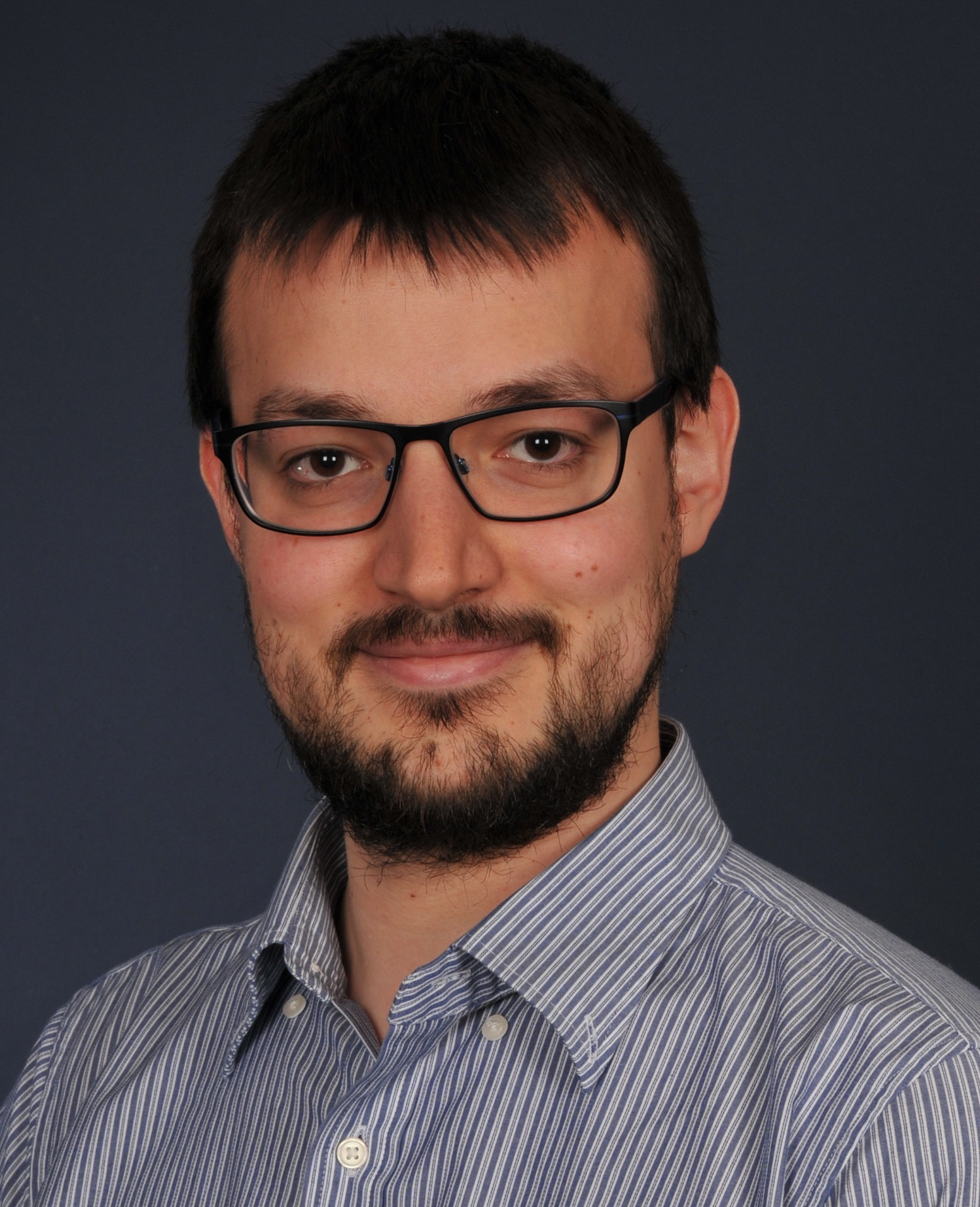}}]{Giordano Scarciotti} (Senior Member, IEEE) received his B.Sc. and M.Sc. degrees in Automation Engineering from the University of Rome ``Tor Vergata'', Italy, in 2010 and 2012, respectively. In 2012 he joined the Control and Power Group, Imperial College London, UK, where he obtained a Ph.D. degree in 2016. 
He also received an M.Sc. in Applied Mathematics from Imperial in 2020. He is currently an Associate Professor at Imperial. 
He was a visiting scholar at New York University in 2015, at University of California Santa Barbara in 2016, and a Visiting Fellow of Shanghai University in 2021-2022. He is the recipient of an Imperial College Junior Research Fellowship (2016), of the IET Control \& Automation PhD Award (2016), the Eryl Cadwaladr Davies Prize (2017), an ItalyMadeMe award (2017) and the IEEE Transactions on Control Systems Technology Outstanding Paper Award (2023). He is a member of the EUCA Conference Editorial Board, of the IFAC and IEEE CSS Technical Committees on Nonlinear Control Systems and has served in the International Programme Committees of multiple conferences. He is Associate Editor of Automatica. He was the National Organising Committee Chair for the EUCA European Control Conference (ECC) 2022, and of the 7th IFAC Conference on Analysis and Control of Nonlinear Dynamics and Chaos 2024, and the Invited Session Chair and Editor for the IFAC Symposium on Nonlinear Control Systems 2022 and 2025, respectively. He is the General Co-Chair of ECC 2029. 
\end{IEEEbiography}

\end{document}